\numberwithin{equation}{section}
\definecolor{MyBlue}{cmyk}{1,0.13,0,0.63}
\definecolor{MyGreen}{cmyk}{0.91,0,0.88,0.52}
\newcommand{\mylinkcolor}{MyBlue}
\newcommand{\mycitecolor}{MyGreen}
\def\@endtheorem{\endtrivlist}
\theoremstyle{plain}
\newtheorem{thm}{Theorem}[section]
\newtheorem{lem}[thm]{Lemma}
\newtheorem{prop}[thm]{Proposition}
\theoremstyle{definition}
\newtheorem{defn}[thm]{Definition}
\newtheorem{remark}[thm]{Remark}
\newtheorem{assumption}[thm]{Assumption}
\renewcommand{\eqref}[1]{\labelcref{#1}}
\crefname{thm}{Theorem}{Theorems}
\crefname{lem}{Lemma}{Lemmas}
\crefname{prop}{Proposition}{Propositions}
\crefname{coro}{Corollary}{Corollaries}
\crefname{defn}{Definition}{Definitions}
\crefname{remark}{Remark}{Remarks}
\def\thm@space@setup{%
  \thm@preskip=4pt plus 2pt minus 2pt
  \thm@postskip=\thm@preskip
}
\renewenvironment{proof}[1][\proofname]{\par
  \pushQED{\qed}%
  \normalfont \topsep4\p@\relax 
  \trivlist
  \item[\hskip\labelsep
        \itshape
    #1\@addpunct{.}]\ignorespaces
}{%
  \popQED\endtrivlist\@endpefalse
}
\setlist{topsep=4pt plus 2pt minus 2pt,partopsep=0pt,itemsep=2pt plus 2pt minus 2pt,parsep=0.5\parskip}
\newcommand{\MR}[1]{}
\let\OLDthebibliography\thebibliography
\renewcommand\thebibliography[1]{
  \addcontentsline{toc}{section}{\refname}
  \OLDthebibliography{BGM05}
  \setlength{\parskip}{0pt}
  \setlength{\itemsep}{0pt plus 0.3ex}
}
\newcommand{\sD}{\slashed{D}}
\newcommand{\R}{\mathbb{R}}
\newcommand{\C}{\mathbb{C}}
\newcommand{\Z}{\mathbb{Z}}
\newcommand{\A}{\mathcal{A}}
\newcommand{\mH}{\mathcal{H}}
\newcommand{\mK}{\mathcal{K}}
\newcommand{\mJ}{\mathcal{J}}
\newcommand{\D}{\mathcal{D}}
\newcommand{\B}{\mathcal{B}}
\newcommand{\E}{\mathcal{E}}
\DeclareMathOperator{\tr}{tr}
\DeclareMathOperator{\Ad}{Ad}
\DeclareMathOperator{\Dom}{Dom}
\DeclareMathOperator{\SO}{SO}
\DeclareMathOperator{\GL}{GL}
\DeclareMathOperator{\Spin}{Spin}
\DeclareMathOperator{\Div}{div}
\DeclareMathOperator{\wres}{wres}
\renewcommand{\bar}[1]{\overline{#1}}
\newcommand{\Cliff}{{\mathrm{Cl}}}
\newcommand{\CCliff}{{\mathbb{C}\mathrm{l}}}
\newcommand{\dvol}{\textnormal{dvol}}
\newcommand{\pos}{\textnormal{pos}}
\newcommand{\til}[1]{\widetilde{#1}}
\newcommand{\la}{\langle}
\newcommand{\ra}{\rangle}
\newcommand{\into}{\hookrightarrow}
\newcommand{\contract}{\mathbin{\lrcorner}}
\newcommand{\bigmvert}{\,\big|\,}
\newcommand{\bS}{S}
\newcommand{\bP}{P}
\newcommand{\mattwo}[4]{
  \left(\!\!\!\begin{array}{c@{~}c}#1&#2\\ #3&#4\\\end{array}\!\!\!\right)
}
\title{Families of spectral triples and foliations of space(time)}
\author{
Koen van den Dungen%
\footnote{\emph{Present address:} Mathematisches Institut der Universit\"at Bonn, Endenicher Allee 60, D-53115 Bonn, \texttt{kdungen@uni-bonn.de}}\\[4mm]
{\normalsize SISSA (Scuola Internazionale Superiore di Studi Avanzati)}\\ 
{\normalsize Via Bonomea, 265, 34136 Trieste, Italy}
}
\date{}
\begin{document}

\maketitle

\begin{abstract}
\noindent
We study a noncommutative analogue of a spacetime foliated by spacelike hypersurfaces, in both Riemannian and Lorentzian signatures. 
First, in the classical commutative case, we show that the canonical Dirac operator on the total spacetime can be reconstructed from the family of Dirac operators on the hypersurfaces. 
Second, in the noncommutative case, the same construction continues to make sense for an abstract family of spectral triples. 
In the case of Riemannian signature, we prove that the construction yields in fact a spectral triple, which we call a product spectral triple. 
In the case of Lorentzian signature, we correspondingly obtain a `Lorentzian spectral triple', which can also be viewed as the `reverse Wick rotation' of a product spectral triple. 
This construction of `Lorentzian spectral triples' fits well into the Krein space approach to noncommutative Lorentzian geometry. 

\vspace{\baselineskip}
\noindent
\emph{Keywords}: noncommutative geometry; Lorentzian manifolds; foliations of spacetime.

\noindent
\emph{Mathematics Subject Classification 2010}: 
53C50, 
58B34. 
\end{abstract}

\section{Introduction}

Within the framework of Connes' noncommutative geometry \cite{Connes94}, the notion of spectral triples encompasses and generalises \emph{Riemannian} spin manifolds \cite{Con13}. 
Indeed, the canonical Dirac operator on a complete Riemannian spin manifold is (essentially) self-adjoint and elliptic, and therefore defines a spectral triple. 
One of the main open questions in noncommutative geometry (in particular, regarding its applications in physics) is how one should incorporate \emph{Lorentzian} manifolds into this framework. 
The canonical Dirac operator on a Lorentzian manifold is neither symmetric nor elliptic, and thus one needs to find a new framework in which to describe these operators. 
There are currently several tentative approaches to noncommutative Lorentzian geometry. 
One possible approach is based on the idea that there should be an abstract notion of `Wick rotation', which associates a (genuine) spectral triple to any `Lorentzian spectral triple' \cite{vdDPR13,vdDR16}. 
Another possible approach is to replace the Hilbert space by a Krein space \cite{Str06,Sui04,PS06,Bes16pre}, which is the natural way to describe spinors on Lorentzian manifolds. This Krein space approach provides a natural framework for the description of almost-commutative Lorentzian manifolds \cite{Bar07,vdD16}. 
Other lines of research into noncommutative Lorentzian geometry focus on the Lorentzian distance function \cite{Mor03,Fra10,Fra14,Fra18} (see also \cite{RW16,Min17pre}) or on the causality properties \cite{Bes09,FE13,FE14,FE15,BB17}. 
Finally, links between noncommutative geometry and quantum gravity have also been explored (see \cite{AG14} and references therein). 

In this article, we will consider a more constructive approach to the study of `Lorentzian spectral triples', modeled on a foliation of spacetime by spacelike hypersurfaces (see \cite{Haw97,Kop98,KP01,KP02,PV04} for earlier work in this direction). 
Given such a foliation, the Lorentzian Dirac operator $\sD$ can be decomposed into a family of (Riemannian) Dirac operators $\{\sD_t\}_{t\in\R}$ on the spacelike hypersurfaces, parametrised by the time-coordinate $t\in\R$. 
By analogy, one then describes a `Lorentzian spectral triple' using a family of spectral triples, parametrised by $\R$. 
Our main contribution to this approach, is that we provide a rigorous reconstruction of the total triple from the family of spectral triples. 

In fact, we will consider both Riemannian and Lorentzian signatures. 
First, in \cref{sec:Dirac_hyper}, we recall the description of the Dirac operator on a hypersurface, following \cite[\S3]{BGM05}. We focus our attention on even-dimensional space(time)s. 
In \cref{sec:prod_spacetime}, we introduce a class of \emph{product space(time)s} $Z = M\times\R$ which will be considered in this article. 
A product space(time) is only a \emph{topological} product of $M$ and $\R$; the geometry is allowed to be much more general than just a product geometry. 
We equip such a product space(time) with the canonical triple $\big( C_c^\infty(Z) , L^2(Z,\bS_Z) , \sD_Z \big)$ (which is a spectral triple if $Z$ is Riemannian). 
We can decompose this triple into a family of spectral triples $\big( C_c^\infty(M_t) , L^2(M_t,\bS_t) , \sD_{M_t} \big)$ on the hypersurfaces $M_t = M\times\{t\}$. 
Using parallel transport, we can view the Dirac operators on the hypersurfaces as a family of operators $\{\sD_t\}$ on a fixed Hilbert space $L^2(M_0,\bS_0)$. 
The reconstruction of the total Dirac operator $\sD_Z$ from the family $\{\sD_t\}_{t\in\R}$ requires an additional geometric object on the hypersurfaces: the lapse functions $N_t$. 
The Dirac operator $\sD_Z$ on $L^2(Z,\bS_Z)$ is then unitarily equivalent to the operator
\begin{align}
\label{eq:D_total}
\mattwo{0}{ i\tau_0 N_\bullet^{-\frac12} \partial_t N_\bullet^{-\frac12} + i \sD_\bullet}{i\tau_0 N_\bullet^{-\frac12} \partial_t N_\bullet^{-\frac12} - i \sD_\bullet}{0}
\end{align}
on $L^2\big(\R,L^2(M_0,\bS_0)\big)^{\oplus2}$, where $\tau_0=1$ if $Z$ is Riemannian, and $\tau_0=i$ if $Z$ is Lorentzian. 

In \cref{sec:prod_spec_trip}, we provide abstract axioms for families of spectral triples $(\A, \mH, \D_t)$ (generalising the Dirac operators on the hypersurfaces) and for families of \emph{lapse operators} $N_t$. 
We then define a new operator $\D$ by inserting these abstract families into \cref{eq:D_total}. 
This yields a new triple $(\A\odot C_c^\infty(\R), L^2(\R,\mH)^{\oplus2}, \D)$, which we interpret as describing the total \emph{noncommutative space(time)}. 
Our main result (\cref{thm:spec_prod}) is that, in the case of Riemannian signature ($\tau_0=1$), this new triple is in fact a spectral triple. We will refer to such triples as \emph{product spectral triples}. 
Moreover, we prove that such a product spectral triple represents the unbounded Kasparov product (over $C_0(\R)$) of the family $\big\{(\A, \mH, \D_t)\big\}_{t\in\R}$ (which yields an unbounded Kasparov $C_0(\R,A)$-$C_0(\R)$-module) with the standard spectral triple $\big( C_c^\infty(\R), L^2(\R),-i\partial_t \big)$ over the real line. 

Subsequently, in \cref{sec:Lor_prod} we define \emph{Lorentzian product triples} by applying the same formula \eqref{eq:D_total}, using instead the Lorentzian signature ($\tau_0=i$), to the abstract objects $(\A, \mH, \D_t)$ and $N_t$. 
The fact that \cref{eq:D_total} yields a spectral triple in Riemannian signature, justifies the idea that the Lorentzian version of \cref{eq:D_total} yields a Lorentzian analogue of spectral triples. 
Thus we have obtained an abstract construction for (noncommutative) `Lorentzian spectral triples'. 
These Lorentzian product triples can also be viewed as the `reverse Wick rotation' of (Riemannian) product spectral triples (using the terminology of \cite{vdDR16}). 
Moreover, we will show that our construction is compatible with the Krein space approach to noncommutative Lorentzian geometry. In particular, a Lorentzian product triple satisfies the abstract definition of a Lorentz-type spectral triple given in \cite{vdD16}. 

As a final remark, let us observe that in our construction of noncommutative `Lorentzian spectral triples', the time-coordinate always remains commutative. Thus, our construction does not capture noncommutative `Lorentzian spectral triples' in full generality. Instead, it should be viewed as a first step in this direction. 
We hope that our framework might lead to an `ADM-like' formalism, in which the time-evolution of noncommutative spaces can be studied.

\subsection*{Acknowledgements}

The author thanks Ludwik D\k{a}browski and Walter van Suijlekom for several interesting discussions related to this work. 
Thanks also to the referee for suggestions of improvement. 
This work is part of the project supported by the grant H2020-MSCA-RISE-2015-691246-QUANTUM DYNAMICS.

\section{The Dirac operator on a hypersurface}
\label{sec:Dirac_hyper}

\subsection{Clifford algebras}
\label{sec:Clifford_alg}

We shall start by briefly recalling the basics of finite-dimensional Clifford algebras. For more details, we refer to \cite[Ch.\ 1]{Lawson-Michelsohn89} or \cite[Ch.\ 1]{Baum81}. 

We view $\R^n$ as a subset of $\R^{n+1}$. We consider a basis $\{e_0,\ldots,e_n\}$ of $\R^{n+1}$, where $\{e_1,\ldots,e_n\}$ is a basis of the embedded $\R^n$. 
We distinguish two cases: either $\R^{n+1}$ is Euclidean space, or $\R^{n+1}=\R^{1,n}$ is Minkowski space. In both cases we write $\la\cdot|\cdot\ra$ for the corresponding scalar product on $\R^{n+1}$. 
Let $\sigma$ denote the signature of $\R^{n+1}$, i.e.\ $\sigma = (0,n+1)$ in the Euclidean case, and $\sigma = (1,n)$ in the Minkowski case. 
In the latter case, we have $\la e_0|e_0\ra=-1$ and $\la e_j|e_j\ra=1$ for $j=1,\ldots,n$. In either case, we write $\epsilon_0 := \la e_0|e_0\ra = \pm 1$. 
Furthermore, we write $\tau_0:=1$ if $\epsilon_0=1$, and $\tau_0:=i$ if $\epsilon_0=-1$ (in both cases we have $\tau_0^2=\epsilon_0$). 

The real Clifford algebra $\Cliff_\sigma$ is the real algebra generated by $v,w\in\R^{n+1}$ modulo the relation $vw+wv=-2\la v|w\ra$. The complex Clifford algebra is the complexification $\CCliff_{n+1} := \Cliff_\sigma\otimes\C$ (which is independent of the signature). 
We write $\CCliff_{n+1}^0$ for the even subalgebra generated by products $vw$, for $v,w\in\R^n$. 

If $n+1=2m$ is even, there is a unique irreducible representation (up to equivalence) of the complex Clifford algebra $\CCliff_{n+1}$ on the representation space $\Delta_{2m} := \C^{2^m}$, which we write as $\Phi_{n+1}\colon\CCliff_{n+1}\xrightarrow{\simeq}M_{2^m}(\C)$. 
If $n+1=2m+1$ is odd, there are two inequivalent irreducible representations of $\CCliff_{n+1}$ on the representation space $\Delta_{2m+1} := \C^{2^m}$, which we write as $\Phi_{n+1}^\pm\colon\CCliff_{n+1}\to M_{2^m}(\C)$. 
In this case, we have the isomorphism $\Phi_{n+1}^+\oplus\Phi_{n+1}^-\colon \CCliff_{n+1}\xrightarrow{\simeq}M_{2^m}(\C)\oplus M_{2^m}(\C)$. 
We can choose these representations such that $\Phi_{n+1}^-(w) = -\Phi_{n+1}^+(w)$, for any $w\in\R^{n+1}\subset\CCliff_{n+1}$. 
We write $\hat\Phi_{n+1} := \Phi_{n+1}^+$. 

The spin group is defined as the group whose elements are products of an even number of unit vectors:
\[
\Spin_\sigma := \left\{ v_1\cdots v_{2k} \in\Cliff_\sigma^0 \mathop{\big|} q(v_j,v_j)=\pm1,\, 1\leq 2k\leq n+1,\, 1\leq j\leq 2k \right\} .
\]
The spin group $\Spin_\sigma$ is a double cover of the special pseudo-orthogonal group $\SO_\sigma$ via a homomorphism $\lambda\colon\Spin_\sigma\to\SO_\sigma$. The subgroup $\SO^+_\sigma\subset\SO_\sigma$, given by the connected component of the identity, corresponds to all orthogonal transformations which preserve both space- and time-orientation. We define $\Spin^+_\sigma$ as its pre-image under $\lambda$:
\[
\Spin^+_\sigma := \lambda^{-1}\big(\SO^+_\sigma\big) = \Big\{ v_1\cdots v_{2k} \in\Spin_\sigma : \prod_{j=1}^{2k} q(v_j,v_j) = 1 \Big\} .
\]
In the Euclidean case, $\SO_{n+1}$ is connected, and we simply have $\Spin^+_{n+1} = \Spin_{n+1}$. 

From here on we consider the case of odd $n=2m+1$. We will make a specific choice for our standard representation of $\CCliff_{n+1}$ on $\Delta_{n+1} = \Delta_n \oplus \Delta_n$. 
Given the representation $\hat\Phi_n\colon\CCliff_n\to M_{2^m}(\C)$, we define the representation $\hat\Phi_{n+1}\colon\CCliff_{n+1}\to M_{2^{m+1}}(\C)$ by 
\begin{align}
\label{eq:Cliff_rep}
\hat\Phi_{n+1}(w) &:= \mattwo{0}{i\hat\Phi_n(w)}{-i\hat\Phi_n(w)}{0} , & 
\hat\Phi_{n+1}(e_0) &:= \mattwo{0}{i\tau_0\epsilon_0}{i\tau_0\epsilon_0}{0} ,
\end{align}
where $w\in\R^n\subset\R^{n+1}$. 
The representation $\hat\Phi_{n+1}$ is (up to equivalence) the unique irreducible representation of $\CCliff_{n+1}$. 
Its restriction $\hat\Phi_{n+1}\colon\Spin^+_\sigma\to M_{2^{m+1}}(\C)$ decomposes as the direct sum of two equivalent irreducible representations of the spin group $\Spin^+_\sigma$. 

We have the isomorphism $\varphi\colon\CCliff_n \to \CCliff^0_{n+1}$ given by $e_j \mapsto \tau_0e_0e_j$ (see \cite[Theorem 3.7]{Lawson-Michelsohn89}).
Then we also obtain a representation of $\CCliff_n$ on $\Delta_{n+1}$ by setting $\til\Phi_n := \hat\Phi_{n+1} \circ \varphi$. For $w\in\R^n$ we then have 
\begin{align}
\label{eq:Cliff_rep_hs}
\til\Phi_n(w) := \tau_0 \hat\Phi_{n+1}(e_0) \hat\Phi_{n+1}(w) = \mattwo{\hat\Phi_n(w)}{0}{0}{-\hat\Phi_n(w)} .
\end{align}

Using the standard basis of $\Delta_{n+1} = \C^{2^{m+1}}$, we define a positive-definite inner product
$$
\la v,w\ra^\pos := \sum_{j=1}^{2^m} \bar{v_j} w_j ,
$$
for $v,w\in\Delta_{n+1}$. In the Euclidean case, this inner product is invariant under the action of the spin group $\Spin_{n+1}$. 
In the Lorentzian case however, this inner product is only invariant under the action of the maximal compact subgroup $\Spin_n \subset \Spin^+_{n+1}$. 
In this case, there exists instead a canonical \emph{indefinite} inner product $\la\cdot,\cdot\ra$ on $\Delta_{n+1}$ which is invariant under $\Spin^+_{n+1}$ \cite[Satz 1.12]{Baum81}. Given the basis $\{e_0,\ldots,e_n\}$ of $\R^{n+1}$, the canonical inner product is related to the positive-definite inner product via
\[
\la v,w\ra = \la v,\hat\Phi_{n+1}(e_0)w\ra^\pos . 
\]

\subsection{Spin geometry}
\label{sec:spin_structures}

Usually, a spin structure is defined as a double cover of the bundle of oriented \emph{(pseudo-)orthonormal} frames, and therefore it relies on the (pseudo-)Riemannian metric. Since we will be interested in studying a family of metrics, it will be more convenient to use a topological definition of spin structures which is independent of the metric (this idea was first suggested by Milnor in \cite{Mil65}, and is described in detail in \cite{DP86}). In this section we will introduce both metric and topological spin structures. 

Let $X$ be an oriented smooth manifold of dimension $n+1\geq3$. We denote by $\GL^+(X)\to X$ the principal $\GL^+_{n+1}$-bundle of oriented frames on $X$. Consider the universal double cover $\tau\colon\til\GL^+_{n+1}\to\GL^+_{n+1}$. 

\begin{defn}[{\cite[\S2]{DP86}}]
A \emph{topological spin structure} on $X$ is given by a principal $\til\GL^+_{n+1}$-bundle $\til\GL^+(X)\to X$ and a principal bundle morphism $\eta\colon\til\GL^+(X)\to\GL^+(X)$, such that the following diagram commutes.
\begin{align*}
\xymatrix{
\til\GL^+(X) \times \til\GL^+_{n+1} \ar[r] \ar[dd]^{\eta\times\tau} & \til\GL^+(X) \ar@{->>}[dr] \ar[dd]^{\eta} & \\
 & & X \\
\GL^+(X) \times \GL^+_{n+1} \ar[r] & \GL^+(X) \ar@{->>}[ur] & \\
}
\end{align*}
\end{defn}

Now let $(X,g)$ be a (space- and time-oriented) pseudo-Riemannian manifold, either of Riemannian signature $\sigma=(0,n+1)$ or of Lorentzian signature $\sigma=(1,n)$. We denote by $\SO^+_g(X)\to X$ the principal $\SO^+_\sigma$-bundle of space- and time-oriented pseudo-orthonormal frames on $X$. Denote by $\lambda\colon\Spin^+_\sigma\to \SO^+_\sigma$ the double cover. 

\begin{defn}
A \emph{metric spin structure} on $(X,g)$ is given by a principal $\Spin^+_\sigma$-bundle $\Spin^+_g(X) \to X$ and a principal bundle morphism $\eta_g\colon\Spin^+_g(X)\to\SO^+_g(X)$, such that the following diagram commutes. 
\begin{align*}
\xymatrix{
\Spin^+_g(X) \times \Spin^+_\sigma \ar[r] \ar[dd]^{\eta_g\times\lambda} & \Spin^+_g(X) \ar@{->>}[dr] \ar[dd]^{\eta_g} & \\
 & & X \\
\SO^+_g(X) \times \SO^+_\sigma \ar[r] & \SO^+_g(X) \ar@{->>}[ur] & \\
}
\end{align*}
\end{defn}

Given a pseudo-Riemannian metric $g$ on $X$, a topological spin structure $(\til\GL^+(X),\eta)$ can be restricted to a metric spin structure $(\Spin^+_g(X),\eta_g)$ by setting $\Spin^+_g(X) := \eta^{-1}(\SO^+_g(X))$ and $\eta_g := \eta|_{\eta^{-1}(\SO^+_g(X))}$. Conversely, any metric spin structure can be extended to a topological spin structure by setting $\til\GL^+(X) := \Spin^+(X) \times_{\Spin^+_\sigma} \til\GL^+_{n+1}$ and $\eta := \eta_g \times \tau$. Both restriction and extension preserve the notions of equivalence of spin structures \cite[\S2]{DP86}. 

The spinor bundle on $(X,g)$ is given by the associated vector bundle
\begin{align*}
\bS_X &:= \Spin^+_g(X) \times_{\hat\Phi_{n+1}} \Delta_{n+1} , 
\end{align*}
where $\Delta_{n+1}$ is the standard representation space of the spin group $\Spin^+_\sigma$. 
We point out that here we cannot use the topological spin structure, because the fundamental spin  representation does not lift to a representation of $\til\GL^+_{n+1}$.

The tangent bundle can be viewed as the associated vector bundle $TX = \SO_g^+(X) \times_{\SO^+_\sigma} \R^{n+1}$. 
We define the \emph{Clifford bundle} as the associated bundle
$$
\Cliff(TX,g) := \SO_g^+(X) \times_{\SO^+_\sigma} \Cliff_\sigma .
$$
Given the spin structure $\Spin^+(X)$, we can write $\SO_g^+(X) = \Spin^+_g(X) \times_\lambda \SO^+_\sigma$, where $\lambda$ is the double cover $\Spin^+_\sigma\to \SO^+_\sigma$. We can then also view the Clifford bundle as an associated bundle of $\Spin^+(X)$ via
$$
\Cliff(TX,g) = \Spin^+(X) \times_\lambda \SO^+_\sigma \times_{\SO^+_\sigma} \Cliff_\sigma = \Spin^+_g(X) \times_{\Ad} \Cliff_\sigma ,
$$
where $\Ad$ is given by $\Ad_u(a) = u\cdot a\cdot u^{-1}$ for all $u\in\Spin^+_\sigma$ and $a\in\Cliff_\sigma$. 
The complexified Clifford bundle is independent of the signature of $g$ and is denoted $\CCliff(TX) = \Cliff(TX,g) \otimes_\R \C = \SO_g^+(X) \times_{\SO^+_\sigma} \CCliff_{n+1}$. 

Using the natural inclusion $\iota\colon\R^{n+1}\into\Cliff_\sigma$, we can define the Clifford representation $\gamma_X\colon TX \into \Cliff(TX,g)$ by 
$$
\gamma_X([f,x]) := [f,\iota(x)] ,
$$
where $f\in\SO_g^+(X)$ and $x\in\R^{n+1}$ determine $[f,x]\in TX = \SO_g^+(X) \times_{\SO^+_\sigma} \R^{n+1}$. 
This Clifford representation inherits the Clifford relation of $\Cliff_\sigma$, so we have $\gamma_X(v)\gamma_X(w) + \gamma_X(w)\gamma_X(v) = -2g(v,w)$ for all $v,w\in TX$. 

The canonical inner product on $\Delta_{n+1}$ yields a canonical hermitain structure 
\[
(\cdot|\cdot)\colon \Gamma_c^\infty(\bS_X) \times \Gamma_c^\infty(\bS_X) \to C_c^\infty(X) ,
\]
which gives rise to the inner product
$
\la\psi_1|\psi_2\ra := \int_X (\psi_1|\psi_2) \dvol_g ,
$
for all $\psi_1,\psi_2\in\Gamma_c^\infty(X,\bS_X)$, where $\dvol_g$ 
denotes the canonical volume form of $(X,g)$. 
In the Riemannian case, this inner product is positive-definite, and we write $\la\cdot|\cdot\ra^\pos = \la\cdot|\cdot\ra$. 
In the Lorentzian case, this inner product is indefinite (but non-degenerate). However, given a global unit timelike vector field $\nu$ (which exists because $X$ is space- and time-oriented), we obtain a positive-definite Hermitian structure
\[
(\cdot|\cdot)^\pos := (\cdot|\gamma_X(e_0)\cdot) ,
\]
yielding a positive-definite inner product
$
\la\psi_1|\psi_2\ra^\pos := \int_X (\psi_1|\psi_2)^\pos \dvol_g .
$
The completion of $\Gamma_c^\infty(X,\bS_X)$ with respect to $\la\cdot|\cdot\ra^\pos$ is denoted $L^2(X,\bS_X)$. 
In the Lorentzian case, $L^2(X,\bS_X)$ is a Krein space with fundamental symmetry $\mJ_X = \gamma_X(e_0)$ (for more information on Krein spaces, see our summary in \cref{sec:Krein}, or refer to \cite{Bognar74} for a detailed introduction).

Locally, we can write a spinor $\psi\in\Gamma_c^\infty(\bS_X)$ as the equivalence class $[s,v]$, where $s$ is a local section of $\Spin^+_g(X)$ and $v$ is a local function with values in $\Delta_{n+1}$. The double cover $\eta_g\colon \Spin^+_g(X) \to \SO^+_g(X)$ then yields a local \mbox{(pseudo-)orthonormal} frame $\eta_g(s) = \{e_j\}$, such that 
$g(e_i,e_j) = \delta_{ij}\epsilon_j$ (where $\epsilon_j=1$ for $j\neq0$). 
The Levi-Civita connection on the tangent bundle lifts to a connection on the spinor bundle. Locally, this spin connection takes the form (see \cite[Satz 3.2]{Baum81} and \cite[Eq.(2.5)]{BGM05})
\begin{align}
\label{eq:spin_connection}
\nabla^{\bS_X}_Y\psi = \bigg[ s , Y(v) + \frac12 \sum_{j<k} \epsilon_j \epsilon_k g(\nabla_Ye_j,e_k) \gamma_X(e_j) \gamma_X(e_k) v \bigg] ,
\end{align}
for a local vector field $Y = \sum_j Y^j e_j \in\Gamma_c^\infty(TX)$. 

The Dirac operator $\sD_X$, canonically associated to the metric $g$, is defined as
\begin{align}
\label{eq:Dirac}
\sD_X &:= \sum_{j=0}^n \epsilon_j \gamma_X(e_j) \nabla^{\bS_X}_{e_j} .
\end{align}
In the Riemannian case, the Dirac operator $\sD_X$ is symmetric \cite[Proposition II.5.3]{Lawson-Michelsohn89}. 
In the Lorentzian case, the operator $i\sD_X$ is Krein-symmetric \cite[Satz 3.18]{Baum81} (i.e., it is symmetric with respect to the canonical indefinite inner product).

\subsection{A hypersurface}

In this section we will describe the spin geometry of an embedded hypersurface. 
The Dirac operator on a hypersurface in flat Euclidean space was already studied in \cite{Tra92}. 
Here we largely follow the general exposition for hypersurfaces in pseudo-Riemannian manifolds given in \cite[\S3]{BGM05}. 

Let $(Z,g)$ be an oriented spin manifold of dimension $n+1$, where the metric $g$ is either Riemannian or Lorentzian. 
We will denote the signature of $Z$ by $\sigma$, i.e.\ $\sigma=(0,n+1)$ if $g$ is Riemannian, or $\sigma=(1,n)$ if $g$ is Lorentzian. 
In the Lorentzian case, we assume furthermore that $Z$ is also time-oriented. 
We assume that $Z$ comes equipped with a given topological spin structure $\eta\colon\til\GL^+(Z)\to\GL^+(Z)$, with corresponding metric spin structure $\eta_g\colon\Spin^+_g(Z)\to\SO^+_g(Z)$. 

We will consider a codimension $1$ hypersurface $M\subset Z$ with \emph{trivial} normal bundle. This means there is a vector field $\nu=e_0$ on $Z$ along $M$ satisfying $\epsilon_0 := g(\nu,\nu) = \pm1$ and $g(\nu,TM)=0$. If $Z$ is Lorentzian, we assume that the vector field $\nu$ is timelike. Thus the induced metric $g_M$ on $M$ is positive-definite. 

The hypersurface $M$ inherits a spin structure from $Z$ via the decomposition $TZ|_M = \R\oplus TM$ given by $\nu$, as follows. 
The bundle of oriented frames $\GL^+(M)$ on $M$ can be embedded into the bundle of (space- and \mbox{time-)oriented} frames $\GL^+(Z)|_M$ of $Z$ restricted to $M$ by the map $\iota\colon (e_1,\ldots,e_n) \mapsto (\nu,e_1,\ldots,e_n)$. Similarly, the bundle of oriented \mbox{(pseudo-)orthonormal} frames $\SO^+_{g_M}(M)$ on $M$ can be embedded into the bundle of (space- and time-)oriented \mbox{(pseudo-)orthonormal} frames $\SO^+_g(Z)|_M$ of $Z$ restricted to $M$ by the map $\iota_g := \iota|_{\SO^+_{g_M}(M)}$. Then 
\begin{align}
\label{eq:spin_restriction}
\til\GL^+(M) &:= \eta^{-1}(\iota(\GL^+(M))) , & \Spin^+_{g_M}(M) &:= \eta_g^{-1}(\iota_g(\SO^+_{g_M}(M)))
\end{align}
define the topological and metric spin structures on $M$. It is clear that $\Spin^+_{g_M}(M)$ is identical to the metric spin structure obtained by restricting $\til\GL^+(M)$ using the metric $g_M$ (as described in \cref{sec:spin_structures}). 

\begin{assumption}
We will assume throughout this article that $Z$ is \emph{even-dimensional}, so that $M$ is odd-dimensional. 
\end{assumption}

Recall from \cref{sec:Clifford_alg} that there is then a unique irreducible representation $\hat\Phi_{n+1}$ of $\CCliff_{n+1}$ on $\Delta_{n+1}$, while there are two inequivalent irreducible representations $\Phi_n^\pm=\pm\hat\Phi_n$ of $\CCliff_{n}$ on $\Delta_{n}$. 
The spinor bundles on $Z$ and $M$ are given by the associated vector bundles 
\begin{align*}
\bS_Z &:= \Spin^+_g(Z) \times_{\hat\Phi_{n+1}} \Delta_{n+1} , & \bS_M^\pm &:= \Spin_g(M) \times_{\pm\hat\Phi_n} \Delta_n .
\end{align*}
We recall that, though the representations $\Phi_n^\pm=\pm\hat\Phi_n$ are inequivalent as representations of $\CCliff_n$, they are \emph{equivalent} as representations of $\Spin_n$. Hence the spinor bundles $\bS_M^+$ and $\bS_M^-$ are isomorphic. Furthermore, by our definition of $\hat\Phi_{n+1}$ in \cref{eq:Cliff_rep}, 
we have that $\bS_Z|_M = \bS_M^+\oplus\bS_M^-$, and this direct sum decomposition is precisely the decomposition corresponding to the $\Z_2$-grading on $\bS_Z$. Using \cref{eq:Cliff_rep}, we see that the Clifford representation is given by
\begin{align}
\label{eq:Clifford_rep}
\gamma_Z(X) &= \mattwo{0}{i\gamma_M(X)}{-i\gamma_M(X)}{0} , & 
\gamma_Z(\nu) &= \mattwo{0}{i\tau_0\epsilon_0}{i\tau_0\epsilon_0}{0} ,
\end{align}
where $X$ is a vector field on $M$. 
Furthermore, using the representation $\til\Phi$ from \cref{eq:Cliff_rep_hs}, we also have
\[
\til\gamma_M(X) := \tau_0 \gamma_Z(\nu) \gamma_Z(X) = \mattwo{\gamma_M(X)}{0}{0}{-\gamma_M(X)} ,
\]
which provides a representation of the Clifford algebra of $M$ on $\bS_Z|_M=\bS_M^+\oplus\bS_M^-$. 

\begin{remark}
The case in which $Z$ is odd-dimensional requires a separate treatment, because in this case the spinor bundles on $Z$ and $M$ have the same rank. Hence we do not have the decomposition $\bS_Z|_M = \bS_M^+\oplus\bS_M^-$, and the analogue of \cref{eq:Clifford_rep} would be rather different. 
In this article we focus only on even dimensions, because it is our aim to describe $(3+1)$-dimensional spacetime. 
\end{remark}

The Levi-Civita connections on the pseudo-Riemannian manifold $(Z,g)$ and the hypersurface $(M,g_M)$ are denoted by $\nabla^Z$ and $\nabla^M$, respectively. When restricted to the hypersurface $M$, they are related via $\nabla^M_X = P_M \circ \nabla^Z_X$, where $P_M \colon TZ\to TM$ denotes the orthogonal projection (see \cite[\S3.5]{BEE96}), and their difference determines the \emph{Weingarten map} $W\colon TM\to TM$ with respect to $\nu$, or the \emph{second fundamental form} $K\colon TM\times TM\to\R$, given by
$$
g(W(X),Y) \nu := K(X,Y) \nu := \nabla^Z_X Y - \nabla^M_X Y 
$$
for all vector fields $X$ and $Y$ on $M$. 
Since the connections are torsion-free, it follows that the second fundamental form is symmetric, i.e.\ $K(X,Y) = K(Y,X)$. Using also the metric compatibility of the connection, the Weingarten map can explicitly be obtained as
\begin{align}
\label{eq:Weingarten}
W(X) = - \epsilon_0 \nabla^Z_X \nu .
\end{align}
Thus, the Weingarten map describes how the normal field $\nu$ changes along the surface $M$, and as such it describes the extrinsic curvature of $M$. For Clifford multiplication with $W(X)$ we can write
\begin{align}
\label{eq:tr_Weingarten}
\gamma_Z(W(X)) &= \sum_{j=1}^n g(W(X),e_j) \gamma_Z(e_j) , & \sum_{j=1}^n \gamma_Z(e_j) \gamma_Z(W(e_j)) &= - \tr^M(W) ,
\end{align}
where $\tr^M(W) := \tr^M(K) = \sum_{j=1}^n K(e_j,e_j)$. 

The Levi-Civita connections $\nabla^Z$ and $\nabla^M$ can be lifted to connections $\nabla^{\bS_Z}$ and $\nabla^{\bS_M}$ on the spinor bundles $\bS_Z$ and $\bS_M$, respectively, and are given explicitly in \cref{eq:spin_connection}. 
Given a section $\psi = [s,w]$ of the spinor bundle $\bS_Z|_M$, we can use \cref{eq:Weingarten,eq:tr_Weingarten} to rewrite the spin connection on $\bS_Z$ as
\begin{align*}
\nabla^{\bS_Z}_{e_l} \psi &= \Big[ s , e_l(w) + \frac12 \sum_{1\leq j<k\leq n} g(\nabla^Z_{e_l}e_j,e_k) \gamma_Z(e_j) \gamma_Z(e_k) w + \frac12 \sum_{1\leq k\leq n} \epsilon_0 g(\nabla^Z_{e_l}\nu,e_k) \gamma_Z(\nu) \gamma_Z(e_k) w \Big] \\
&= \Big[ s , e_l(w) + \frac12 \sum_{1\leq j<k\leq n} g(\nabla^M_{e_l}e_j,e_k) \gamma_Z(e_j) \gamma_Z(e_k) w - \frac12 \sum_{1\leq k\leq n} g(W(e_l),e_k) \gamma_Z(\nu) \gamma_Z(e_k) w \Big] \\
&= \nabla^{\bS_M}_{e_l} \psi - \frac12 \gamma_Z(\nu) \gamma_Z(W(e_l)) \psi .
\end{align*}
Hence for any vector field $X$ on $M$ we have (cf.\ \cite[Eq.\ (3.5)]{BGM05})
\begin{align}
\label{eq:spin_conn_Weingarten}
\nabla^{\bS_Z}_X = \nabla^{\bS_M}_X - \frac12 \gamma_Z(\nu) \gamma_Z(W(X)) .
\end{align}

The Dirac operators $\sD_Z$ and $\sD_M$, canonically associated to the metrics $g$ and $g|_M$ (respectively), are defined (see \cref{eq:Dirac}) as
\begin{align*}
\sD_Z &:= \epsilon_0 \gamma_Z(\nu) \nabla^{\bS_Z}_{\nu} + \sum_{j=1}^n \gamma_Z(e_j) \nabla^{\bS_Z}_{e_j} , & 
\sD_M &:= \sum_{j=1}^n \gamma_M(e_j) \nabla^{\bS_M}_{e_j} .
\end{align*}
Using \cref{eq:spin_conn_Weingarten} we have 
\begin{align*}
\sD_Z|_M 
&= \epsilon_0 \gamma_Z(\nu) \nabla^{\bS_Z}_{\nu} + \sum_{j=1}^n \gamma_Z(e_j) \nabla^{\bS_M}_{e_j} - \frac12 \sum_{j=1}^n \gamma_Z(e_j) \gamma_Z(\nu) \gamma_Z(W(e_j)) \\
&= \epsilon_0 \gamma_Z(\nu) \nabla^{\bS_Z}_{\nu} - \tau_0 \gamma_Z(\nu) \sum_{j=1}^n \til\gamma_M(e_j) \nabla^{\bS_M}_{e_j} + \frac12 \gamma_Z(\nu) \sum_{j=1}^n \gamma_Z(e_j) \gamma_Z(W(e_j)) ,
\end{align*}
where we have used that $\til\gamma_M(e_j) = \tau_0 \gamma_Z(\nu) \gamma_Z(e_j)$. 
We will write $\til\sD_M = \mattwo{\sD_M}{0}{0}{-\sD_M}$. 
From \cref{eq:tr_Weingarten} we see that the mean curvature of $M$ is given by $H := \frac1n \tr^M(W) = -\frac1n \sum_j \gamma_Z(e_j) \gamma_Z(W(e_j))$. 
The Dirac operators on $Z$ and $M$ are then related via (cf.\ \cite[Eq.\ (3.6)]{BGM05})
\begin{align}
\label{eq:Dirac_fol}
\sD_Z|_M &= \epsilon_0 \gamma_Z(\nu) \nabla^{\bS_Z}_{\nu} - \tau_0 \gamma_Z(\nu) \til\sD_M - \frac n2 H \gamma_Z(\nu) .
\end{align}

\subsubsection{The spectral triple}

Using the positive-definite Hermitian structure $(\cdot|\cdot)^\pos$ on $\bS_Z$, we define the Hilbert space $L^2(M,\bS_Z|_M)$ of square-integrable spinors on $M$ as the completion of $\Gamma_c^\infty(M,\bS_Z|_M)$ with respect to the positive-definite inner product 
\begin{align}
\label{eq:pos_inner_prod}
\la\phi|\psi\ra_M^\pos := \int_M (\phi|\psi)^\pos \dvol_M , 
\end{align}
where $\dvol_M = \nu \contract \dvol_Z$ is the volume form on $M$ induced by the volume form on $Z$. 
Since the Hermitian structure $(\cdot|\cdot)^\pos$ on $\bS_Z = \Spin^+_g(Z) \times_{\Spin^+_\sigma} \Delta_{n+1}$ is obtained from the standard positive-definite inner product on $\Delta_{n+1}$, we note that the decomposition $\bS_Z|_M = \bS_M^+\oplus\bS_M^-$ is an orthogonal direct sum, and the Hermitian structure on $\bS_Z|_M$ agrees with the intrinsic Hermitian structures on $\bS_M^\pm$. Hence we have the isomorphism $L^2(M,\bS_Z|_M) \simeq L^2(M,\bS_M^+) \oplus L^2(M,\bS_M^-)$. 

We consider the Dirac operators $\pm\sD_M$ which are canonically associated to the spinor bundles $\bS_M^\pm$ and the Riemannian metric $g_M$. 
The following statement is well-known, and we refer to e.g.\ \cite[Ch.\ 10]{Higson-Roe00} for more details. 
\begin{prop}
\label{prop:ST_t}
If the metric $g_M$ is \emph{complete}, we obtain spectral triples $\big( C_c^\infty(M) , L^2(M,\bS_M^\pm) , \pm\sD_M \big)$. 
\end{prop}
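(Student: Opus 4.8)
The plan is to check, for the triple $\big(C_c^\infty(M), L^2(M,\bS_M^\pm), \pm\sD_M\big)$, the three defining properties of a (possibly non-unital) spectral triple: that $\pm\sD_M$ is essentially self-adjoint on $C_c^\infty(M,\bS_M^\pm)$; that $[\pm\sD_M,f]$ extends to a bounded operator for every $f\in C_c^\infty(M)$; and that $f(\pm\sD_M-i)^{-1}$ (equivalently $f(1+\sD_M^2)^{-1/2}$) is compact for every $f\in C_c^\infty(M)$. Since $\bS_M^+\simeq\bS_M^-$ and $+\sD_M$, $-\sD_M$ differ only by an overall sign, which affects none of these three properties, it suffices to treat $\sD_M$ acting on $L^2(M,\bS_M^+)$.

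For essential self-adjointness I would invoke the completeness of $(M,g_M)$: $\sD_M$ is a formally self-adjoint first-order elliptic operator whose principal symbol is Clifford multiplication, so it has ``unit propagation speed'', and by Chernoff's theorem (see \cite[Ch.~10]{Higson-Roe00}) such an operator is essentially self-adjoint on compactly supported smooth sections over a complete manifold; equivalently, one checks that the wave flow $\exp(is\sD_M)$ propagates supports at finite speed, keeping compactly supported data compactly supported for all $s$, which forces the deficiency spaces to vanish. This is the one place where completeness is genuinely used and where a nontrivial analytic input is needed, so I expect it to be the main point of the argument; the remaining two properties are essentially formal. For the bounded-commutator property, a local computation with \cref{eq:Dirac} gives $[\sD_M,f]\psi=\gamma_M(\operatorname{grad}_{g_M}f)\psi$ on smooth sections; as $f$ has compact support and $g_M$ is smooth, $\operatorname{grad}_{g_M}f$ is a bounded vector field and Clifford multiplication by it is a fibrewise-bounded, hence bounded, operator on $L^2(M,\bS_M^+)$.

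Finally, for compactness of $f(\sD_M-i)^{-1}$ I would combine elliptic regularity with Rellich's lemma: ellipticity of $\sD_M$ makes $(\sD_M-i)^{-1}$ a bounded map $L^2(M,\bS_M^+)\to H^1_{\mathrm{loc}}(M,\bS_M^+)$; choosing a precompact open set $U$ with $\operatorname{supp}f\subset U$, multiplication by $f$ then lands in $H^1$-sections supported in $\overline U$, and the inclusion of the latter into $L^2(M,\bS_M^+)$ is compact by the Rellich--Kondrachov theorem on the compact manifold-with-boundary $\overline U$. Hence $f(\sD_M-i)^{-1}$, and likewise $f(1+\sD_M^2)^{-1/2}$, is compact. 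All three steps are standard and can be found in \cite[Ch.~10]{Higson-Roe00}.
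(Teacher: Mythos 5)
Your proof is correct and is exactly the standard argument the paper itself points to: the paper offers no proof of \cref{prop:ST_t} beyond calling it well-known and citing \cite[Ch.~10]{Higson-Roe00}, and your three steps (Chernoff/finite propagation speed for essential self-adjointness using completeness, $[\sD_M,f]=\gamma_M(\operatorname{grad}f)$ for bounded commutators, elliptic regularity plus Rellich for local compactness) are precisely the content of that reference.
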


\section{Product space(time)s}
\label{sec:prod_spacetime}

As before, we consider both the Riemannian and the Lorentzian case. If $(Z,g)$ is an oriented Riemannian manifold, we will refer to $(Z,g)$ as a \emph{space}. 
If $(Z,g)$ is a time-oriented Lorentzian manifold, we call $(Z,g)$ a \emph{spacetime}. A \emph{temporal function} on $(Z,g)$ is a smooth function $T\colon Z\to\R$ such that the gradient $\nabla T$ is timelike and past-directed everywhere. A spacetime $(Z,g)$ admits a temporal function if and only if it is stably causal \cite{BS05}. 
We will restrict our attention to space(time)s which admit a smooth orthogonal splitting, as follows. 

\begin{defn}
\label{defn:product}
A space(time) is called a \emph{product space(time)} if it is isometric to a space(time) of the form $(M\times\R,g_\bullet + \epsilon_0 N^2dT^2)$, where 
$\epsilon_0=1$ for a space and $\epsilon_0=-1$ for a spacetime,  
$M$ is a smooth (spacelike) hypersurface, 
$N\colon M\times\R\to(0,\infty)$ is a smooth positive function, 
$T\colon M\times\R\to\R$ is the canonical projection, 
and $g_\bullet = \{g_t\}_{t\in\R}$ is a smooth family of Riemannian metrics on $M$. 
\end{defn}

The term \emph{product space(time)} obviously refers to the fact that $Z$ is the topological product of the hypersurface $M$ and the real line $\R$. However, we emphasise that $Z$ is \emph{not} a geometric product of $M$ and $\R$; indeed, the metric $g$ is allowed to be much more general than a product metric of the form $g_0+\epsilon_0dT^2$. 

We will refer to the smooth function $N$ as the \emph{lapse function}, which is standard terminology in the Lorentzian case. 
We will often think of $N$ as a smooth family of strictly positive smooth functions $N_\bullet = \{N_t\}_{t\in\R}\subset C^\infty(M)$, by setting $N_t(x) := N(x,t)$ for $x\in M$. 

The function $T$ defines a global coordinate on $M$, which we refer to as the time coordinate. 
Consider local coordinates on $Z=M\times\R$ given by local coordinates on $M$ along with the global time coordinate $T$ on $\R$. 
We consider the unit normal vector field $\nu = e_0$. Then $\epsilon_0 = g(\nu,\nu) = \epsilon_0 N^2 dT(\nu)^2$, so $dT(\nu) = N^{-1}$. Since $dT(\partial_T) = \partial_T(T) = 1$, we see that $\nu = N^{-1} \partial_T$. Noting that $g(\partial_T,\partial_T) = \epsilon_0 N^2 dT(\partial_T)^2 = \epsilon_0 N^2$ and $g(\partial_T,\nabla T) = dT(\partial_T) = 1$, we also see that $\nabla T = \epsilon_0 N^{-2} \partial_T = \epsilon_0 N^{-1} \nu$. The lapse function $N$ can then be written as 
\[
N = \big( \epsilon_0 g(\partial_T,\partial_T) \big)^{\frac12} = \big( \epsilon_0 g(\nabla T,\nabla T) \big)^{-\frac12} . 
\]
In the Lorentzian case, $\nabla T$ is timelike, so the smooth function $T$ is a temporal function. Hence every product spacetime is stably causal. 
The converse need not be true. However, let $(Z,g)$ be stably causal with a temporal function $T$ on $Z$, and consider the corresponding rescaled (conformally equivalent) metric $g_c := -g(\nabla T,\nabla T) g$. If $(Z,g_c)$ is timelike geodesically complete, then by \cite[Theorem 7.3.4]{GK99} it follows that $(Z,g)$ is a product spacetime. 

A spacetime is called globally hyperbolic if there exists a Cauchy hypersurface, i.e.\ a hypersurface $M\subset Z$ which is intersected exactly once by any inextendible timelike curve. 
A temporal function is called \emph{Cauchy} if every level set $M_t := \{x\in Z : T(x)=t\}$ is a Cauchy hypersurface of $Z$. Hence the existence of a Cauchy temporal function on $Z$ implies that $Z$ is globally hyperbolic. Conversely, it was shown in \cite{BS05} that every globally hyperbolic spacetime admits a Cauchy temporal function. Furthermore, since the level sets of a Cauchy temporal function are Cauchy hypersurfaces, and since all Cauchy hypersurfaces must be diffeomorphic, this Cauchy temporal function determines a splitting as in \cref{defn:product} (see \cite[Theorem 1.1]{BS05} for details). Thus, every globally hyperbolic spacetime is a product spacetime, with the additional property that $M$ and $T$ can be taken to be Cauchy. Finally, we also have the following sufficient conditions for when a product spacetime is in fact globally hyperbolic. 

\begin{thm}[{\cite[Theorem 2.1]{CC02}}]
\label{thm:product_glob_hyp}
Consider a product spacetime $(Z,g) = (M\times\R,g_\bullet + \epsilon_0 N^2dT^2)$ satisfying the following assumptions: 
\begin{enumerate}
\item there exist positive numbers $N_1,N_2>0$ such that $N_1 < N(x,t) < N_2$ for all $(x,t)\in M\times\R$;
\item the Riemannian metrics $g_t$ on $M\times\{t\}$ are complete, and uniformly bounded below for all $t\in\R$ by some complete metric $h$ on $M$. 
\end{enumerate}
Then $(Z,g)$ is globally hyperbolic. 
\end{thm}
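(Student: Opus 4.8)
The plan is to show directly that the slice $M_0 := M\times\{0\}$ is a Cauchy hypersurface in the sense recalled above --- every inextendible timelike curve (in fact, every inextendible causal curve) meets it exactly once --- so that $(Z,g)$ is globally hyperbolic by definition. Here $\epsilon_0 = -1$, so $g = g_\bullet - N^2\,dT^2$, and (as computed above) $\nabla T = -N^{-2}\partial_T$ is timelike; fix the time orientation so that $\partial_T$ is future-directed. Along any future-directed causal curve $\gamma(s) = (c(s),t(s))$ one has $\dot\gamma\neq 0$ and $0 \geq g(\dot\gamma,\dot\gamma) = g_{t(s)}(\dot c,\dot c) - N^2\dot t^2$, and since $g_t$ is positive-definite this forces $\dot t\neq 0$; as $g(\dot\gamma,\partial_T) = -N^2\dot t$, future-directedness gives $\dot t > 0$. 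Thus $t$ is strictly increasing along every future-directed causal curve, so it suffices to prove that every inextendible causal curve attains every value $t\in\R$.

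So let $\gamma$ be a future-directed inextendible causal curve. Since $\dot t > 0$ we may reparametrise $\gamma$ by $t$, writing $\gamma(t) = (c(t),t)$ for $t$ in some open interval $(t_-,t_+)$. The causal condition now reads $g_t(\dot c,\dot c)\leq N^2$, which is bounded above by $N_2^2$ by Assumption (1), while Assumption (2) gives $h(\dot c,\dot c)\leq g_t(\dot c,\dot c)$; hence the $h$-length of the spatial projection $c$ over any compact subinterval $[a,b]\subseteq(t_-,t_+)$ is at most $N_2(b-a)$. I claim $t_+ = +\infty$. If not, then $c$ has finite $h$-length as $t\uparrow t_+$, so $d_h(c(t),c(t'))\to 0$ as $t,t'\uparrow t_+$; since $(M,h)$ is complete, $(M,d_h)$ is a complete metric space by Hopf--Rinow, so $c(t)$ converges to some $p\in M$ and therefore $\gamma(t)\to(p,t_+)\in Z$, giving $\gamma$ a future endpoint and contradicting inextendibility. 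Hence $t_+ = +\infty$, and the same argument applied to the past gives $t_- = -\infty$. Therefore $\gamma$ attains $t=0$, and by strict monotonicity of $t$ it meets $M_0$ exactly once. This proves that $M_0$ is a Cauchy hypersurface, and hence that $(Z,g)$ is globally hyperbolic.

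The only point that is not a routine manipulation of the causal inequality is the passage from ``$c$ has finite $h$-length near $t_+$'' to ``$c(t)$ converges in $M$'', which is exactly where completeness of $(M,h)$ enters (through Hopf--Rinow), together with the two uniform bounds that make this $h$-length controllable: Assumption (2) compares each $g_t$ with the single fixed complete metric $h$, and the upper bound in Assumption (1) keeps $\int N\,dt$ finite over finite time intervals. (Only the upper bound on $N$ is needed for this argument; the lower bound $N_1 < N$ plays no role here.) Everything else --- that $M_0$ is a closed hypersurface, the strict monotonicity of $t$ along causal curves, and the reduction of global hyperbolicity to the existence of a Cauchy hypersurface --- is standard.
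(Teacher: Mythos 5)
Your proof is correct. Note that the paper does not prove this statement at all --- it is imported verbatim from \cite[Theorem 2.1]{CC02} --- so there is no internal proof to compare against; your argument is essentially the standard one from that reference: the causal condition $g_t(\dot c,\dot c)\le N^2\dot t^2$ together with the upper bound $N<N_2$ and the lower bound $g_t\ge h$ forces the spatial projection of any causal curve to have finite $h$-length over finite time intervals, and completeness of $h$ then prevents the curve from escaping before exhausting all of $\R$ in the $t$-coordinate, making each slice a Cauchy hypersurface. Your observation that only the upper bound on $N$ is used is also accurate; the lower bound plays no role in this direction of the argument.
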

\begin{remark}
The assumptions in the above theorem are not needed for this section. However, most of these assumptions will be relevant for the abstract description in terms of families of spectral triples in \cref{sec:prod_spec_trip}. 
\end{remark}

\subsection{Spin structures}

Suppose that a product space(time) $(Z,g) = (M\times\R,g_\bullet + \epsilon_0 N^2dT^2)$ is equipped with a given 
topological spin structure $\til\GL^+(Z)$, and let $\Spin^+_g(Z)$ be the corresponding metric spin structure. 
We consider $M_t := (M\times\{t\},g_t)$ as a Riemannian submanifold of $Z$. 
As in \cref{eq:spin_restriction}, we obtain a topological spin structure $\til\GL^+(M_t)$ and a metric spin structure $\Spin_{g_t}(M_t)$ for each $t$, and we will always consider $M_t$ to be equipped with these spin structures. As mentioned after \cref{eq:spin_restriction}, the topological spin structure obtained by extending the structure group of $\Spin_{g_t}(M_t)$ to $\til\GL^+_n$ is identical to $\til\GL^+(M_t)$. 
Although the topological spin structure of $M_t$ is independent of the metric $g_t$, we emphasise that (in principle) it still depends on $t$ through the inclusion $M_t = M\times\{t\} \subset Z$. 
However, the topological spin structures $\til\GL^+(M_t)$ on $M$ are all equivalent. 
Indeed, for $t_0<t_1\in\R$, we can identify the fibres of $\til\GL^+(M_{t_0})$ and $\til\GL^+(M_{t_1})$ by parallel transport along the path $t\to(x,t)$ for $t\in[t_0,t_1]$. Since parallel transport on a principal bundle is compatible with the right action of the structure group, this yields a principal bundle isomorphism $\til\GL^+(M_{t_0}) \to \til\GL^+(M_{t_1})$. 

Conversely, suppose we have a smooth manifold $M$ with a topological spin structure $\til\GL^+(M)$, a smooth family $N_\bullet = \{N_t\}_{t\in\R}$ of strictly positive smooth functions on $M$, and a smooth family $g_\bullet = \{g_t\}_{t\in\R}$ of Riemannian metrics on $M$. 
We consider the product space(time) $(Z,g) := (M\times\R,g_\bullet + \epsilon_0 N^2dT^2)$. 
The spin structure on $Z$ can be reconstructed as follows.
Denote by $\pi\colon M\times\R\to M$ the canonical projection. Then the pullback bundle $\pi^*(\til\GL^+(M))$ is a principal $\til\GL^+_n$-bundle over $Z = M\times\R$. By enlarging the structure group, we can extend this pullback bundle to a principal $\til\GL^+_{n+1}$-bundle.

Now consider a product space(time) $(Z,g) := (M\times\R,g_\bullet + \epsilon_0 N^2dT^2)$ with given topological spin structure $\til\GL^+(Z)$. Let $\til\GL^+(M_0)$ be the topological spin structure on $M_0=M\times\{0\}$ obtained from \cref{eq:spin_restriction}, and consider the pullback bundle $\pi^*(\til\GL^+(M_0))$ over $Z = M\times\R$. 
We can reduce the principal $\til\GL^+_{n+1}$-bundle $\til\GL^+(Z)$ to a principal $\til\GL^+_n$-bundle $\bP\to Z$ by defining the fibres as
$$
\bP_{(x,t)} := \eta^{-1}(\iota(\GL^+(M_t)_x)) .
$$
Then $\til\GL^+(Z) = \bP \times_{\til\GL^+_n} \til\GL^+_{n+1}$. Since $\bP|_{M_t} = \eta^{-1}(\iota(\GL^+(M_t))) = \til\GL^+(M_t) \simeq \til\GL^+(M_0)$, we see that $\bP \simeq \pi^*(\til\GL^+(M_0))$. 
Therefore we have the equivalence
$$
\til\GL^+(Z) \simeq \pi^*(\til\GL^+(M_0)) \times_{\til\GL^+_n} \til\GL^+_{n+1} ,
$$
showing that the spin structure on $Z$ can be reconstructed from the spin structure on $M_0$ (up to equivalence). 
Thus, up to equivalence of the spin structures, we have explicitly constructed a bijection between product spin space(time)s $(Z,g,\til\GL^+(Z))$ and the `foliation data' $(M,g_\bullet,N_\bullet,\til\GL^+(M))$.

\subsection{Parallel transport}

Consider a product space(time) $(Z,g) = (M\times\R,g_\bullet + \epsilon_0 N^2dT^2)$ equipped with a given topological spin structure $\til\GL^+(Z)$.
On each hypersurface $M_t = M\times\{t\}$ we have the spinor bundles $\bS_t^\pm := \bS_{M_t}^\pm$, and for each $t\in\R$ we have $\bS_Z|_{M_t} = \bS_t^+\oplus\bS_t^-$. 
From \cref{eq:pos_inner_prod}, we have a positive-definite inner product $\la\cdot|\cdot\ra^\pos_{M_t}$ on $\Gamma_c^\infty(M_t,\bS_Z|_{M_t})$. We consider the Hilbert spaces $\mH_t := L^2(M_t,\bS_Z|_{M_t})$ and $\mH_t^\pm := L^2(M_t,\bS_t^\pm)$, and we note that we have the orthogonal direct sum decomposition $\mH_t = \mH_t^+ \oplus \mH_t^-$. 

For $x\in M$ and $t_0,t_1\in\R$, we can use parallel transport (with respect to the spin connection on $\bS_Z$) along the curve $t\mapsto (x,t)\in Z$ (i.e.\ an integral curve of the vector field $\nu$) to obtain a linear map $\tau_{t_0}^{t_1}\colon(\bS_{t_0}^\pm)_x\to(\bS_{t_1}^\pm)_x$, which is an isometry with respect to the canonical Hermitian structure on $\bS_Z$. 
Consequently, we also obtain a linear map between the Hilbert spaces $\mH_t^\pm := L^2(M_t,\bS_t^\pm)$ of square-integrable spinors on $M_t$. However, to obtain an isometry, we also need to take into account the change of the volume form $\dvol_{M_t}$. Let $\rho_t$ be the unique positive function on $M$ such that
\begin{align}
\label{eq:vol_function}
\dvol_{M_t} &= \rho_t^2 \dvol_{M_0} .
\end{align}
In lack of a better term, we will refer to $\rho_t$ as the \emph{volume function} of $M_t$. 
We note that $\rho_t$ depends not only on $(M_t,g_t)$, but also on the reference volume form of $(M_0,g_0)$. 
In local coordinates, we have $\rho_t = (|g_0|^{-1}|g_t|)^{\frac14}$, where $|g_t| := |\det(g_t)|$. 
We then define the maps $U_t\colon\mH_t\to\mH_0$ by 
\begin{align*}
(U_t\psi)(x) &:= \rho_t \tau_{t}^{0}\psi(x) . 
\end{align*}
The maps $U_t$ are linear (i.e.\ vector space) isomorphisms. Furthermore, each $U_t$ is an isometry with respect to the canonical inner product on $L^2(M_t,\bS_t)$:
\begin{align*}
\la U_t\phi|U_t\psi\ra_{M_0} &= \int_M (U_t\phi|U_t\psi\ra \dvol_{M_0} 
= \int_M ((|g_0|^{-1}|g_t|)^{\frac14} \tau_{t}^{0}\phi|(|g_0|^{-1}|g_t|)^{\frac14} \tau_{t}^{0}\psi) \sqrt{|g_0|} dx^n \\
&= \int_M (\phi|\psi) \sqrt{|g_t|} dx^n 
= \la\phi|\psi\ra_{M_t} .
\end{align*}

Let us write $\gamma_t(\nu) := \gamma_Z(\nu)|_{\bS_t}$. 
In the Lorentzian case, we recall that we have a positive-definite inner product on $L^2(M_t,\bS_t)$ given by the Hermitian structure $(\phi|\psi)^\pos := (\phi|\gamma_t(\nu)\psi)$ on $\bS_t$. 
We check if $U_t$ would also be an isometry with respect to this positive-definite inner product:
\begin{align*}
\la U_t\phi|U_t\psi\ra^\pos_{M_0} &= \int_M (U_t\phi|\gamma_0(\nu)U_t\psi) \dvol_{M_0} 
= \int_M (\phi|\tau_{0}^{t}\gamma_0(\nu)\tau_{t}^{0}\psi) \sqrt{|g_t|} dx^n 
= \la\phi|\gamma_t(\nu)\tau_{0}^{t}\gamma_0(\nu)\tau_{t}^{0}\psi\ra^\pos_{M_t} .
\end{align*}
Hence we see that $U_t$ gives a \emph{unitary} isomorphism $\mH_t\to\mH_0$ if and only if $\gamma_t(\nu) = \tau_{0}^{t}\gamma_0(\nu)\tau_{t}^{0}$, i.e.\ if and only if $\nu$ is \emph{geodesic}.

\begin{assumption}
We assume from now on that the unit normal vector field $\nu$ (which is orthogonal to the hypersurfaces $M_t$) is \emph{geodesic}, i.e.\ $\nabla_\nu\nu=0$. 
\end{assumption}

Locally we can always choose $\nu$ such that it is geodesic, but our assumption that we can do this \emph{globally} places a restriction on the class of space(time)s that we consider. Nevertheless, considering Lorentzian signature, we note that this assumption can be satisfied in interesting physical examples such as Schwarzschild spacetime or the Friedmann-Lema\^itre-Robertson-Walker spacetimes (see e.g.\ \cite[\S2.3]{Poisson04}). 

Since the grading operator on $\bS_Z$ is parallel, we know that the unitary isomorphisms $U_t$ preserve the decomposition $\bS_t = \bS_t^+ \oplus \bS_t^-$ (i.e., $U_t$ is of the form $U_t^+ \oplus U_t^-$, where $U_t^\pm := U_t|_{\bS_t^\pm}$). 
The Clifford multiplication with $\nu$, given by (see \cref{eq:Clifford_rep})
\[
\gamma_Z(\nu) = \mattwo{0}{i\tau_0\epsilon_0}{i\tau_0\epsilon_0}{0} ,
\]
implements the isomorphism $\bS_t^+ \simeq \bS_t^-$. 
Since $\nabla_\nu\nu=0$, we know that $\gamma_Z(\nu)$ commutes with $\nabla^{\bS_Z}_\nu$, so that the unitary isomorphisms $U_t^\pm$ are compatible with the identification $\bS_t^+ \simeq \bS_t^-$ (i.e., $U_t^+ \simeq U_t^-$). 

Consider now the space $C_c(\R,\mH_0)$ of continuous, compactly supported maps from $\R$ to the Hilbert space $\mH_0 := L^2(M_0,\bS_Z|_{M_0})$. Using the canonical inner product on $\mH_0$, we define a (possibly non-degenerate) inner product on $C_c(\R,\mH_0)$ by
\[
\la\phi|\psi\ra := \int_\R \la\phi(t)|\psi(t)\ra_{M_0} dt .
\]
In the Riemannian case, this inner product is positive-definite, and we write $\la\cdot|\cdot\ra^\pos := \la\cdot|\cdot\ra$. In the Lorentzian case, we introduce a positive-definite inner product given by 
\[
\la\phi|\psi\ra^\pos := \int_\R \la\phi(t)|\psi(t)\ra^\pos_{M_0} dt = \int_\R \la\phi(t)|\gamma_0(\nu)\psi(t)\ra_{M_0} dt .
\]
We denote by $L^2(\R,\mH_0)$ the completion of $C_c(\R,\mH_0)$ with respect to $\la\cdot|\cdot\ra^\pos$. We define a map $U\colon\Gamma_c(M\times\R,\bS_Z)\to C_c(\R,\mH_0)$ by 
\begin{align}
\label{eq:unitary}
(U\psi)(t) &:= N_t^{\frac12} \cdot U_t\psi|_{M_t} .
\end{align}
We check that $U$ is an isometry with respect to the canonical inner product:
\begin{align*}
\la U\phi|U\psi\ra &= \int_\R \big\la N_t^{\frac12} \cdot U_t\phi|_{M_t} \bigmvert N_t^{\frac12} \cdot U_t\psi|_{M_t} \big\ra_{M_0} dt 
= \int_\R \int_M N_t \big( U_t\phi|_{M_t} \bigmvert U_t\psi|_{M_t} \big) \dvol_{M_0} dt \\
&= \int_\R \int_M \big( \phi|_{M_t} \bigmvert \psi|_{M_t} \big) N_t \dvol_{M_t} dt 
= \int_{M\times\R} ( \phi | \psi ) \dvol_{M\times\R} 
= \la\phi|\psi\ra .
\end{align*}
In the Lorentzian case, since we assumed $\nu$ to be geodesic, $U$ is also an isometry with respect to the positive-definite inner product.

\subsection{The Dirac operator}
\label{sec:Dirac_decomp}

For each $t\in\R$, we have a Dirac operator $\sD_{M_t}$ on the hypersurface $M_t = M\times\{t\}$. This family of Dirac operators defines an operator $\til\sD_{M_\bullet}$ on $Z = M\times \R$, given by $\big(\til\sD_{M_\bullet} \psi\big)(x,t) := \big(\til\sD_{M_t}\psi|_{M_t}\big)(x)$ for any $\psi\in\Gamma_c^\infty(M\times\R,\bS_Z)$. 
From \cref{eq:Dirac_fol} we then know that the (canonical) Dirac operator $\sD_Z$ on $\Gamma_c^\infty(M\times\R,\bS_Z)$ decomposes as
\[
\sD_Z = \epsilon_0 \gamma_Z(\nu) \nabla^{\bS_Z}_{\nu} - \tau_0 \gamma_Z(\nu) \til\sD_{M_\bullet} - \frac n2 H \gamma_Z(\nu) .
\]
We will express $\sD_Z$ as an operator on $C_c^\infty(\R,\mH_0)$ under the isomorphism $U$. 

The time derivative $\partial_t$ on $C_c^\infty(\R,\mH_0)$ is related to the covariant time derivative $\nabla^{\bS_Z}_{\partial_T}$ on $\Gamma_c^\infty(M\times\R,\bS_Z)$ as follows. (In our notation, we distinguish between the coordinate vector field $\partial_T$ on $Z$ and the differential operator $\partial_t$ on $L^2(\R,\mH_0)$.) 
For $\psi\in\Gamma_c^\infty(M\times\R,\bS_Z)$ we have 
\begin{align*}
\big( \partial_t \circ U \psi \big)(t) &= \lim_{\epsilon\to0} \epsilon^{-1} \big( N_{t+\epsilon}^{\frac12} U_{t+\epsilon}\psi|_{M_{t+\epsilon}} - N_t^{\frac12} U_t\psi|_{M_t} \big) 
= \lim_{\epsilon\to0} \epsilon^{-1} \big( N_{t+\epsilon}^{\frac12} \rho_{t+\epsilon} \tau_{t+\epsilon}^0 \psi|_{M_{t+\epsilon}} - N_t^{\frac12} \rho_t \tau_t^0 \psi|_{M_t} \big) \\
&= \tau_t^0 \Big( \lim_{\epsilon\to0} \epsilon^{-1} \big( \tau_{t+\epsilon}^t N_{t+\epsilon}^{\frac12} \rho_{t+\epsilon} \psi|_{M_{t+\epsilon}} - N_t^{\frac12} \rho_t \psi|_{M_t} \big) \Big) 
= \tau_t^0 \big( (\nabla^{\bS_Z}_{\partial_T} N_t^{\frac12} \rho_t \psi)(t) \big) \\
&= \big( U \circ N_t^{-\frac12} \rho_t^{-1} \circ \nabla^{\bS_Z}_{\partial_T} \circ N_t^{\frac12} \rho_t \psi \big)(t) .
\end{align*}
Hence we have $\partial_t = U\circ N_t^{-\frac12}\rho_t^{-1}\circ\nabla^{\bS_Z}_{\partial_T}\circ N_t^{\frac12}\rho_t\circ U^{-1}$ on $C_c(\R,\mH_0)$. Rewriting this, and using $\nu = N^{-1}\partial_T$, we obtain
\begin{align}
\label{eq:time_derivative}
U \nabla^{\bS_Z}_\nu U^{-1} &= N_t^{-\frac12} \rho_t \partial_t \rho_t^{-1} N_t^{-\frac12} .
\end{align}
We define
\begin{align}
\label{eq:fam_Dirac}
\til\sD_\bullet &:= U \til\sD_{M_\bullet} U^{-1} = U \mattwo{\sD_{M_\bullet}}{0}{0}{-\sD_{M_\bullet}} U^{-1} =: \mattwo{\sD_\bullet}{0}{0}{-\sD_\bullet},
\end{align}
We note that the last equality relies on $\nu$ being geodesic, which ensures that the parallel transports on $\bS_Z^+$ and $\bS_Z^-$ are compatible with their mutual identification via $\gamma_Z(\nu)$. 
We view $\til\sD_\bullet$ as a family of operators $\{\til\sD_t\}_{t\in\R}$ on $\mH_0$. 

By definition, we have $nH := \tr^M(W) = \sum_{j=1}^n g(e_j,W(e_j))$. Using our assumption that $\nabla_\nu\nu = 0$, we therefore find (writing $e_0=\nu$)
\[
\Div \nu = \sum_{j=0}^n g(e_j,\nabla_{e_j}\nu) = - \epsilon_0 \sum_{j=1}^n g(e_j,W(e_j)) = - \epsilon_0 \tr^M(W) = - \epsilon_0 nH .
\]
Furthermore, in terms of local coordinates given by $x^0=T$ and coordinates $x^j$ on $M$, we can also calculate 
\begin{align*}
\Div\nu|_{M_t} &= \sqrt{|g|}^{-1} \partial_t (\sqrt{|g|} N_t^{-1}) = N_t^{-1} \sqrt{|g_t|}^{-1} \partial_t \big(\sqrt{|g_t|}\big) = 2 N_t^{-1} |g_t|^{-\frac14} \partial_t \big(|g_t|^{\frac14}\big) = 2 N_t^{-1} \rho_t^{-1} (\partial_t \rho_t) .
\end{align*}
Combining these equalities we obtain
\begin{align}
\label{eq:rho_H}
- \epsilon_0 \frac n2 H_t &= N_t^{-1} \rho_t^{-1} (\partial_t \rho_t) .
\end{align}
By combining \cref{eq:time_derivative,eq:rho_H}, we see that
\begin{align}
\label{eq:time_derivative2}
U \Big( \epsilon_0 \nabla^{\bS_Z}_\nu - \frac n2 H_t \Big) U^{-1} = \epsilon_0 N_t^{-\frac12} \big( \rho_t \partial_t \rho_t^{-1} + [\partial_t,\rho_t] \rho_t^{-1} \big) N_t^{-\frac12} = \epsilon_0 N_t^{-\frac12} \partial_t N_t^{-\frac12} .
\end{align}
By inserting \cref{eq:fam_Dirac,eq:time_derivative2} into \cref{eq:Dirac_fol}, we obtain:
\begin{prop}
\label{prop:Dirac_decomp}
Let $(Z,g) = (M\times\R,g_\bullet + \epsilon_0 N^2dT^2)$ be a product space(time), such that the unit normal vector field $\nu$ is geodesic. 
Under the isomorphism $U\colon L^2(M\times\R,\bS_Z)\to L^2(\R,\mH_0)$ from \cref{eq:unitary}, 
the canonical Dirac operator $\sD_Z$ on $L^2(Z,\bS_Z)$ is given by 
\begin{align*}
U \sD_Z U^{-1} &= \gamma_0(\nu) \left( \epsilon_0 N_\bullet^{-\frac12} \partial_t N_\bullet^{-\frac12} - \tau_0 \til\sD_\bullet \right) = \mattwo{0}{ i\tau_0 N_\bullet^{-\frac12} \partial_t N_\bullet^{-\frac12} + i \sD_\bullet }{i\tau_0 N_\bullet^{-\frac12} \partial_t N_\bullet^{-\frac12} - i \sD_\bullet }{0} . \nonumber
\end{align*}
\end{prop}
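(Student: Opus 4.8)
The plan is to conjugate, by the isometry $U$, the decomposition of $\sD_Z$ that has already been established in \cref{sec:Dirac_decomp}: applying \cref{eq:Dirac_fol} on each leaf $M_t$ and letting $t$ vary, one has on $\Gamma_c^\infty(M\times\R,\bS_Z)$ the identity
\begin{align*}
\sD_Z = \epsilon_0 \gamma_Z(\nu) \nabla^{\bS_Z}_\nu - \tau_0 \gamma_Z(\nu) \til\sD_{M_\bullet} - \frac n2 H \gamma_Z(\nu)
= \gamma_Z(\nu)\Big( \epsilon_0 \nabla^{\bS_Z}_\nu - \frac n2 H \Big) - \tau_0 \gamma_Z(\nu) \til\sD_{M_\bullet} ,
\end{align*}
where the second form uses that the scalar multiplication operator $\frac n2 H$ commutes with $\gamma_Z(\nu)$. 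Since $U$ is unitary (in both signatures, once $\nu$ is assumed geodesic), it suffices to compute the conjugate of each factor and reassemble.

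The first thing to check is that $U$ intertwines Clifford multiplication by $\nu$ with $\gamma_0(\nu)$. Indeed, $U$ is built from the scalar factors $\rho_t$ and $N_t^{\frac12}$ together with parallel transport along the integral curves of $\nu$; since $\nu$ is geodesic, $\gamma_Z(\nu)$ commutes with $\nabla^{\bS_Z}_\nu$, hence with this parallel transport, and trivially with the scalars, so that $U\gamma_Z(\nu)U^{-1} = \gamma_0(\nu)$, where $\gamma_0(\nu) = \mattwo{0}{i\tau_0\epsilon_0}{i\tau_0\epsilon_0}{0}$ by \cref{eq:Clifford_rep}. Substituting this, together with \cref{eq:time_derivative2} (which gives $U\big(\epsilon_0 \nabla^{\bS_Z}_\nu - \frac n2 H\big)U^{-1} = \epsilon_0 N_\bullet^{-\frac12}\partial_t N_\bullet^{-\frac12}$) and \cref{eq:fam_Dirac} (which gives $U\til\sD_{M_\bullet}U^{-1} = \til\sD_\bullet = \mattwo{\sD_\bullet}{0}{0}{-\sD_\bullet}$), into the conjugated identity yields the first asserted formula
\begin{align*}
U\sD_Z U^{-1} = \gamma_0(\nu)\Big( \epsilon_0 N_\bullet^{-\frac12}\partial_t N_\bullet^{-\frac12} - \tau_0 \til\sD_\bullet \Big) .
\end{align*}

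Finally I would expand this into block form: using $\gamma_0(\nu) = \mattwo{0}{i\tau_0\epsilon_0}{i\tau_0\epsilon_0}{0}$ together with $\tau_0^2 = \epsilon_0$ and $\epsilon_0^2 = 1$, the term $\gamma_0(\nu)\cdot\epsilon_0 N_\bullet^{-\frac12}\partial_t N_\bullet^{-\frac12}$ produces the off-diagonal entries $i\tau_0 N_\bullet^{-\frac12}\partial_t N_\bullet^{-\frac12}$, while $-\tau_0\gamma_0(\nu)\til\sD_\bullet$ produces the off-diagonal entries $\pm i\sD_\bullet$; adding them gives the second asserted formula. I do not anticipate a genuine obstacle here: all the analytic content (the computation of $U\nabla^{\bS_Z}_\nu U^{-1}$ and of the mean-curvature correction) was already carried out in reaching \cref{eq:time_derivative2,eq:fam_Dirac}, so what remains is purely algebraic. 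The one point that must be handled with care is the identity $U\gamma_Z(\nu)U^{-1} = \gamma_0(\nu)$, which is precisely where the standing assumption that $\nu$ is geodesic enters — this same assumption is also what makes $\til\sD_\bullet$ block-diagonal in \cref{eq:fam_Dirac}.
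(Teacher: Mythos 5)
Your proposal is correct and follows essentially the same route as the paper: the paper's proof consists precisely of inserting \cref{eq:fam_Dirac,eq:time_derivative2} into \cref{eq:Dirac_fol} and reading off the block form, with the geodesic assumption on $\nu$ entering exactly where you say it does (the intertwining $U\gamma_Z(\nu)U^{-1}=\gamma_0(\nu)$ and the block-diagonality of $\til\sD_\bullet$). Your write-up merely makes explicit two steps the paper leaves implicit, namely the conjugation of $\gamma_Z(\nu)$ and the final matrix algebra using $\tau_0^2=\epsilon_0$, both of which check out.
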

In particular, the Dirac operator $\sD_Z$ can be completely obtained from the families of operators $\{\sD_t\}_{t\in\R}$ and $\{N_t\}_{t\in\R}$. 
This decomposition of the Dirac operator will serve as our motivation for the abstract framework which we develop in the following section. 

\begin{remark}
We note that the volume function has completely disappeared from the above decomposition of the Dirac operator $\sD_Z$. In particular, we do not need the volume function to reconstruct $\sD_Z$. This shouldn't be too surprising; indeed, 
the volume function $\rho_t$ on $M_t$ is not an independent object, but in fact completely determined by the Dirac operator $\sD_t$. An explicit expression can be obtained using the Wodzicki residue density, as follows.
Note that the principal symbol of $\sD_t^2$ is equal to the principal symbol of the Laplacian on $M_t$. From (the proof of) \cite[Proposition 7.7]{GVF01}, we then know that the local Wodzicki residue density $\wres_x\big(|\sD_t|^{-n}\big) d^nx$ is equal to the volume form $\dvol_{g_t} = \sqrt{|g_t|} d^nx$ (up to a constant factor depending only on the dimension $n$). 
The volume function is therefore given by 
\[
\rho_t(x) = \left( \frac{\sqrt{|g_t(x)|}}{\sqrt{|g_0(x)|}} \right)^{\frac12} = \left( \frac{\wres_x\big(|\sD_t|^{-n}\big)}{\wres_x\big(|\sD_0|^{-n}\big)} \right)^{\frac12} .
\]
\end{remark}

\section{Product spectral triples}
\label{sec:prod_spec_trip}

\subsection{Families of operators}

Recall that a family of bounded operators $\{B_t\}_{t\in\R}$ on a Hilbert space $\mH$ is called \emph{strongly continuous} if $B_t\psi$ is (norm-)continuous in $t$ for each $\psi\in\mH$. Similarly, we say that $\{B_t\}_{t\in\R}$ is \emph{weakly continuous} if $\la\xi|B_t\psi\ra$ is continuous in $t$ for all $\xi,\psi\in\mH$. 

We say that $\{B_t\}_{t\in\R}$ is \emph{strongly differentiable} if there exists a strongly continuous family of bounded operators $\{(\partial B)_t\}_{t\in\R}$ such that $\partial_t(B_t\psi) = (\partial B)_t \psi$ for any $\psi\in\mH$. Similarly, we say that $\{B_t\}_{t\in\R}$ is \emph{weakly differentiable} if there exists a weakly continuous family of bounded operators $\{(\partial B)_t\}_{t\in\R}$ such that $\partial_t\big(\la\xi|B_t\psi\ra\big) = \la\xi|(\partial B)_t \psi\ra$ for any $\xi,\psi\in\mH$. 
If no confusion arises, we sometimes write $\partial_tB_t = (\partial B)_t$. 

We point out that all statements below also apply to a family of operators $B_t\colon\mH_1\to\mH_2$ between two different Hilbert spaces, by viewing $B_t$ as an operator on $\mH:=\mH_1\oplus\mH_2$. 

\begin{lem}
\label{lem:weak_cont_loc_bdd}
If $\{B_t\}_{t\in\R}$ is weakly continuous, then it is locally bounded. 
\end{lem}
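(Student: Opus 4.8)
The plan is to deduce local boundedness of a weakly continuous family $\{B_t\}_{t\in\R}$ from the uniform boundedness principle, applied on a suitable compact interval. The statement ``locally bounded'' should be read as: for every $t_0\in\R$ there is a neighbourhood (equivalently, a compact interval $[a,b]\ni t_0$) on which $\sup_{t\in[a,b]}\|B_t\|<\infty$. So fix a compact interval $K=[a,b]\subset\R$ and aim to bound $\sup_{t\in K}\|B_t\|$.

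First I would fix $\psi\in\mH$ and consider the family of bounded linear functionals $\{\xi\mapsto\la\xi|B_t\psi\ra\}_{t\in K}$ on $\mH$. Weak continuity tells us that for each fixed $\xi$ the map $t\mapsto\la\xi|B_t\psi\ra$ is continuous on the compact set $K$, hence bounded: $\sup_{t\in K}|\la\xi|B_t\psi\ra|<\infty$. Since this holds for every $\xi\in\mH$, the uniform boundedness principle (applied to the family of functionals indexed by $t\in K$, which is a pointwise-bounded family of bounded operators $\mH\to\C$) yields $\sup_{t\in K}\|\la\,\cdot\,|B_t\psi\ra\| = \sup_{t\in K}\|B_t\psi\| < \infty$. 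Thus the family $\{B_t\}_{t\in K}$ is pointwise bounded on $\mH$: for each $\psi$, $\sup_{t\in K}\|B_t\psi\|<\infty$.

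Next I would apply the uniform boundedness principle a second time, now to the family of bounded operators $\{B_t\colon\mH\to\mH\}_{t\in K}$, which we have just shown to be pointwise bounded. The conclusion is $\sup_{t\in K}\|B_t\|<\infty$. Since $K$ was an arbitrary compact interval, and every point of $\R$ has a compact neighbourhood, this establishes that $\{B_t\}_{t\in\R}$ is locally bounded.

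The argument is essentially two nested applications of Banach--Steinhaus, so there is no serious obstacle; the only point worth stating carefully is that weak continuity on a compact interval gives the pointwise \emph{scalar} bounds that feed the first application, and completeness of $\mH$ is what makes Banach--Steinhaus available both times. (One can streamline this: a single application of the uniform boundedness principle to $\{B_t\}_{t\in K}$ suffices once one knows pointwise boundedness in norm, so the real content is upgrading ``$t\mapsto\la\xi|B_t\psi\ra$ continuous'' to ``$t\mapsto\|B_t\psi\|$ bounded on $K$'', which is exactly the first application above.)
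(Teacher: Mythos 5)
Your proof is correct and follows the same route as the paper: weak continuity gives pointwise scalar bounds on a compact interval, and the uniform boundedness principle (which the paper invokes in one stroke, and you spell out as two nested applications) then yields the local norm bound.
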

\begin{proof}
Let $[t_0,t_1]$ be a bounded interval in $\R$. For $\xi,\psi\in\mH$ we know that $\la\xi|B_t\psi\ra$ is continuous, so in particular $\sup_{t\in[t_0,t_1]} \big|\la\xi|B_t\psi\ra\big| < \infty$. The uniform boundedness principle then implies that $\sup_{t\in[t_0,t_1]} \|B_t\| < \infty$.
\end{proof}

Given a strongly continuous family of operators $\{B_t\}_{t\in\R}$ on $\mH$, we define the operator $B_\bullet$ on the Hilbert $C_0(\R)$-module $C_0(\R,\mH)$ by $(B_\bullet\psi)(t) := B_t\psi(t)$. The strong continuity of $\{B_t\}_{t\in\R}$ implies that $B_\bullet$ is well-defined on the initial domain $C_c(\R,\mH)$. If  $\{B_t\}_{t\in\R}$ is \emph{strictly} continuous (i.e.\ the family of adjoints $\{B_t^*\}_{t\in\R}$ is also strongly continuous), then the adjoint $B_\bullet^*$ is also well-defined on $C_c(\R,\mH)$ (so $B_\bullet$ is a \emph{semi-regular} operator). If furthermore $\{B_t\}_{t\in\R}$ is globally bounded (i.e.\ if $\sup_{t\in\R}\|B_t\|<\infty$), then $B_\bullet$ is an adjointable endomorphism on $C_0(\R,\mH)$. 

\begin{lem}
\label{lem:weak_diff_fam}
If $\{B_t\}_{t\in\R}$ is a weakly differentiable family of bounded operators on $\mH$, then the following statements hold.
\begin{enumerate}
\item The family $\{B_t\}_{t\in\R}$ is norm-continuous. 
\item For any $\xi,\psi\in C^1(\R,\mH)$, we have $\la\xi(\cdot)|B_\bullet\psi(\cdot)\ra \in C^1(\R)$. 
\item Let $\{A_t\}_{t\in\R}$ and $\{C_t\}_{t\in\R}$ be strongly differentiable on $\mH$. Then $\{A_t^*B_tC_t\}_{t\in\R}$ is weakly differentiable. 
\end{enumerate}
\end{lem}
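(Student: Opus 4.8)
The plan is to prove the three claims of \cref{lem:weak_diff_fam} essentially in the stated order, treating each as a refinement of the weak-differentiability hypothesis.

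For claim 1), the key point is that weak differentiability at $t$ gives, for fixed $\xi,\psi$, that $\la\xi|B_t\psi\ra$ is differentiable and hence continuous in $t$; so $\{B_t\}$ is at least weakly continuous, and by \cref{lem:weak_cont_loc_bdd} it is locally bounded. To upgrade to norm-continuity I would fix $t_0$ and a bounded interval $I$ around it, and estimate $\|(B_t - B_{t_0})\psi\|$ for $\psi\in\mH$. Writing $\la\xi|(B_t-B_{t_0})\psi\ra = \int_{t_0}^t \la\xi|(\partial B)_s\psi\ra\,ds$ (valid because $s\mapsto\la\xi|B_s\psi\ra$ is $C^1$ with derivative $\la\xi|(\partial B)_s\psi\ra$), one gets $|\la\xi|(B_t-B_{t_0})\psi\ra| \le |t-t_0|\,\sup_{s\in I}\|(\partial B)_s\|\,\|\xi\|\,\|\psi\|$; taking the supremum over unit vectors $\xi$ yields $\|(B_t-B_{t_0})\psi\|\le |t-t_0|\,\sup_{s\in I}\|(\partial B)_s\|\,\|\psi\|$, and since $\{(\partial B)_s\}$ is weakly continuous it is locally bounded by \cref{lem:weak_cont_loc_bdd}, so the supremum is finite. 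Taking the supremum over unit $\psi$ gives $\|B_t - B_{t_0}\| \le |t-t_0|\sup_{s\in I}\|(\partial B)_s\|$, which is in fact local Lipschitz continuity, hence norm-continuity.

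For claim 2), given $\xi,\psi\in C^1(\R,\mH)$, I want to differentiate $f(t):=\la\xi(t)|B_t\psi(t)\ra$. The natural candidate for the derivative is $\la\xi'(t)|B_t\psi(t)\ra + \la\xi(t)|(\partial B)_t\psi(t)\ra + \la\xi(t)|B_t\psi'(t)\ra$. I would verify this by the usual three-term splitting of the difference quotient $h^{-1}(f(t+h)-f(t))$, using norm-continuity of $\{B_t\}$ (from claim 1) together with local boundedness of $\{B_t\}$ and $\{(\partial B)_t\}$ to control the cross terms, and using weak differentiability for the middle term $h^{-1}\la\xi(t)|(B_{t+h}-B_t)\psi(t)\ra \to \la\xi(t)|(\partial B)_t\psi(t)\ra$ — here one should be slightly careful that weak differentiability a priori only gives this limit for fixed vectors, but since $\xi(t),\psi(t)$ are fixed once $t$ is fixed, this is exactly what is needed. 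Continuity of the resulting derivative follows from norm-continuity of $\{B_t\}$, weak continuity of $\{(\partial B)_t\}$, and continuity of $\xi,\xi',\psi,\psi'$.

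For claim 3), set $D_t := A_t^* B_t C_t$. The candidate derivative is $(\partial A)_t^* B_t C_t + A_t^* (\partial B)_t C_t + A_t^* B_t (\partial C)_t$, which defines a weakly continuous family (each factor is either strongly continuous, or norm-continuous by claim 1, or weakly continuous, and products of strongly/norm-continuous with weakly continuous families are weakly continuous, with local boundedness supplied by \cref{lem:weak_cont_loc_bdd}). To check it is the weak derivative, fix $\xi,\psi\in\mH$ and write $\la\xi|D_t\psi\ra = \la A_t\xi \mid B_t C_t\psi\ra$; then $\la\xi|D_{t+h}\psi\ra - \la\xi|D_t\psi\ra$ splits as $\la (A_{t+h}-A_t)\xi \mid B_{t+h}C_{t+h}\psi\ra + \la A_t\xi \mid (B_{t+h}-B_t)C_{t+h}\psi\ra + \la A_t\xi \mid B_t (C_{t+h}-C_t)\psi\ra$. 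Dividing by $h$ and letting $h\to0$: the first term uses strong differentiability of $\{A_t\}$ (so $h^{-1}(A_{t+h}-A_t)\xi\to (\partial A)_t\xi$ in norm) plus local boundedness of $\{B_t\},\{C_t\}$ and strong continuity of $C_t\psi$; the third term similarly uses strong differentiability of $\{C_t\}$; the middle term is $h^{-1}\la A_t\xi \mid (B_{t+h}-B_t)C_{t+h}\psi\ra$, which I would handle by replacing $C_{t+h}\psi$ by $C_t\psi$ (error controlled by $\|B_{t+h}-B_t\|\to0$ and $\|(C_{t+h}-C_t)\psi\|\to0$, using claim 1 and local boundedness) and then applying weak differentiability of $\{B_t\}$ to the fixed vectors $A_t\xi$ and $C_t\psi$.

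The main obstacle is that weak differentiability is a pointwise-in-vectors hypothesis, so in every cross-term argument one must reduce the ``moving'' vector to a fixed one before invoking it; the technical engine that makes this work is the local Lipschitz bound $\|B_{t+h}-B_t\|\le |h|\sup_I\|(\partial B)_s\|$ established in claim 1, which converts weak differentiability into genuine norm-continuity and thereby controls all the error terms uniformly. Everything else is the standard product-rule bookkeeping.
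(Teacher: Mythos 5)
Your proposal is correct and follows essentially the same route as the paper: claim 1) via the fundamental theorem of calculus and local boundedness of $\{(\partial B)_t\}$ (the paper simply cites \cite[Remark 8.4, 2.]{KL13} for this), claim 2) via the product-rule splitting whose only nontrivial remainder is the cross term $\epsilon^{-1}(B_{t+\epsilon}-B_t)(\psi(t+\epsilon)-\psi(t))$, killed by the norm-continuity from claim 1), and claim 3) via the identity $\la\xi|A_t^*B_tC_t\psi\ra=\la A_t\xi|B_tC_t\psi\ra$ together with the $C^1$-ness of $t\mapsto A_t\xi$ and $t\mapsto C_t\psi$. The only cosmetic difference is that the paper deduces claim 3) directly from claim 2) rather than redoing the three-term splitting.
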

\begin{proof}
\begin{enumerate}
\item 
The proof is as in \cite[Remark 8.4, 2.]{KL13}, using the local boundedness of $(\partial B)_t$.

\item For simplicity, suppose that $\xi$ is constant as a function of $t$. Then we calculate
\begin{align*}
&\big\| \big\la\xi \bigmvert \partial_t(B_\bullet\psi)(t) - B_t\partial_t\psi(t) - (\partial B)_t\psi(t) \big\ra \big\| \\
&\qquad= \lim_{\epsilon\to0} \epsilon^{-1} \big\| \big\la\xi \bigmvert (B_\bullet\psi)(t+\epsilon) - (B_\bullet\psi)(t) - B_t(\psi(t+\epsilon)-\psi(t)) - (B_{t+\epsilon} - B_t) \psi(t) \big\ra \big\| \\
&\qquad= \lim_{\epsilon\to0} \epsilon^{-1} \big\| \big\la\xi \bigmvert B_{t+\epsilon}\psi(t+\epsilon) + B_t\psi(t) - B_t\psi(t+\epsilon) - B_{t+\epsilon}\psi(t) \big\ra \big\| \\
&\qquad= \lim_{\epsilon\to0} \epsilon^{-1} \big\| \big\la\xi \bigmvert (B_{t+\epsilon}-B_t) (\psi(t+\epsilon) - \psi(t)) \big\ra \big\| \\
&\qquad\leq \lim_{\epsilon\to0} \|\xi\| \; \big\| (B_{t+\epsilon}-B_t) \big\| \; \epsilon^{-1} \big\| \psi(t+\epsilon) - \psi(t) \big\| \\
&\qquad= \|\xi\| \cdot 0 \cdot \|\partial_t\psi(t)\| = 0 ,
\end{align*}
where on the last line we used the norm-continuity of $B_t$ from the first statement. 
Hence we see that $\la\xi|\partial_t(B_\bullet\psi)(t)\ra = \la\xi|B_t\partial_t\psi(t) + (\partial B)_t\psi(t)\ra$. 
Thus we have proven that $\la\xi|B_t\psi(t)\ra$ is differentiable. The case of $\xi\in C^1(\R,\mH)$ can then be obtained with similar arguments. 

\item For any $\xi,\psi\in\mH$ we have $\la\xi|A_t^*B_tC_t\psi\ra = \la A_t\xi|B_tC_t\psi\ra$. By assumption, $A_t\xi$ and $C_t\psi$ are differentiable, and it then follows from the second statement that $\la\xi|A_t^*B_tC_t\psi\ra$ is differentiable, 
and we have 
\begin{align*}
\partial_t\big(\la\xi|A_t^*B_tC_t\psi\ra\big) &= \big\la\xi \bigmvert \big( (\partial A)_t^* B_t C_t + A_t^* (\partial B)_t C_t + A_t^* B_t (\partial C)_t \big) \psi\big\ra .
\qedhere
\end{align*}
\end{enumerate}
\end{proof}

\begin{lem}
\label{lem:strong_diff_fam}
If $\{B_t\}_{t\in\R}$ is a strongly differentiable family of bounded operators on $\mH$, then the following statements hold.
\begin{enumerate}
\item For any $\psi\in C^1(\R,\mH)$ we have $B_\bullet\psi\in C^1(\R,\mH)$. 
\item The closure of the operator $(\partial B)_\bullet$ on $C_c(\R,\mH)$ equals the closure of the commutator $[\partial_t,B_\bullet]$ on $C_c^1(\R,\mH)$. 
\end{enumerate}
\end{lem}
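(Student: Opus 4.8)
The plan is to prove the two statements in order, with the first supplying the computational input for the second.

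\emph{Statement 1.} Fix $\psi\in C^1(\R,\mH)$ and $t\in\R$, and expand the difference quotient by the standard splitting
\[
\epsilon^{-1}\big( B_{t+\epsilon}\psi(t+\epsilon) - B_t\psi(t) \big) = B_{t+\epsilon}\big( \epsilon^{-1}(\psi(t+\epsilon)-\psi(t)) \big) + \big( \epsilon^{-1}(B_{t+\epsilon}-B_t) \big) \psi(t) .
\]
Strong differentiability of $\{B_t\}$ forces it to be strongly continuous (take $\epsilon\to0$ in $B_{t+\epsilon}\psi-B_t\psi = \epsilon\cdot\epsilon^{-1}(B_{t+\epsilon}-B_t)\psi$), hence locally bounded via the uniform boundedness principle (the argument of \cref{lem:weak_cont_loc_bdd}, since strong continuity implies weak continuity). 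The first term then converges to $B_t\psi'(t)$: write it as $B_{t+\epsilon}\big(\epsilon^{-1}(\psi(t+\epsilon)-\psi(t))-\psi'(t)\big) + B_{t+\epsilon}\psi'(t)$ and use local boundedness on the first piece and strong continuity on the second. The second term converges to $(\partial B)_t\psi(t)$ by definition. Thus $B_\bullet\psi$ is differentiable with $(B_\bullet\psi)'(t) = B_t\psi'(t) + (\partial B)_t\psi(t)$, and the same splitting trick applied to $B_t\psi'(t)-B_s\psi'(s)$ and to $(\partial B)_t\psi(t)-(\partial B)_s\psi(s)$ — using local boundedness and strong continuity of $\{B_t\}$ and $\{(\partial B)_t\}$ together with continuity of $\psi,\psi'$ — shows this derivative is continuous. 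Hence $B_\bullet\psi\in C^1(\R,\mH)$.

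\emph{Statement 2.} First I would note that on $C_c^1(\R,\mH)$ the two operators agree: if $\psi\in C_c^1(\R,\mH)$ then $B_\bullet\psi\in C^1(\R,\mH)$ by statement 1, and $(B_\bullet\psi)(t)=B_t\psi(t)$ vanishes wherever $\psi(t)$ does, so $B_\bullet\psi\in C_c^1(\R,\mH)\subset\Dom(\partial_t)$; the formula for $(B_\bullet\psi)'$ then gives $[\partial_t,B_\bullet]\psi = \partial_t(B_\bullet\psi)-B_\bullet(\partial_t\psi) = (\partial B)_\bullet\psi$. Hence the closure of $[\partial_t,B_\bullet]|_{C_c^1}$ equals the closure of $(\partial B)_\bullet|_{C_c^1}$, which is contained in the closure of $(\partial B)_\bullet|_{C_c}$. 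For the reverse inclusion I must show $C_c^1(\R,\mH)$ is a core for $(\partial B)_\bullet$ with initial domain $C_c(\R,\mH)$. Given $\psi\in C_c(\R,\mH)$, mollify: set $\psi_n := \psi * \phi_n$ for a standard mollifier $\phi_n\in C_c^\infty(\R)$ supported in $[-1/n,1/n]$, so that $\psi_n\in C_c^\infty(\R,\mH)$, all $\psi_n$ and $\psi$ are supported in a fixed compact $K$, and $\psi_n\to\psi$ uniformly (hence in $L^2(\R,\mH)$) by uniform continuity of $\psi$. Since $\{(\partial B)_t\}$ is strongly continuous, hence locally bounded on $K$, one gets $\sup_{t\in K}\|(\partial B)_t(\psi_n(t)-\psi(t))\|\to0$, so $(\partial B)_\bullet\psi_n\to(\partial B)_\bullet\psi$ in $L^2(\R,\mH)$ as well. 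Thus $\psi$ lies in the graph-norm closure of $(\partial B)_\bullet|_{C_c^1}$, which completes the argument.

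I expect the genuine work to be in the core argument of statement 2 — checking that the mollified functions converge in the graph norm of $(\partial B)_\bullet$, which hinges on upgrading strong continuity of $\{(\partial B)_t\}$ to local operator-norm bounds so that the \emph{uniform} (not merely pointwise) convergence $\psi_n\to\psi$ survives application of the operator family over the compact support. Statement 1 should be essentially routine once strong continuity and local boundedness of $\{B_t\}$ are established.
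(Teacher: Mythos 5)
Your proof is correct. For statement 1 you expand the difference quotient by the product-rule splitting and control the two terms using strong continuity plus local boundedness (via the uniform boundedness principle), whereas the paper instead isolates the second difference $(B_{t+\epsilon}-B_t)(\psi(t+\epsilon)-\psi(t))$ and kills it using the \emph{norm}-continuity of $t\mapsto B_t$ obtained in \cref{lem:weak_diff_fam}; your decomposition avoids invoking norm-continuity at the cost of a slightly longer estimate, and both routes land on the same Leibniz formula $\partial_t(B_\bullet\psi)=B_\bullet\partial_t\psi+(\partial B)_\bullet\psi$, from which continuity of the derivative follows as you note. For statement 2 the paper only records that the two operators agree on $C_c^1(\R,\mH)$ and then appeals to density of $C_c^1(\R,\mH)$; your mollification argument, which shows that $C_c^1(\R,\mH)$ is actually a core for $(\partial B)_\bullet$ with initial domain $C_c(\R,\mH)$ (using that fixed compact supports together with local boundedness of $\{(\partial B)_t\}$ upgrade uniform convergence of $\psi_n\to\psi$ to graph-norm convergence), is precisely the detail needed to justify that density appeal, so on this point your version is more complete than the paper's.
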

\begin{proof}
\begin{enumerate}
\item 
The argument is as in the proof of \cref{lem:weak_diff_fam}.2). In this case, for $\psi\in C^1(\R,\mH)$ we have
\begin{align*}
&\big\| \partial_t(B_\bullet\psi)(t) - B_t\partial_t\psi(t) - (\partial B)_t\psi(t) \big\| \leq 
\lim_{\epsilon\to0} \big\| (B_{t+\epsilon}-B_t) \big\| \; \epsilon^{-1} \big\| \psi(t+\epsilon) - \psi(t) \big\| 
= 0 \cdot \|\partial_t\psi(t)\| = 0 .
\end{align*}
Hence $\partial_t(B_\bullet\psi) = B_\bullet\partial_t\psi + (\partial B)_\bullet\psi \in C(\R,\mH)$. 

\item By the proof of the first statement, the commutator $[\partial_t,B_\bullet]$ is well-defined for $\psi\in C_c^1(\R,\mH)$ and given by 
$[\partial_t,B_\bullet]\psi = (\partial B)_\bullet\psi$. 
In particular, the operator $[\partial_t,B_\bullet]$ restricts to a well-defined bounded operator $(\partial B)_t$ on $\mH$ for each $t\in\R$, and these operators are strongly continuous (hence locally bounded). Since $C_c^1(\R,\mH)$ is dense in $C_0(\R,\mH)$, the statement follows. 
\end{enumerate}
\end{proof}

\subsection{Families of spectral triples}

A family of representations $\{\pi_t\}_{t\in\R}$ of a $C^*$-algebra $A$ on a Hilbert space $\mH$ is called \emph{strongly/weakly continuous/differentiable}, if the function $t\mapsto\pi_t(a)\in B(\mH)$ is strongly/weakly continuous/differentiable for every $a\in A$. 
Let $A\odot C_0(\R)$ denote the algebraic tensor product. 
Given a strongly continuous family of representations $\{\pi_t\}_{t\in\R}$, we define for any simple tensor $a\otimes f\in A\odot C_0(\R)$ the operator $\pi_\bullet(a\otimes f)$ on $C_0(\R,\mH)$ by 
\[
\big(\pi_\bullet(a\otimes f) \psi\big)(t) := f(t) \pi_t(a) \psi(t) .
\]
Since any representation of a $C^*$-algebra is norm-decreasing, we have $\|\pi_t\|\leq1$, and therefore $\pi_\bullet$ is bounded. 
Since $A\odot C_0(\R)$ is dense in $C_0(\R,A)$, $\pi_\bullet$ extends to a representation of $C_0(\R,A)$ on $C_0(\R,\mH)$.

\begin{defn}
\label{defn:fam_spec_trip}
A family of spectral triples $\{(\A,{}_{\pi_t}\mH,\D_t)\}_{t\in\R}$ is called \emph{weakly differentiable} if the following conditions are satisfied:
\begin{enumerate}
\item there exists another Hilbert space $W$ which is continuously and densely embedded in $\mH$ such that the inclusion map $\iota\colon W\hookrightarrow\mH$ is \emph{locally compact}, i.e.\ the composition $\pi_t(a)\circ\iota$ is compact for each $t\in\R$ and $a\in\A$; 
\item the domain of $\D_t$ is independent of $t$ and equals $W$, and the graph norm of $\D_t$ is uniformly equivalent to the norm of $W$ (i.e.\ there exist constants $C_1,C_2>0$ such that $C_1\|\xi\|_W \leq \|\xi\|_{\D_t} \leq C_2\|\xi\|_W$ for all $\xi\in W$ and all $t\in\R$);
\item the map $\D_\bullet\colon\R\to\B(W,\mH)$ is weakly differentiable, and its weak derivative is uniformly bounded;
\item the family of representations $\{\pi_t\}_{t\in\R}$ of $A$ on $\mH$ is weakly differentiable, and for each $a\in\A$ the family $\{\pi_t(a)\colon W\to W\}$ is strongly continuous. 
\end{enumerate}
\end{defn}

To avoid confusion, we will sometimes write $\|\cdot\|_W$ for the norm on $W$, and $\|\cdot\|_{W\to\mH}$ for the operator norm on $\B(W,\mH)$. The assumptions on the family of operators $\D_\bullet = \{\D_t\}_{t\in\R}$ are as in \cite[\S8]{KL13}. The above definition is very similar to the families of spectral triples studied in \cite[\S4.3]{vdDR16}, but the assumptions on the family of representations are slightly different here. 

\begin{lem}
\label{lem:comm_s-cts}
Given a weakly differentiable family of spectral triples $\{(\A,{}_{\pi_t}\mH,\D_t)\}_{t\in\R}$, the commutator $[\D_\bullet,\pi_\bullet(a)]$ is strongly continuous for any $a\in\A$.
\end{lem}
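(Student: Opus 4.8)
The statement to be proven is that the family $\{[\D_t,\pi_t(a)]\}_{t\in\R}$ of bounded operators on $\mH$ (each bounded because $(\A,{}_{\pi_t}\mH,\D_t)$ is a spectral triple) is strongly continuous. The plan is to first assemble the continuity and boundedness facts supplied by \cref{defn:fam_spec_trip} and \cref{lem:weak_diff_fam}, then verify continuity on the common core $W$ by a direct estimate after splitting the commutator, and finally extend the conclusion from the dense subspace $W$ to all of $\mH$.

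For the preliminaries I would record the following. Applying \cref{lem:weak_diff_fam}.1) to condition 3) shows that $\D_\bullet\colon\R\to\B(W,\mH)$ is norm-continuous; applying \cref{lem:weak_diff_fam}.1) to the weak differentiability of $\{\pi_t(a)\}$ in condition 4) shows that $t\mapsto\pi_t(a)\in\B(\mH)$ is norm-continuous; condition 2) gives the uniform bound $\|\D_t\|_{W\to\mH}\le C_2$ (since $\|\D_t\xi\|_\mH\le\|\xi\|_{\D_t}\le C_2\|\xi\|_W$); and condition 4) gives that $t\mapsto\pi_t(a)\xi\in W$ is norm-continuous for each $\xi\in W$, so that in particular $\{\|\pi_t(a)\xi\|_W: t\in I\}$ is bounded on any compact interval $I$.

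The main step is to show that $t\mapsto[\D_t,\pi_t(a)]\xi$ is norm-continuous in $\mH$ for each $\xi\in W$. Since $\pi_t(a)\xi\in W=\Dom(\D_t)$, one may write $[\D_t,\pi_t(a)]\xi=\D_t(\pi_t(a)\xi)-\pi_t(a)(\D_t\xi)$ and treat the two summands separately. For the first,
\[
\|\D_t(\pi_t(a)\xi)-\D_s(\pi_s(a)\xi)\|_\mH\le\|\D_t-\D_s\|_{W\to\mH}\,\|\pi_t(a)\xi\|_W+C_2\,\|\pi_t(a)\xi-\pi_s(a)\xi\|_W\longrightarrow 0
\]
as $s\to t$, using norm-continuity of $\D_\bullet$ and local boundedness of $\|\pi_t(a)\xi\|_W$ for the first term, and strong continuity of $\{\pi_t(a)\}$ on $W$ for the second. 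For the second summand,
\[
\|\pi_t(a)(\D_t\xi)-\pi_s(a)(\D_s\xi)\|_\mH\le\|\pi_t(a)-\pi_s(a)\|_{\mH\to\mH}\,\|\D_t\xi\|_\mH+\|\pi_s(a)\|_{\mH\to\mH}\,\|\D_t\xi-\D_s\xi\|_\mH\longrightarrow 0
\]
as $s\to t$, using norm-continuity of $t\mapsto\pi_t(a)\in\B(\mH)$ together with the bound $\|\D_t\xi\|_\mH\le C_2\|\xi\|_W$ for the first term, and $\|\pi_s(a)\|_{\mH\to\mH}\le1$ together with norm-continuity of $\D_\bullet$ applied to the fixed vector $\xi$ for the second.

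It remains to pass from the dense subspace $W$ to all of $\mH$. Given that $t\mapsto\|[\D_t,\pi_t(a)]\|_{\mH\to\mH}$ is locally bounded, a standard $\varepsilon/3$-argument combining this with the continuity just established on $W$ yields strong continuity on $\mH$. I expect this last point to be where the most care is needed: the estimate above only controls $\|[\D_t,\pi_t(a)]\xi\|_\mH$ in terms of $\|\xi\|_W$, not $\|\xi\|_\mH$, so the local uniform boundedness of the operator norms must be extracted from the hypotheses of \cref{defn:fam_spec_trip} — e.g.\ by noting that $t\mapsto[\D_t,\pi_t(a)]\psi$ is bounded on compact intervals for $\psi$ in the dense set $W$ and invoking the uniform boundedness principle, exploiting the uniform comparability of the graph norms in condition 2) — or else it should be read as part of the standing assumptions on a weakly differentiable family. (If instead the statement is understood as strong continuity of the family $\{[\D_t,\pi_t(a)]|_W\colon W\to\mH\}$ of operators, then the preceding paragraph already completes the proof.)
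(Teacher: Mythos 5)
Your main estimate is exactly the paper's proof: the two displayed bounds, taken together, are precisely the four-term inequality the paper writes down, combined with the same inputs (norm-continuity of $\D_\bullet\colon\R\to\B(W,\mH)$ and of $t\mapsto\pi_t(a)\in\B(\mH)$ via \cref{lem:weak_diff_fam}.1), strong continuity of $\{\pi_t(a)\colon W\to W\}$, and the uniform bound $\|\D_t\|_{W\to\mH}\leq C_2$). So up to the end of your second paragraph your argument and the paper's coincide, and the paper stops there: the commutator is read as the family of operators defined on the common core $W$, i.e.\ the reading in your final parenthesis is the intended one, and no extension to all of $\mH$ is attempted or needed for the later application (where only local boundedness on the compact support of $f$ is used).

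One caution about the extension you sketch in the last paragraph: deducing local uniform boundedness of $\|[\D_t,\pi_t(a)]\|_{\mH\to\mH}$ from pointwise boundedness on the dense subspace $W$ via the uniform boundedness principle does not work as stated — UBP requires pointwise boundedness on all of the Banach space (or on a non-meagre subset), and a dense subspace may be meagre. Applying UBP on the complete space $(W,\|\cdot\|_W)$ only yields local boundedness in $\B(W,\mH)$, which does not control the $\B(\mH)$-norm. If one genuinely wanted strong continuity on all of $\mH$, one would need a separate argument for local uniform boundedness of the closures (e.g.\ via weak continuity of the extended family and \cref{lem:weak_cont_loc_bdd}); since the paper neither claims nor uses this, your proof is complete once the statement is understood on $W$.
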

\begin{proof}
Using the fact that weak differentiability implies norm-continuity (see \cref{lem:weak_diff_fam}), the statement follows from the inequality
\begin{align*}
\| [\D_t,\pi_t(a)]\psi - [\D_s,\pi_s(a)]\psi \| &\leq \|\D_t-\D_s\|_{W\to\mH} \|\pi_t(a)\psi\|_W + \|\D_s\|_{W\to\mH} \|\pi_t(a)\psi-\pi_s(a)\psi\|_W \\
&\qquad+ \|\pi_t(a)-\pi_s(a)\| \, \|\D_t\psi\| + \|\pi_s(a)\| \, \|\D_t\psi-\D_s\psi\| .
\qedhere
\end{align*}
\end{proof}

\begin{prop}[cf.\ {\cite[Proposition 4.18]{vdDR16}}]
\label{prop:fam_Kasp_mod}
If $\{(\A,{}_{\pi_t}\mH,\D_t)\}_{t\in\R}$ is a weakly 
differentiable family of spectral triples, then the triple $(\A\odot C_c^\infty(\R), C_0(\R,\mH)_{C_0(\R)},\D_\bullet)$ 
is an odd unbounded Kasparov $C_0(\R,A)$-$C_0(\R)$-module. 
\end{prop}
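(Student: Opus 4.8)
The plan is to verify directly the defining properties of an odd unbounded Kasparov $C_0(\R,A)$-$C_0(\R)$-module for the triple $\bigl(\A\odot C_c^\infty(\R),\,C_0(\R,\mH)_{C_0(\R)},\,\D_\bullet\bigr)$. The Hilbert $C_0(\R)$-module structure on $C_0(\R,\mH)$ is the evident one, with inner product $\la\phi|\psi\ra(t)=\la\phi(t)|\psi(t)\ra_\mH$; the representation $\pi_\bullet$ of $C_0(\R,A)$ by adjointable operators has already been constructed above; and $\A\odot C_c^\infty(\R)$ is a dense $*$-subalgebra of $C_0(\R,A)=A\otimes C_0(\R)$. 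It therefore remains to establish three points: (i) that $\D_\bullet$, defined as the closure of $\phi\mapsto(t\mapsto\D_t\phi(t))$ on $C_c(\R,W)$, is self-adjoint and regular on $C_0(\R,\mH)$; (ii) that $[\D_\bullet,\pi_\bullet(a\otimes f)]$ extends to an adjointable operator for every $a\otimes f\in\A\odot C_c^\infty(\R)$; and (iii) that $\pi_\bullet(a\otimes f)(1+\D_\bullet^2)^{-\frac12}$ is a compact operator on $C_0(\R,\mH)$ for every such $a\otimes f$.

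For (i) I would proceed via resolvents, which is essentially the setting of \cite[\S8]{KL13}. By \cref{defn:fam_spec_trip}, each $\D_t$ is self-adjoint with the $t$-independent domain $W$, the graph norms of the $\D_t$ are uniformly equivalent to $\|\cdot\|_W$, and by \cref{lem:weak_diff_fam} the map $t\mapsto\D_t\in\B(W,\mH)$ is norm-continuous. Writing $R_t:=(\D_t+i)^{-1}$, the resolvent identity $R_t-R_s=R_t(\D_s-\D_t)R_s$, together with the uniform bound on $\|R_s\|_{\mH\to W}$ (from graph-norm equivalence) and $\|\D_t-\D_s\|_{W\to\mH}\to0$, shows that $\{R_t\}_{t\in\R}$ is norm-continuous as a family in $\B(\mH)$ and in $\B(\mH,W)$, with $\|R_t\|\le1$. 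Hence $(R_\bullet\phi)(t):=R_t\phi(t)$ is an adjointable operator on $C_0(\R,\mH)$, with adjoint $(R_\bullet^*\phi)(t)=(\D_t-i)^{-1}\phi(t)$; one checks that $R_\bullet=(\D_\bullet+i)^{-1}$ and $R_\bullet^*=(\D_\bullet-i)^{-1}$, and since $\D_\bullet$ is closed and symmetric and $\D_\bullet\pm i$ admit bounded adjointable inverses, $\D_\bullet$ is self-adjoint and regular. In particular it admits a bounded Borel functional calculus.

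For (ii), fix $a\otimes f\in\A\odot C_c^\infty(\R)$. Since each $\pi_t(a)$ maps $W$ into $W$ by \cref{defn:fam_spec_trip}, the operator $\pi_\bullet(a\otimes f)$ preserves $\Dom(\D_\bullet)$, and since $f(t)$ is a scalar one has $[\D_\bullet,\pi_\bullet(a\otimes f)]\phi(t)=f(t)\,[\D_t,\pi_t(a)]\phi(t)$ on $\Dom(\D_\bullet)$. By \cref{lem:comm_s-cts} the family $\{[\D_t,\pi_t(a)]\}_{t\in\R}$ is strongly continuous, hence locally bounded; multiplying by the compactly supported $f$ produces a uniformly bounded, strongly continuous family, which defines an adjointable operator on $C_0(\R,\mH)$ — its adjoint being the family $\{-\overline{f(t)}\,[\D_t,\pi_t(a^*)]\}_{t\in\R}$, again strongly continuous by \cref{lem:comm_s-cts}. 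So $[\D_\bullet,\pi_\bullet(a\otimes f)]$ extends to an adjointable operator, as required.

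For (iii) I would use the identification $\mK\bigl(C_0(\R,\mH)\bigr)\cong C_0\bigl(\R,\mK(\mH)\bigr)$. It suffices to show $\pi_\bullet(a\otimes f)(\D_\bullet+i)^{-1}\in\mK\bigl(C_0(\R,\mH)\bigr)$, since $\pi_\bullet(a\otimes f)(1+\D_\bullet^2)^{-\frac12}=\bigl[\pi_\bullet(a\otimes f)(\D_\bullet+i)^{-1}\bigr]\,(\D_\bullet+i)(1+\D_\bullet^2)^{-\frac12}$ and the second factor is bounded adjointable by the functional calculus from (i). Now $\pi_\bullet(a\otimes f)(\D_\bullet+i)^{-1}$ corresponds under the above identification to the family $t\mapsto f(t)\,\pi_t(a)R_t$; factoring $R_t=\iota\circ\widetilde R_t$ with $\widetilde R_t\in\B(\mH,W)$ and $\iota\colon W\hookrightarrow\mH$ the embedding of \cref{defn:fam_spec_trip}, we get $\pi_t(a)R_t=(\pi_t(a)\circ\iota)\,\widetilde R_t$, which is compact because $\pi_t(a)\circ\iota$ is compact by \cref{defn:fam_spec_trip} and $\widetilde R_t$ is bounded. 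The family $t\mapsto\pi_t(a)R_t$ is norm-continuous, since $t\mapsto\widetilde R_t\in\B(\mH,W)$ is (from (i)) and $t\mapsto\pi_t(a)\circ\iota\in\B(W,\mH)$ is (because $t\mapsto\pi_t(a)\in\B(\mH)$ is norm-continuous by \cref{lem:weak_diff_fam} and $\iota$ is fixed). As $f$ has compact support, $t\mapsto f(t)\pi_t(a)R_t$ lies in $C_c\bigl(\R,\mK(\mH)\bigr)\subset C_0\bigl(\R,\mK(\mH)\bigr)$, which completes the verification. I expect step (i) to be the real content: passing from the fibrewise self-adjoint $\D_t$ to a self-adjoint \emph{regular} operator on the module is exactly where the uniform estimates of \cref{defn:fam_spec_trip} are needed (and if one prefers, (i) may be quoted from \cite[\S8]{KL13} or \cite[Proposition 4.18]{vdDR16}); steps (ii) and (iii) are then routine fibrewise arguments.
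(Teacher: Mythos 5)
Your proposal is correct and follows essentially the same route as the paper: the paper likewise delegates self-adjointness and regularity of $\D_\bullet$ to \cite[Proposition 4.18]{vdDR16} (the resolvent argument you sketch is exactly what underlies that reference), establishes bounded commutators via \cref{lem:comm_s-cts} plus the compact support of $f$, and proves the compactness condition by showing $t\mapsto\pi_t(a)(\D_t\pm i)^{-1}$ is a norm-continuous family of compacts so that multiplication by $f$ lands in $C_0(\R,\mK(\mH))$. Your extra detail (the factorisation of the resolvent through $W$ to exhibit compactness fibrewise) is a harmless elaboration of the same argument.
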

\begin{proof}
The proof that $\D_\bullet$ is self-adjoint and regular is exactly as in \cite[Proposition 4.18]{vdDR16}. The remainder of the proof is only slightly different from the proof of \cite[Proposition 4.18]{vdDR16}, because our assumptions on the representation $\pi_\bullet$ are slightly different. 

The algebraic tensor product $\A\odot C_c^\infty(\R)$ is dense in $C_0(\R,A)$, and for 
$a\otimes f\in \A\odot C_c^\infty(\R)$ the commutators 
$$
\big[ \D_\bullet , \pi_\bullet(a\otimes f) \big](t) = f(t) \big[\D_t,\pi_t(a)\big]  
$$
are bounded for each $t\in\R$. By \cref{lem:comm_s-cts} such commutators are strongly
continuous and therefore locally bounded, and the compact support of $f$ then ensures that they are globally bounded. 

The operator $\pi_t(a) (\D_t\pm i)^{-1}$ is compact and bounded by 
$\|a\|$ for each $t\in\R$ (since $(\A,{}_{\pi_t}\mH,\D_t)$ is a spectral triple). 
The norm-continuity of $\D_t\colon W\to\mH$ implies that the resolvents $(\D_t\pm i)^{-1}$ are also norm-continuous. 
Hence the map $\R\to \mK(\mH)$, $t \mapsto \pi_t(a)(\D_t\pm i)^{-1}$ 
is continuous and globally bounded by $\|a\|$, so if we also multiply by 
$f\in C_0(\R)$ we get $\pi_\bullet(a\otimes f)(\D_\bullet\pm i)^{-1} \in C_0(\R,\mK(\mH))$.
\end{proof}

We consider the balanced tensor product $L^2(\R,\mH) := C_0(\R,\mH) \otimes_{C_0(\R)} L^2(\R)$. 
The operator $\D_\bullet\otimes1$ is well-defined on $\Dom\D_\bullet \otimes_{C_0(\R)} L^2(\R) \subset L^2(\R,\mH)$, and is denoted simply by $\D_\bullet$ as well. 
Furthermore, we consider the operator $\partial_t$ on $L^2(\R,\mH)$. 
Under the isomorphism $L^2(\R,\mH) \simeq \mH \otimes L^2(\R)$, we can identify $\partial_t$ on $L^2(\R,\mH)$ with $1\otimes\partial_t$ on $\mH \otimes L^2(\R)$. 
We note that $C_c^\infty(\R,W)$ is a common core for $\D_\bullet$ and $\partial_t$. 

\begin{thm}[cf.\ {\cite[Theorem 4.20]{vdDR16}}]
\label{thm:spec_prod_triv}
Let $\{(\A,{}_{\pi_t}\mH,\D_t)\}_{t\in\R}$ be a weakly differentiable family of spectral triples. 
Then the operator
$$
\D_\bullet\times(\mp i\partial_t) := \mattwo{0}{\pm i\partial_t + i\D_\bullet}{\pm i\partial_t - i\D_\bullet}{0} \colon \left(\Dom(\D_\bullet)\cap\Dom(\partial_t)\right)^{\oplus2} \to L^2(\R,\mH)^{\oplus2} 
$$
is self-adjoint on the domain $(\Dom\D_\bullet\cap\Dom\partial_t)^{\oplus2}$. 
Furthermore, 
$
\big( \A\odot C_c^\infty(\R) , L^2(\R,\mH)^{\oplus2} , \D_\bullet\times(\mp i\partial_t) \big) 
$
are even spectral triples which represent the odd 
unbounded Kasparov product 
of $\big( \A\odot C_c^\infty(\R), C_0(\R,\mH)_{C_0(\R)},\D_\bullet \big)$ with $\big( C_c^\infty(\R), L^2(\R),\mp i\partial_t \big)$. 
\end{thm}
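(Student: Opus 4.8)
The plan is to prove self-adjointness of the operator $\D_\bullet\times(\mp i\partial_t)$ and then verify the spectral triple axioms together with the Kasparov product condition, treating this as an instance of the general machinery for constructing unbounded Kasparov products over $C_0(\R)$ (or equivalently, of Kucerovsky's criterion).

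First I would address self-adjointness. Both $\D_\bullet$ and $\partial_t$ are self-adjoint and regular on $L^2(\R,\mH)$ (the former by \cref{prop:fam_Kasp_mod}, the latter being the standard operator $1\otimes\partial_t$ under $L^2(\R,\mH)\simeq\mH\otimes L^2(\R)$), and $C_c^\infty(\R,W)$ is a common core. The off-diagonal operator $\D_\bullet\times(\mp i\partial_t)$ squares (on the core) to $\mathrm{diag}(\partial_t^2+\D_\bullet^2\mp i[\partial_t,\D_\bullet],\ \partial_t^2+\D_\bullet^2\pm i[\partial_t,\D_\bullet])$. By \cref{lem:strong_diff_fam} (via \cref{lem:weak_diff_fam}) the commutator $[\partial_t,\D_\bullet]$ extends to a bounded operator, with bound controlled by the uniform bound on the weak derivative of $\D_\bullet$. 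Hence the diagonal operator is a bounded perturbation of the self-adjoint $\partial_t^2+\D_\bullet^2$, and standard arguments (as in \cite{vdDR16}, or a direct check that $\D_\bullet\pm i\partial_t$ has dense range and closed graph using the anticommutation-up-to-bounded-terms structure) give essential self-adjointness on $(\Dom\D_\bullet\cap\Dom\partial_t)^{\oplus2}$. This should be quoted more or less directly from \cite[Theorem 4.20]{vdDR16}, since the operator part of the hypotheses is identical.

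Next, the spectral triple axioms. The algebra is $\A\odot C_c^\infty(\R)$, acting on $L^2(\R,\mH)^{\oplus2}$ diagonally via $\pi_\bullet\otimes1$. Boundedness of commutators: for $a\otimes f$, one computes $[\D_\bullet\times(\mp i\partial_t),\ (\pi_\bullet(a\otimes f))^{\oplus2}]$ and finds it has the shape $\pm i\,f'(t)\pi_t(a)$ on the anti-diagonal plus $f(t)[\D_t,\pi_t(a)]$ terms; the first is bounded since $f'\in C_c$ and representations are contractive, the second is bounded and globally bounded by \cref{lem:comm_s-cts} plus compact support of $f$. Compact resolvent: one writes $(\D_\bullet\times(\mp i\partial_t)+i)^{-1}$ and uses that $\pi_\bullet(a\otimes f)$ times the resolvent of the diagonal $\partial_t^2+\D_\bullet^2$ is compact — this follows because, fibrewise in the $L^2(\R)$ variable, $\pi_t(a)(\D_t^2+1)^{-1/2}$ is compact (spectral triple axiom on each fibre), the inclusion $W\hookrightarrow\mH$ is locally compact, and the extra $\partial_t^2$ together with $f\in C_c(\R)$ supplies compactness in the $\R$-direction via Rellich-type arguments. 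Again this parallels \cite{vdDR16} closely. The grading is the obvious one making the operator odd, so the triple is even.

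Finally, the Kasparov product claim: one must show $\big(\A\odot C_c^\infty(\R),L^2(\R,\mH)^{\oplus2},\D_\bullet\times(\mp i\partial_t)\big)$ represents the internal product of the module of \cref{prop:fam_Kasp_mod} with $\big(C_c^\infty(\R),L^2(\R),\mp i\partial_t\big)$. The standard route is Kucerovsky's criterion: verify the connection condition and the positivity (semi-boundedness) condition for the candidate operator built from $\D_\bullet$ (acting in the module direction) and $\mp i\partial_t$ (acting on the $L^2(\R)$ factor). The connection condition reduces to boundedness of certain graded commutators of $\D_\bullet\times(\mp i\partial_t)$ with elements $\mathsf{T}_\xi$ coming from generators $\xi\in C_0(\R,\mH)$, which is again the commutator computation above; the positivity condition uses that the cross-term commutator $[\D_\bullet,\partial_t]$ is bounded, so the domain-relative lower bound holds with a finite constant. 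I expect the main obstacle to be bookkeeping in the local-compactness argument for the resolvent: one has three "directions" of compactness (the intrinsic $W\hookrightarrow\mH$ fibre compactness, its uniformity/continuity in $t$, and the $\partial_t^2$-compactness on $\R$) which must be combined carefully — this is exactly the point where the slightly weaker representation hypotheses in \cref{defn:fam_spec_trip} (weak rather than strong differentiability of $\pi_\bullet$, strong continuity of $\pi_t(a)$ on $W$) have to be leveraged, and is the reason the proof is "only slightly different" from \cite{vdDR16} rather than identical.
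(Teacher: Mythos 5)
Your overall strategy is the same as the paper's: the operator-theoretic part (self-adjointness, compact resolvent, and the Kasparov product) is delegated to the machinery of \cite[Theorem 4.20]{vdDR16} and \cite[Theorems 6.7 \& 7.5]{KL13} (the paper phrases this as checking that one has a \emph{correspondence} in the sense of \cite[Definition 6.3]{KL13}, which then yields self-adjointness, the spectral triple axioms, and the representation of the product in one stroke, rather than verifying Kucerovsky's criterion by hand as you propose, but this is the same mathematics). The only genuinely new verification required by the weaker hypotheses on the representation is the boundedness of $[\partial_t,\pi_\bullet(a\otimes f)\otimes 1]$, and this is exactly where your computation has a gap.

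Concretely, you write that the commutator with $\pi_\bullet(a\otimes f)$ produces only the terms $f'(t)\pi_t(a)$ and $f(t)[\D_t,\pi_t(a)]$. This drops the term coming from the $t$-dependence of the representation itself: since $\pi_t$ varies with $t$, one has
\begin{align*}
[\partial_t,\pi_\bullet(a\otimes f)\otimes 1](t) = f'(t)\,\pi_t(a) + f(t)\,\partial_t\pi_t(a) ,
\end{align*}
and the second term is precisely the one that the paper's proof is devoted to. It is bounded because the family $\{\pi_t\}$ is weakly differentiable by \cref{defn:fam_spec_trip}.4), so its weak derivative is weakly continuous, hence locally bounded by \cref{lem:weak_cont_loc_bdd}, and the compact support of $f$ then gives a global bound. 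Without this term the bounded-commutator axiom is not actually verified, and your later remark that the weaker representation hypotheses are ``leveraged'' in the local-compactness argument misplaces where they enter: they are needed here, in the commutator with $\partial_t$ (and, for \cref{prop:fam_Kasp_mod}, in the norm-continuity of $t\mapsto\pi_t(a)(\D_t\pm i)^{-1}$), not primarily in the resolvent compactness. The fix is immediate from the stated hypotheses, so this is an omission rather than a fatal error, but it is the one point the proof cannot simply quote from the references.
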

\begin{proof}
The proof is similar to the proof of \cite[Theorem 4.20]{vdDR16} (which in turn is based on \cite[Proposition 8.11]{KL13}), but we have slightly different assumptions on the representation $\pi_\bullet$. We need to check the boundedness of the commutator $[\partial_t,\pi_\bullet(a\otimes f)\otimes1]$ for all $a\otimes f\in \A\odot C_c^\infty(\R)$. We have
\[
[\partial_t,\pi_\bullet(a\otimes f)\otimes1](t) = \pi_t(a) \partial_t f(t) + f(t) \partial_t \pi_t(a) .
\]
The first term is bounded because $f\in C_c^\infty(\R)$, and the second term is bounded because $\partial_t\pi_t(a)$ is strongly continuous and therefore bounded on the compact support of $f$. 
Thus, as in the proof of \cite[Theorem 4.20]{vdDR16}, this shows that we have a correspondence (as defined in \cite[Definition 6.3]{KL13}) from $(\A\odot C_c^\infty(\R), C_0(\R,\mH)_{C_0(\R)},\D_\bullet)$ to $(C_c^\infty(\R), L^2(\R),\mp i\partial_t)$, and the statement then follows from \cite[Theorems 6.7 \& 7.5]{KL13} (noting that the operator described in \cite{KL13} is unitarily isomorphic to the operator $\D_\bullet\times(\mp i\partial_t)$ defined here). 
\end{proof}

We view a weakly differentiable family of spectral triples as an abstract (noncommutative) analogue of the Dirac operators on a family of (spacelike) hypersurfaces. 
By analogy with the decomposition of the classical Dirac operator described in \cref{sec:Dirac_decomp}, we now also introduce an abstract analogue of lapse functions. 

\begin{defn}
\label{defn:lapse}
Given a weakly differentiable family of spectral triples $\{(\A,\mH,\D_t)\}_{t\in\R}$, we consider a \emph{family of lapse operators} $\{N_t\}_{t\in\R}$ satisfying the following assumptions:
\begin{enumerate}
\item the family $\{N_t\}$ consists of positive invertible operators on $\mH$; 
\item the operators $N_t^{\frac12}$ and their inverses preserve the domain $W$, the family $\{N_t^{\frac12}\colon W\to W\}$ is strongly differentiable and uniformly bounded, and the inverse family $\{N_t^{-\frac12}\colon W\to W\}$ is strongly continuous and uniformly bounded;
\item the strong derivatives $\{(\partial N^{\frac12})_t\}$ and the commutators $\{[\D_t,N_t^{\frac12}]\}$ on $\mH$ are uniformly bounded;
\item 
$[N_t^{\frac12},\pi_t(a)]=0$ for all $a\in\A$. 
\end{enumerate}
\end{defn}

\begin{lem}
\label{lem:sa_conj}
Let $D$ be a self-adjoint operator on a Hilbert space $\mH$. Consider a self-adjoint invertible operator $T\in B(\mH)$ such that $T\cdot\Dom D\subset\Dom D$ and $[D,T]$ is bounded on $\Dom D$. Then:
\begin{enumerate}
\item $DT$ is closed on $\Dom D$;
\item $TDT$ is self-adjoint on $\Dom D$;
\item $T^{-1}DT^{-1}$ is (well-defined and) self-adjoint on $\Dom D$.
\end{enumerate}
\end{lem}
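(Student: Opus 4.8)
The plan is to prove the three claims in order, each essentially reducing to the previous one. For 1), the operator $DT$ is defined on $\Dom D$ since $T$ preserves $\Dom D$. To see it is closed, I would factor $DT = TDT\cdot T^{-1} + \text{(bounded)}$; more directly, observe $DT = (TD + [D,T])T = TDT + [D,T]T$, so $DT$ differs from $TDT$ by the bounded operator $[D,T]T$, and it suffices to show $TDT$ is closed — which follows once 2) is established, since self-adjoint operators are closed. Alternatively, and more cleanly, I would argue that $T$ is a topological isomorphism from $(\Dom D,\|\cdot\|_D)$ onto itself: since $T$ and $T^{-1}$ map $\Dom D$ into $\Dom D$ and $[D,T]$, $[D,T^{-1}]$ are bounded, the graph norm is equivalent to its pullback under $T$. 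Hence $DT$ closed on $\Dom D$ is equivalent to $D$ closed on $\Dom D$, which holds as $D$ is self-adjoint.

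For 2), I would use the standard perturbation-type argument. Write $TDT = T^2 D + T[D,T]$ on $\Dom D$, where $T[D,T]$ is bounded. Since $T^2$ is bounded, positive and invertible, $T^2 D$ is closed on $\Dom D$; the point is self-adjointness. The cleanest route is: $TDT$ is symmetric on $\Dom D$ (clear, since $T=T^*$ preserves $\Dom D$ and $D$ is symmetric there). For essential self-adjointness / self-adjointness on this exact domain, I would check that $TDT \pm i$ is surjective from $\Dom D$ onto $\mH$. Given $\eta\in\mH$, we want $\xi\in\Dom D$ with $(TDT\pm i)\xi=\eta$, i.e. $T(D\pm iT^{-2})(T\xi)=\eta$, i.e. $(D\pm iT^{-2})(T\xi)=T^{-1}\eta$. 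Now $D\pm iT^{-2}$ is a bounded perturbation of $D$ by $\mp iT^{-2}$... but $T^{-2}$ is not skew-adjoint, so this needs care. Instead I would invoke the well-known fact (e.g. via the Kato–Rellich style argument for $T$-bounded perturbations, or directly: $T^*=T$) that $S\mapsto TST$ with $T$ self-adjoint bounded invertible preserves self-adjointness — concretely, $TDT$ has a bounded inverse on a neighbourhood if $D$ does, and one checks $(TDT)^* = TD^*T = TDT$ using that $T$ bounded invertible self-adjoint allows moving $T$ across the adjoint: $(TDT)^* \supseteq TD^*T$, and the domain of $(TDT)^*$ is $\{\eta: TDT\text{-form bounded}\} = T^{-1}\Dom(D^*)=T^{-1}\Dom D = \Dom D$ since $T^{-1}$ preserves $\Dom D$.

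For 3), apply 2) with $T$ replaced by $T^{-1}$: note $T^{-1}$ is again self-adjoint, bounded (as $T$ is invertible), and invertible; and $T^{-1}$ preserves $\Dom D$ by hypothesis, with $[D,T^{-1}] = -T^{-1}[D,T]T^{-1}$ bounded on $\Dom D$. Then 2) immediately gives that $T^{-1}DT^{-1}$ is self-adjoint on $\Dom(D)$, and well-definedness is part of that statement.

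The main obstacle I expect is pinning down the precise domain in 2): showing self-adjointness \emph{on $\Dom D$ itself} rather than merely essential self-adjointness on a core. The resolution is the observation that $T$ is a bicontinuous bijection of the Banach space $(\Dom D,\|\cdot\|_D)$ onto itself (using that both $T,T^{-1}$ preserve $\Dom D$ and have bounded commutators with $D$), so domains transform correctly under conjugation and the adjoint-domain computation $\Dom((TDT)^*) = T^{-1}\Dom(D) = \Dom(D)$ goes through cleanly.
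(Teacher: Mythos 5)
There is a genuine gap: at several crucial points you use that $T^{-1}$ preserves $\Dom D$ --- in your ``cleaner'' argument for 1) (``$T$ and $T^{-1}$ map $\Dom D$ into $\Dom D$''), in the domain computation $\Dom((TDT)^*) = T^{-1}\Dom(D) = \Dom D$ in 2), and explicitly in 3), where you write that ``$T^{-1}$ preserves $\Dom D$ by hypothesis''. But the lemma only assumes $T\cdot\Dom D\subset\Dom D$; the inclusion $T^{-1}\cdot\Dom D\subset\Dom D$ is \emph{not} a hypothesis, and it is precisely the non-trivial content of the lemma: it is equivalent to statement 1 (the maximal domain of $DT$ is $T^{-1}\cdot\Dom D$, on which $DT$ is automatically closed, so closedness on $\Dom D$ amounts to the equality of the two domains), and it is exactly the ``well-defined'' clause of statement 3. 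Likewise, the ``well-known fact'' you invoke --- that conjugation by a bounded self-adjoint invertible operator preserves self-adjointness --- only yields that $TDT$ is self-adjoint on $T^{-1}\cdot\Dom D$; identifying that domain with $\Dom D$ is the whole point.

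The paper closes this gap as follows: by (the proof of) Connes' Lemma 6.2, $\Dom D$ is a core for $DT$ on its maximal domain; given $\psi$ with $T\psi\in\Dom D$, one picks $\psi_n\in\Dom D$ with $\psi_n\to\psi$ and $DT\psi_n\to DT\psi$, writes $D\psi_n = T^{-1}\big(DT\psi_n-[D,T]\psi_n\big)$, which converges because $[D,T]$ extends boundedly, and concludes $\psi\in\Dom D$ by closedness of $D$. This core property (equivalently, that $T\cdot\Dom D$ is graph-dense in $\Dom D$) is a genuine resolvent-type argument that your proposal never supplies; your adjoint-domain computation in 2) also implicitly needs it, since $\Dom((TDT)^*) = T^{-1}\cdot\Dom D$ only follows once one knows $T\cdot\Dom D$ is a core for $D$. (A minor separate slip: the identity $DT=(TD+[D,T])T$ in your first route for 1) is wrong --- one has $DT=TD+[D,T]$ --- but you abandon that route anyway.) Once statement 1 is established, your reduction of 3) to 2) with $T$ replaced by $T^{-1}$, using $[D,T^{-1}]=-T^{-1}[D,T]T^{-1}$, is exactly the paper's argument.
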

\begin{proof}
\begin{enumerate}
\item From (the proof of) \cite[Lemma 6.2]{Con13} we know that $\Dom D$ is a core for $DT$. Hence, for every $\psi\in\Dom(DT) = \{\psi\in\mH : T\psi\in\Dom\D\}$, we have a sequence $\psi_n\in\Dom D$ such that $\psi_n\to\psi$ and $DT\psi_n\to DT\psi$. Then
\[
D\psi_n = T^{-1} (DT\psi_n - [D,T]\psi_n) \to T^{-1} (DT\psi - [D,T]\psi) ,
\]
and therefore $\psi\in\Dom D$, which proves the first statement. 

\item 
The essential self-adjointness of $TDT$ is proven in \cite[Corollary 6.3]{Con13}. Since $TDT$ is closed by the first statement, this proves the second statement. 

\item 
Using the first statement, we have $T^{-1}\cdot\Dom D = \Dom DT = \Dom D$. Since $[D,T^{-1}] = - T^{-1} [D,T] T^{-1}$ is bounded, the third statement follows from the second statement. 
\qedhere
\end{enumerate}
\end{proof}

\begin{defn}
\label{defn:prod_spec_trip}
Consider a weakly differentiable family of spectral triples $\{(\A,{}_{\pi_t}\mH,\D_t)\}_{t\in\R}$ with a family of lapse operators $\{N_t\}_{t\in\R}$ (as in \cref{defn:fam_spec_trip,defn:lapse}). 
Define the operators $\D_+$ and $\D_-$ on $L^2(\R,\mH)^{\oplus2}$ by 
\begin{align*}
\D_\pm &:= \mattwo{0}{\pm i N_\bullet^{-\frac12} \partial_t N_\bullet^{-\frac12} + i \D_\bullet}{\pm i N_\bullet^{-\frac12} \partial_t N_\bullet^{-\frac12} - i \D_\bullet}{0} . 
\end{align*}
The triples $( \A\odot C_c^\infty(\R) , L^2(\R,\mH)^{\oplus2} , \D_\pm )$ will be referred to as the \emph{product spectral triples} corresponding to the given families. 
\end{defn}
\begin{remark}
We point out that the definition of the operator $\D_+$ corresponds exactly to the reconstruction formula for the Dirac operator on a (Riemannian) product space given in \cref{prop:Dirac_decomp} (note that $\tau_0=1$ in the Riemannian case). The operator $\D_-$ is obtained from $\D_+$ by replacing $\partial_t\to-\partial_t$ (i.e., by reversing the `time'-orientation). 
\end{remark}

\begin{thm}
\label{thm:spec_prod}
Consider the $\Z_2$-grading $\Gamma := 1\oplus(-1)$ on $L^2(\R,\mH)^{\oplus2}$. 
Then the product spectral triples $( \A\odot C_c^\infty(\R) , L^2(\R,\mH)^{\oplus2} , \D_\pm )$ are even spectral triples.
\end{thm}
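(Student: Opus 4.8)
The plan is to reduce to \cref{thm:spec_prod_triv} by a conjugation argument. Set $\mathcal N := N_\bullet^{-\frac12}\oplus N_\bullet^{-\frac12}$, acting pointwise on $L^2(\R,\mH)^{\oplus2}$; since the families $\{N_t^{\pm\frac12}\}$ are uniformly bounded (\cref{defn:lapse}), $\mathcal N$ is a bounded, positive, invertible self-adjoint operator with $\mathcal N^{-1}=N_\bullet^{\frac12}\oplus N_\bullet^{\frac12}$. Writing $\D'_\bullet$ for the family $\{N_t^{\frac12}\D_tN_t^{\frac12}\}_{t\in\R}$, a direct computation of the two off-diagonal entries shows
\begin{align*}
\D_\pm \;=\; \mathcal N\,\big(\D'_\bullet\times(\mp i\partial_t)\big)\,\mathcal N ,
\end{align*}
where $\D'_\bullet\times(\mp i\partial_t)$ denotes the operator of \cref{thm:spec_prod_triv} built from the family $\D'_\bullet$. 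It therefore suffices to prove: (i) $\{(\A,{}_{\pi_t}\mH,\D'_t)\}_{t\in\R}$ is again a weakly differentiable family of spectral triples, so that \cref{thm:spec_prod_triv} produces the even spectral triple $\big(\A\odot C_c^\infty(\R),L^2(\R,\mH)^{\oplus2},\D'_\bullet\times(\mp i\partial_t)\big)$; and (ii) conjugating this even spectral triple by $\mathcal N$ again yields an even spectral triple.

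For (i): each $\D'_t$ is self-adjoint on $\Dom\D_t=W$ by the second part of \cref{lem:sa_conj}, applied with $D=\D_t$ and $T=N_t^{\frac12}$ (whose hypotheses are conditions (1)--(3) of \cref{defn:lapse}), and $\Dom\D'_t=W$ since $N_t^{\pm\frac12}$ preserve $W$. Writing $\D_tN_t^{\frac12}=N_t^{-\frac12}\D'_t$ and using the uniform bounds in condition (2) of \cref{defn:fam_spec_trip} and condition (2) of \cref{defn:lapse}, the graph norm of $\D'_t$ is uniformly equivalent to $\|\cdot\|_W$. Weak differentiability of $\D'_\bullet\colon\R\to\B(W,\mH)$ with uniformly bounded weak derivative follows from the product rule in the third part of \cref{lem:weak_diff_fam} (the factors $N_\bullet^{\frac12}$ being strongly differentiable on $W$ and on $\mH$, and $\D_\bullet$ weakly differentiable) together with the uniform bounds in conditions (2)--(3) of \cref{defn:lapse}. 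The remaining conditions of \cref{defn:fam_spec_trip} involve only $W$ and $\pi_\bullet$, which are unchanged; and $(\A,{}_{\pi_t}\mH,\D'_t)$ is a spectral triple because $[\D'_t,\pi_t(a)]=N_t^{\frac12}[\D_t,\pi_t(a)]N_t^{\frac12}$ is bounded (using condition (4) of \cref{defn:lapse}), while $\pi_t(a)(\D'_t\pm i)^{-1}=\pi_t(a)\circ\iota\circ(\D'_t\pm i)^{-1}$ is compact since $(\D'_t\pm i)^{-1}$ maps $\mH$ boundedly into $W$ and $\pi_t(a)\circ\iota$ is compact.

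For (ii): self-adjointness of $\D_\pm$ on $(\Dom\D_\bullet\cap\Dom\partial_t)^{\oplus2}$ is the third part of \cref{lem:sa_conj} with $D=\D'_\bullet\times(\mp i\partial_t)$ and $T=\mathcal N^{-1}$; this is admissible since $\mathcal N^{-1}$ preserves $\Dom\partial_t$ and $W$, and $[D,\mathcal N^{-1}]$ is bounded — on the off-diagonal entries it reduces to boundedness of $[\partial_t,N_\bullet^{\frac12}]$ (condition (3) of \cref{defn:lapse}, via \cref{lem:strong_diff_fam}) and of $[N_\bullet^{\frac12}\D_\bullet N_\bullet^{\frac12},N_\bullet^{\frac12}]=N_\bullet^{\frac12}[\D_\bullet,N_\bullet]-N_\bullet[\D_\bullet,N_\bullet^{\frac12}]$. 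Since $\mathcal N$ commutes with the diagonal representation $\pi_\bullet(a\otimes f)$ (condition (4) of \cref{defn:lapse}), $[\D_\pm,\pi_\bullet(a\otimes f)]=\mathcal N\,[\D'_\bullet\times(\mp i\partial_t),\pi_\bullet(a\otimes f)]\,\mathcal N$ is bounded. For local compactness, $\D_\pm\pm i=\mathcal N\big(\D'_\bullet\times(\mp i\partial_t)\pm i\mathcal N^{-2}\big)\mathcal N$, hence $(\D_\pm\pm i)^{-1}=\mathcal N^{-1}\big(\D'_\bullet\times(\mp i\partial_t)\pm i\mathcal N^{-2}\big)^{-1}\mathcal N^{-1}$; as $\mathcal N^{-2}=N_\bullet\oplus N_\bullet$ is positive and bounded below, $\D'_\bullet\times(\mp i\partial_t)\pm i\mathcal N^{-2}$ is boundedly invertible, and the second resolvent identity relative to $\D'_\bullet\times(\mp i\partial_t)\pm i$ shows that $\pi_\bullet(a\otimes f)\big(\D'_\bullet\times(\mp i\partial_t)\pm i\mathcal N^{-2}\big)^{-1}$ is compact; commuting $\mathcal N^{-1}$ past $\pi_\bullet(a\otimes f)$ then yields compactness of $\pi_\bullet(a\otimes f)(\D_\pm\pm i)^{-1}$. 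Finally $\Gamma=1\oplus(-1)$ commutes with the diagonal representation and anticommutes with the off-diagonal operator $\D_\pm$, so the resulting triples are even.

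The step I expect to be the main obstacle is (i), specifically the preservation of the \emph{uniform} estimates — that the graph norms of $\D'_t=N_t^{\frac12}\D_tN_t^{\frac12}$ are uniformly equivalent to $\|\cdot\|_W$, and that the weak derivative of $\D'_\bullet$ is uniformly bounded in $\B(W,\mH)$ — which requires combining, with uniform constants, the hypotheses of \cref{defn:fam_spec_trip} and \cref{defn:lapse}. A secondary point is the domain bookkeeping across the two applications of \cref{lem:sa_conj} (once for $\D'_t$ on $\mH$, once for the block conjugation on $L^2(\R,\mH)^{\oplus2}$) together with checking that $\mathcal N^{\pm1}$ are bounded on $L^2(\R,\mH)^{\oplus2}$. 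Should the global conjugation be awkward, one can instead rerun the argument behind \cref{thm:spec_prod_triv} with $N_\bullet^{-\frac12}\partial_tN_\bullet^{-\frac12}$ in place of $\partial_t$, the lapse factors interacting with $\D_\bullet$ only through the bounded commutators $[\D_t,N_t^{\frac12}]$.
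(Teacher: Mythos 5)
Your proposal follows essentially the same route as the paper's proof: pass to the conjugated family $\D_t'=N_t^{\frac12}\D_tN_t^{\frac12}$, verify it is again a weakly differentiable family of spectral triples, apply \cref{thm:spec_prod_triv}, and then transfer self-adjointness, bounded commutators and local compactness back to $\D_\pm=N_\bullet^{-\frac12}\D_\pm'N_\bullet^{-\frac12}$ via \cref{lem:sa_conj}. The only cosmetic difference is that you establish local compactness through an explicit resolvent identity, whereas the paper deduces it directly from the equality of domains $\Dom\D_\pm'=\Dom\D_\pm$; both are valid.
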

\begin{proof}
The proof of self-adjointness of $\D_\pm$ proceeds in several steps.
\begin{enumerate}
\item Consider the family of operators
\[
\D_t' := N_t^{\frac12} \D_t N_t^{\frac12} .
\]
We know from \cref{lem:sa_conj} that $\D_t'$ is self-adjoint on the domain $W$. 
It follows from \cref{lem:weak_diff_fam}.3) that $\D_t'$ is weakly differentiable. 
From our assumptions, it follows straightforwardly that the weak derivative of $\D_t'$ is uniformly bounded, 
and that the graph norms of $\D_t'$ are uniformly equivalent. 
Hence $\{(\A,\mH,\D_t')\}_{t\in\R}$ is again a weakly differentiable family of spectral triples. 

\item By \cref{thm:spec_prod_triv}, the operators 
\[
\D_\pm' := \D_\bullet' \times (\mp i\partial_t) = \mattwo{0}{\pm i\partial_t + i\D_\bullet'}{\pm i\partial_t - i\D_\bullet'}{0}
\]
are self-adjoint on the domain $\big(\Dom\D_\bullet\cap\Dom\partial_t\big)^{\oplus2}$. 

\item Finally, since we have the equality
\begin{align*}
N_\bullet^{-\frac12} ( \pm i \partial_t - i \D_\bullet' ) N_\bullet^{-\frac12} &= \pm i N_\bullet^{-\frac12} \partial_t N_\bullet^{-\frac12} - i \D_\bullet ,
\end{align*}
we see that 
\[
\D_\pm = N_\bullet^{-\frac12} \D_\pm' N_\bullet^{-\frac12} . 
\]
The operator $N_\bullet^{\frac12}$ is a bounded and invertible operator which preserves the domain of $\D_\pm'$, such that the commutator $[\D_\pm',N_\bullet^{\frac12}]$ is bounded. 
It then follows from \cref{lem:sa_conj} that $\D_\pm$ is self-adjoint on $\big(\Dom\D_\bullet\cap\Dom\partial_t\big)^{\oplus2}$. 
\end{enumerate}

Since $N_\bullet$ commutes with the action of $\A\odot C_c^\infty(\R)$, it follows as in the proof of \cref{thm:spec_prod_triv} that $[\D_\pm,\pi_\bullet(a\otimes f)]$ is bounded for any $a\otimes f\in\A\odot C_c^\infty(\R)$. 
Furthermore, we know from \cref{thm:spec_prod_triv} that $\D_\pm'$ has locally compact resolvents. Since $\Dom\D_\pm' = \Dom\D_\pm$, it follows that also $\D_\pm$ has locally compact resolvents. 
Finally, we obviously have $\D_\pm\Gamma = -\Gamma\D_\pm$. 
\end{proof}

\begin{thm}
\label{thm:spec_prod_Kasp}
The product spectral triples $( \A\odot C_c^\infty(\R) , L^2(\R,\mH)^{\oplus2} , \D_\pm )$ represent the odd (internal) 
unbounded 
Kasparov product (over $C_0(\R)$) of $\big( \A\odot C_c^\infty(\R), C_0(\R,\mH)_{C_0(\R)},\D_\bullet \big)$ with $\big( C_c^\infty(\R), L^2(\R),\mp i\partial_t \big)$. 
\end{thm}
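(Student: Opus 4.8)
The plan is to reduce everything to \cref{thm:spec_prod_triv} by means of the conjugation identities $\D_\bullet' = N_\bullet^{\frac12}\D_\bullet N_\bullet^{\frac12}$ and $\D_\pm = N_\bullet^{-\frac12}\D_\pm'N_\bullet^{-\frac12}$ that already appear in the proof of \cref{thm:spec_prod}. There it was shown that $\{(\A,{}_{\pi_t}\mH,\D_t')\}_{t\in\R}$ is again a weakly differentiable family of spectral triples, so \cref{thm:spec_prod_triv} applies to it: the even spectral triple $\big(\A\odot C_c^\infty(\R), L^2(\R,\mH)^{\oplus2}, \D_\pm'\big)$ represents the unbounded Kasparov product (over $C_0(\R)$) of $\big(\A\odot C_c^\infty(\R), C_0(\R,\mH)_{C_0(\R)}, \D_\bullet'\big)$ with $\big(C_c^\infty(\R), L^2(\R), \mp i\partial_t\big)$. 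It therefore remains to show that replacing $\D_\bullet'$ by $\D_\bullet$ does not change the class in $KK^1(C_0(\R,A),C_0(\R))$, and that replacing $\D_\pm'$ by $\D_\pm$ does not change the class in $KK(C_0(\R,A),\C)$; the claim then follows from naturality of the Kasparov product.

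Both stability statements have the same shape: conjugation by a positive invertible operator in the commutant of the coefficient algebra does not alter the $KK$-class. The essential hypothesis is \cref{defn:lapse}.4), $[N_t^{\frac12},\pi_t(a)]=0$, which guarantees that $N_\bullet^{\pm\frac12}$ commutes with the representation of $\A\odot C_c^\infty(\R)$ (the $C_c^\infty(\R)$-factor commuting automatically, since $N_t$ acts fibrewise on $\mH$). Concretely, for $s\in[0,1]$ set $T_\bullet^{(s)} := (1-s)\cdot 1 + s\,N_\bullet^{\frac12}$, a bounded positive invertible fibrewise operator commuting with the algebra and preserving the common domain $W$ fibrewise (by \cref{defn:lapse}.2)); then $\D_\bullet^{(s)} := T_\bullet^{(s)}\D_\bullet T_\bullet^{(s)}$ is self-adjoint and regular by \cref{lem:sa_conj}, has bounded commutators $T_\bullet^{(s)}[\D_\bullet,\pi_\bullet(a\otimes f)]T_\bullet^{(s)}$ with the algebra, and has locally compact resolvents, since its domain is $W$ with $W\hookrightarrow\mH$ locally compact. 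Together with the norm-continuity of $s\mapsto\D_\bullet^{(s)}$ (reducing to the uniform equivalence of the graph norms, controlled by the uniform bounds on $N_\bullet^{\pm\frac12}$ exactly as in Step~1 of the proof of \cref{thm:spec_prod}), this exhibits an operator homotopy of odd unbounded Kasparov $C_0(\R,A)$-$C_0(\R)$-modules from $\D_\bullet$ to $\D_\bullet'$; passing to bounded transforms and invoking homotopy invariance of $KK$ gives $[\D_\bullet]=[\D_\bullet']$.

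The identity $[\D_\pm]=[\D_\pm']$ is obtained in exactly the same way, now using the homotopy $s\mapsto\big((1-s)\cdot1 + s\,N_\bullet^{-\frac12}\big)\D_\pm'\big((1-s)\cdot1 + s\,N_\bullet^{-\frac12}\big)$ on $L^2(\R,\mH)^{\oplus2}$: by the proof of \cref{thm:spec_prod}, $N_\bullet^{-\frac12}$ is bounded and invertible, commutes with the action of $\A\odot C_c^\infty(\R)$, preserves $\Dom\D_\pm'=\Dom\D_\pm$, and has bounded commutator $[\D_\pm',N_\bullet^{-\frac12}]$, so \cref{lem:sa_conj} and \cref{thm:spec_prod} again supply self-adjointness, bounded commutators and local compactness of the resolvents along the homotopy. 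Combining the two stability statements,
\begin{align*}
[\D_\pm] \;=\; [\D_\pm'] \;=\; [\D_\bullet']\,\hat\otimes_{C_0(\R)}\,[\mp i\partial_t] \;=\; [\D_\bullet]\,\hat\otimes_{C_0(\R)}\,[\mp i\partial_t],
\end{align*}
which is the assertion.

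The main obstacle is purely bookkeeping: one must check that each member of the two homotopies is genuinely an unbounded Kasparov module and that the associated bounded transforms depend norm-continuously on $s$, so that homotopy invariance applies. The first point is immediate from \cref{lem:sa_conj} and the hypotheses \cref{defn:fam_spec_trip,defn:lapse}; the second reduces to the uniform equivalence of the graph norms along the homotopy, for which the only input beyond the proof of \cref{thm:spec_prod} is the (uniform in $t$ and $s$) boundedness of $N_\bullet^{\pm\frac12}$ furnished by \cref{defn:lapse}.2)--3). No new analytic ingredient is required. An alternative route, avoiding homotopies, is to verify Kucerovsky's product criteria for $\D_\pm$ directly against $\D_\bullet$ and $\mp i\partial_t$ by redoing the Kaad--Lesch correspondence argument of \cref{thm:spec_prod_triv} (cf.\ \cite{KL13}) after conjugating by $N_\bullet^{\mp\frac12}$; this merely reorganises the same estimates, the relevant commutator $[\D_\pm,N_\bullet^{-\frac12}]$ having already been controlled in the proof of \cref{thm:spec_prod}.
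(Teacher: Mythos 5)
Your argument is correct, and it rests on the same two pillars as the paper's proof: \cref{thm:spec_prod_triv} plus a homotopy that deforms the lapse away. The organisation differs, though. The paper uses a \emph{single} homotopy, namely the family $\D_\pm(s)$ built from \cref{defn:prod_spec_trip} with the deformed lapse $N_t(s)^{\frac12} := s+(1-s)N_t^{\frac12}$; this connects $\D_\pm$ directly to $\D_\bullet\times(\mp i\partial_t)$, which already represents the product of $[\D_\bullet]$ with $[\mp i\partial_t]$ by \cref{thm:spec_prod_triv}, so no further identification is needed. You instead apply \cref{thm:spec_prod_triv} to the conjugated family $\D_\bullet'$, and then perform \emph{two} conjugation homotopies ($\D_\bullet\sim\D_\bullet'$ on the $C_0(\R)$-module and $\D_\pm\sim\D_\pm'$ on the Hilbert space), finishing with the well-definedness of the Kasparov product on $KK$-classes. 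Both routes are legitimate; the paper's is slightly leaner (one homotopy, no appeal to functoriality of the product), while yours has the mild advantage that each homotopy is a straightforward conjugation $T^{(s)}\,\cdot\,T^{(s)}$ to which \cref{lem:sa_conj} applies verbatim. Two small points to tighten: (i) self-adjointness \emph{and regularity} of $\D_\bullet^{(s)}$ on the Hilbert module should be obtained by observing that $\{T_t^{(s)}\D_tT_t^{(s)}\}$ is again a weakly differentiable family and invoking \cref{prop:fam_Kasp_mod}, rather than \cref{lem:sa_conj} alone, which is a Hilbert-space statement; (ii) the passage from norm-continuity of $s\mapsto\D^{(s)}\colon\Dom\D\to\mH$ (graph norm on the domain) to norm-continuity of the resolvents, hence of the bounded transforms, is exactly the point the paper settles by citing \cite[Lemma 3.1]{vdD17apre} --- you assert it but should supply that reference or an equivalent argument.
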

\begin{proof}
In view of \cref{thm:spec_prod_triv}, we only need to prove that $\D_\pm$ is homotopic to $\D_\bullet\times(\mp i\partial_t)$. We consider the unbounded Kasparov $A\otimes C_0(\R)$-$C([0,1])$-module $\big( \A\odot C_c^\infty(\R) , C\big([0,1],L^2(\R,\mH)\big)^{\oplus2} , \D_\pm(\cdot) \big)$, where $\D_\pm(\cdot) = \{\D_\pm(s)\}_{s\in[0,1]}$, and $\D_\pm(s)$ is the operator constructed (as in \cref{defn:prod_spec_trip}) from the families $\{(\A,{}_{\pi_t}\mH,\D_t)\}_{t\in\R}$ and $\{N_t(s)\}_{t\in\R}$, where
\begin{align*}
N_t(s)^{\frac12} &:= s + (1-s) N_t^{\frac12} .
\end{align*}
In other words, we obtain the homotopy $\D_\pm(s)$ by connecting $N_t^{\frac12}$ to the identity via a straight line. We note that the operators $N_t(s)$ again satisfy \cref{defn:lapse}. 
To show that $\D_\pm(\cdot)$ indeed defines an unbounded Kasparov module, it suffices to check that the resolvents of $\D_\pm(s)$ are norm-continuous. 

The family $\{N_\bullet(s)^{\frac12}\}_{s\in[0,1]}$ of bounded positive invertible operators on $L^2(\R,W)$ depends continuously on $s$. Consequently, $\D_\pm(s)\colon\Dom\D_\pm\to L^2(\R,\mH)^{\oplus2}$ is norm-continuous (where we view $\Dom\D_\pm$ as a Hilbert space equipped with the graph norm). From \cite[Lemma 3.1]{vdD17apre} it then follows that the resolvents of $\D_\pm(s)$ are indeed norm-continuous. Hence $\big( \A\odot C_c^\infty(\R) , C\big([0,1],L^2(\R,\mH)\big)^{\oplus2} , \D_\pm(\cdot) \big)$ is an unbounded Kasparov $A\otimes C_0(\R)$-$C([0,1])$-module. The bounded transform of $\D_\pm(\cdot)$ then yields a homotopy between the bounded transforms of $\D_\pm$ and $\D_\bullet\times(\mp i\partial_t)$. 
\end{proof}

\begin{prop}
\label{prop:class_fams}
Let $(Z,g) = (M\times\R,g_\bullet + \epsilon_0 N^2dT^2)$ be a product space(time), such that the unit normal vector field $\nu$ is geodesic, and the metrics $g_t$ are complete. 
Suppose that the metrics $g_t$ and the lapse function $N$ have derivatives of all orders (both in $t$ and along $M$) which are \emph{globally bounded}. 
Assume furthermore that $N_t$ is uniformly invertible. 
Then the operators $\sD_t$ from \cref{eq:fam_Dirac} yield a weakly differentiable family of spectral triples $\{( C_c^\infty(M) , \mH_0^+ , \sD_t )\}_{t\in\R}$, and the family $\{N_t\}_{t\in\R}$ satisfies the conditions in \cref{defn:lapse}. 
\end{prop}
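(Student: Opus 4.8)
The plan is to verify, in turn, the conditions of \cref{defn:fam_spec_trip} for the family $\{(C_c^\infty(M),\mH_0^+,\sD_t)\}_{t\in\R}$ and those of \cref{defn:lapse} for $\{N_t\}_{t\in\R}$, where the representation of $C_c^\infty(M)$ on $\mH_0^+ = L^2(M_0,\bS_0^+)$ is by pointwise multiplication, $\pi_t(f) = M_f$. Two remarks organise the proof. First, the isometry $U_t^+\colon\mH_t^+\to\mH_0^+$ is multiplication by the volume function $\rho_t$ followed by parallel transport along $\nu$, and both commute with $M_f$; hence $\pi_t(f) = U_t^+M_f(U_t^+)^{-1} = M_f$ is \emph{independent of $t$}. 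Second, since $\nabla^Z_\nu\nu = -\epsilon_0 N^{-1}\operatorname{grad}_{g_t}N$, the hypothesis that $\nu$ is geodesic forces $\operatorname{grad}_{g_t}N_t = 0$ on each leaf, so $N_t$ acts on $\mH_0^+$ as the scalar $N(t)\,\mathrm{id}$; in particular the $N^{\pm\frac12}$-factors in \cref{eq:fam_Dirac} cancel and $\sD_t = U_t^+\,\sD_{M_t}\,(U_t^+)^{-1}$ is unitarily equivalent to the Dirac operator $\sD_{M_t}$ of the complete manifold $(M,g_t)$, hence self-adjoint. Throughout I use that the global-boundedness hypotheses give the family $(M,g_t)$ uniformly bounded geometry (in particular the metrics $g_t$ are uniformly comparable to $g_0$), and that $N$ is smooth with all $t$- and space-derivatives uniformly bounded and $c\le N\le C$ by uniform invertibility.

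For \cref{defn:fam_spec_trip} I would take $W := H^1(M_0,\bS_0^+)$. Condition 1) follows from Rellich--Kondrachov, since $\pi_t(f)\circ\iota = M_f\circ\iota\colon W\hookrightarrow\mH_0^+$ is compact for $f$ of compact support. For condition 2), essential self-adjointness of $\sD_{M_t}$ on the complete manifold $(M,g_t)$ gives $\Dom\sD_{M_t} = H^1_{g_t}$, on which the graph norm is equivalent to the $H^1_{g_t}$-norm by the elliptic estimate for the Laplace-type operator $\sD_{M_t}^2$; uniform bounded geometry makes this equivalence uniform in $t$ and makes all $H^1_{g_t}$-norms uniformly equivalent to $\|\cdot\|_W$. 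Since $U_t^+$ preserves $W$ and is uniformly bounded with uniformly bounded inverse (because $\rho_t^{\pm1}$ and their first derivatives, and the spin-connection data of $Z$ governing parallel transport, are uniformly bounded), it follows that $\Dom\sD_t = W$ with uniformly equivalent graph norms. For condition 3), I would differentiate $\sD_t = U_t^+\,\sD_{M_t}\,(U_t^+)^{-1}$ in $t$ by the product rule, expressing the $t$-derivative of parallel transport through $\nabla^{\bS_Z}_{\partial_T}$ as in the computation preceding \cref{eq:time_derivative}; one finds that $\partial_t\sD_t$ is a first-order differential operator whose local coefficients are algebraic expressions in $g_t$, $\rho_t$ and their first space- and time-derivatives, hence uniformly bounded, so $\D_\bullet\colon\R\to\B(W,\mH_0^+)$ is (norm-, a fortiori weakly) differentiable with uniformly bounded derivative. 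Condition 4) is immediate: $\pi_\bullet$ is constant in $t$, and $M_f$ preserves $W$ with $\|M_f\|_{W\to W}\le\|f\|_\infty + \|df\|_\infty$.

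For \cref{defn:lapse} everything reduces to scalars. Condition 1) holds since $N_t = N(t)\,\mathrm{id}$ is positive and invertible. Conditions 2) and 3) hold because $N_t^{\pm\frac12} = N(t)^{\pm\frac12}\,\mathrm{id}$ trivially preserve $W$, are uniformly bounded there, have $t$-derivatives $\pm\tfrac12 N(t)^{-\frac12}N'(t)\,\mathrm{id}$ (resp.\ $\mp\tfrac12 N(t)^{-\frac32}N'(t)\,\mathrm{id}$) which are uniformly bounded operators, and satisfy $[\sD_t,N_t^{\frac12}] = 0$; condition 4) holds since scalars commute with $\pi_t(f)$. I expect the main obstacle to be condition 2) of \cref{defn:fam_spec_trip}: converting the global-boundedness hypotheses into \emph{uniform} elliptic estimates for $\sD_{M_t}$ and uniform comparability of the $g_t$, so that $\Dom\sD_t$ is genuinely $t$-independent with uniformly equivalent graph norms. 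The remaining verifications, and the bookkeeping of the derivative of parallel transport in condition 3), are routine once the geometry is under uniform control.
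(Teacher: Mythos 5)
Your proof is correct, and its overall strategy coincides with the paper's: the paper likewise invokes \cref{prop:ST_t} for each fixed $t$, notes that the representation is $t$-independent because multiplication commutes with parallel transport (and with $\rho_t$), and then disposes of the conditions on $\{\sD_t\}$ by citing \cite[Proposition 4.22]{vdDR16} and of the conditions on $\{N_t\}$ by appealing to the uniform bounds on $N$ and its derivatives. Two points of comparison. First, your observation that $\nabla^Z_\nu\nu=-\epsilon_0N^{-1}\operatorname{grad}_{g_t}N$, so that the geodesic hypothesis forces each $N_t$ to be spatially constant and \cref{defn:lapse} to hold for trivial reasons (with $[\sD_t,N_t^{1/2}]=0$ and $\sD_t$ exactly unitarily equivalent to $\sD_{M_t}$), is a genuine and cleaner departure: the paper does not exploit this rigidity and instead verifies the lapse conditions for $N_t$ as a general multiplication operator with uniformly bounded derivatives. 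Both routes are valid; yours has the side benefit of making explicit how restrictive the geodesic assumption is for classical examples. Second, where the paper outsources the verification of \cref{defn:fam_spec_trip} to the cited reference, you sketch it directly ($W=H^1$, Rellich for local compactness, uniform elliptic estimates, differentiation of parallel transport via $\nabla^{\bS_Z}_{\partial_T}$); this is precisely the content of that reference, and you correctly locate the one delicate step, namely upgrading ``globally bounded derivatives'' to a \emph{two-sided} uniform comparison $c\,g_0\le g_t\le C\,g_0$ (a lower bound is not literally implied by bounded derivatives alone and must be read into the hypotheses, exactly as in the cited proposition), so that the domains $H^1_{g_t}$ coincide with uniformly equivalent graph norms.
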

\begin{proof}
We know from \cref{prop:ST_t} that $\big( C_c^\infty(M) , \mH_0^+ , \sD_t \big)$ is a spectral triple for each $t\in\R$. 
We note that the representation of $C_c^\infty(M)$ on $\mH_0^+$ is independent of $t$ (since it commutes with parallel transport), so that the continuity and differentiability conditions for the family of representations are automatically satisfied. 
The conditions on the family $\{\sD_t\}$ are proven as in \cite[Proposition 4.22]{vdDR16}. 
Furthermore, the conditions on $\{N_t\}$ are satisfied, because $N_t$ is smooth with uniformly bounded derivatives. 
\end{proof}

Let $(Z,g) = (M\times\R,g_\bullet + N^2dT^2)$ be a product space satisfying the assumptions in \cref{prop:class_fams}, giving a weakly differentiable family of spectral triples $\{( C_c^\infty(M) , \mH_0^+ , \sD_t )\}_{t\in\R}$ and a family of lapse operators $\{N_t\}_{t\in\R}$. 
\cref{thm:spec_prod} then yields the corresponding \emph{product spectral triple} $( \A\odot C_c^\infty(\R) , L^2(\R,\mH)^{\oplus2} , \D_+ )$. Since the definition of $\D_+$ is based on the formula from \cref{prop:Dirac_decomp}, 
we find that the product spectral triple corresponding to a product space correctly reconstructs the canonical Dirac operator on this space. 
\begin{prop}
Let $(Z,g) = (M\times\R,g_\bullet + N^2dT^2)$ be a product space satisfying the assumptions in \cref{prop:class_fams}. Then the corresponding product spectral triple $( C_c^\infty(M)\odot C_c^\infty(\R) , L^2(\R,\mH_0^+)^{\oplus2} , \D_+ )$ is unitarily equivalent to the canonical spectral triple $( C_c^\infty(Z) , L^2(Z,\bS_Z) , \sD_Z )$ on the space $Z$. 
\end{prop}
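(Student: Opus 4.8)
The plan is to show that the unitary $U\colon L^2(Z,\bS_Z)\to L^2(\R,\mH_0)$ of \cref{eq:unitary} already implements the equivalence, once $L^2(\R,\mH_0)$ is identified with $L^2(\R,\mH_0^+)^{\oplus2}$ via the orthogonal decomposition $\mH_0=\mH_0^+\oplus\mH_0^-$ together with the unitary $\mH_0^+\simeq\mH_0^-$ induced by Clifford multiplication with $\nu$. Since $\nu$ is geodesic (an assumption retained in \cref{prop:class_fams}), $U$ is genuinely unitary, and by construction it carries the $\Z_2$-grading $\bS_Z=\bS^+\oplus\bS^-$ to the grading $\Gamma=1\oplus(-1)$ of \cref{thm:spec_prod}. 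It then remains to intertwine the representations and the Dirac operators.

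For the representations I would compute directly on simple tensors: for $f\in C_c^\infty(M)$ and $h\in C_c^\infty(\R)$, writing $U_t=\rho_t\tau_t^0$ and using \cref{eq:unitary}, one finds $\big(U((f\otimes h)\psi)\big)(t)=h(t)\,f\cdot(U\psi)(t)$, because the pulled-back function $f$ is scalar and hence commutes with the parallel transport $\tau_t^0$, with the volume function $\rho_t$, and with $N_t^{\frac12}$, while the representation of $C_c^\infty(M)$ on $\mH_0^+$ does not depend on $t$. Thus $U$ conjugates multiplication by $f\otimes h$ into $\pi_\bullet(f\otimes h)$, and since $C_c^\infty(M)\odot C_c^\infty(\R)$ is dense in $C_0(M\times\R)=C_0(Z)$ this extends to an intertwiner of the two faithful representations of the common $C^*$-algebra $C_0(Z)$ (here $C_c^\infty(M)\odot C_c^\infty(\R)$ is viewed as a dense subalgebra of $C_c^\infty(Z)$).

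For the Dirac operators the key input is \cref{prop:Dirac_decomp}: specialised to the Riemannian case $\epsilon_0=\tau_0=1$, the formula there for $U\sD_ZU^{-1}$ is literally the operator $\D_+$ of \cref{defn:prod_spec_trip} built from the family $\{\sD_t\}$ and the lapse operators $\{N_t\}$, which satisfy the required hypotheses by \cref{prop:class_fams}. Concretely, for $\psi\in C_c^\infty(Z,\bS_Z)$ the section $U\psi$ is smooth and compactly supported in $t$ with values in $W=\Dom\sD_0$, hence lies in $C_c^\infty(\R,W)^{\oplus2}\subseteq(\Dom\D_\bullet\cap\Dom\partial_t)^{\oplus2}=\Dom\D_+$, and there $U\sD_Z\psi=\D_+U\psi$. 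Since $\D_+$ is self-adjoint by \cref{thm:spec_prod}, while $\sD_Z$ is essentially self-adjoint on $C_c^\infty(Z,\bS_Z)$ (the product space $(Z,g)$ being complete under the hypotheses of \cref{prop:class_fams}), the operator $U\overline{\sD_Z}\,U^{-1}$ is a self-adjoint restriction of $\D_+$ and therefore equals $\D_+$. Together with the matching of representations and gradings, and the automatic transport of bounded commutators and locally compact resolvents along the unitary $U$, this yields the asserted unitary equivalence of even spectral triples.

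The hard part is precisely this last identification of self-adjoint operators: it is not enough that $U\sD_ZU^{-1}$ and $\D_+$ agree on smooth compactly supported sections, since a closed symmetric operator may be a proper restriction of a self-adjoint one. The argument genuinely uses essential self-adjointness of $\sD_Z$ on $C_c^\infty(Z,\bS_Z)$ — which is where completeness of $(Z,g)$ enters — together with the fact, recorded just before \cref{thm:spec_prod_triv}, that $C_c^\infty(\R,W)^{\oplus2}$ is contained in $\Dom\D_+$. Alternatively the proposition can be read as also establishing this essential self-adjointness, with closure $\overline{\sD_Z}=U^{-1}\D_+U$.
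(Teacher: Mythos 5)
Your proposal is correct and follows the same route as the paper, which in fact offers no explicit proof beyond observing that $\D_+$ is built from exactly the formula that \cref{prop:Dirac_decomp} establishes for $U\sD_ZU^{-1}$ (with $\epsilon_0=\tau_0=1$); you supply the details — intertwining of the representations and gradings, and the identification of the self-adjoint closures — that the paper leaves implicit. The one soft spot is your appeal to completeness of $(Z,g)$ for essential self-adjointness of $\sD_Z$ on $C_c^\infty(Z,\bS_Z)$: the hypotheses of \cref{prop:class_fams} give completeness of the leaves $(M,g_t)$ but not obviously of the total metric (for that one would want the $g_t$ uniformly bounded below, as in \cref{thm:product_glob_hyp}), though your alternative reading — that the proposition itself establishes $\overline{\sD_Z}=U^{-1}\D_+U$ — covers this.
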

In particular, the product spectral triple corresponding to a product space is independent (up to unitary equivalence) of the choice of the splitting $Z \simeq M\times\R$.

\section{Lorentzian product triples}
\label{sec:Lor_prod}

\begin{defn}
\label{defn:Lor_prod}
Consider a weakly differentiable family of spectral triples $\{(\A,{}_{\pi_t}\mH,\D_t)\}_{t\in\R}$ with a family of lapse operators $\{N_t\}_{t\in\R}$ (as in \cref{defn:fam_spec_trip,defn:lapse}). 
We define the even \emph{Lorentzian product triple} as the triple $( \A\odot C_c^\infty(\R) , L^2(\R,\mH)^{\oplus2} , \D )$ with
\begin{align*}
\D &:= \mJ \big( - N_\bullet^{-\frac12} \partial_t N_\bullet^{-\frac12} - i \til\D_\bullet \big) = \mattwo{0}{-N_\bullet^{-\frac12} \partial_t N_\bullet^{-\frac12} + i \D_\bullet }{-N_\bullet^{-\frac12} \partial_t N_\bullet^{-\frac12} - i \D_\bullet }{0} ,
\end{align*}
where we have written
\begin{align*}
\til\D_\bullet &:= \mattwo{\D_\bullet}{0}{0}{-\D_\bullet} , & 
\mJ &:= \mattwo{0}{1}{1}{0} .
\end{align*}
\end{defn}

\begin{remark}
We observe that a Lorentzian product triple is equal to the `reverse Wick rotation' (using the terminology of \cite{vdDR16}) of the product spectral triples $( \A \odot C_c^\infty(\R) , L^2(\R,\mH)^{\oplus2} , \D_\pm )$ from \cref{thm:spec_prod}. Indeed, we have the equality
$$
\D = \frac12(\D_++\D_-) + \frac i2(\D_+-\D_-) . 
$$
\end{remark}

Let $(Z,g) = (M\times\R,g_\bullet - N^2dT^2)$ be a product spacetime satisfying the assumptions in \cref{prop:class_fams}, giving a weakly differentiable family of spectral triples $\{( C_c^\infty(M) , \mH_0^+ , \sD_t )\}_{t\in\R}$ and a family of lapse operators $\{N_t\}_{t\in\R}$. 
We note that the assumptions of \cref{prop:class_fams} almost imply that the spacetime is globally hyperbolic: whenever $g_t$ is uniformly bounded below by a complete metric, it follows from \cref{thm:product_glob_hyp} that $(Z,g)$ is globally hyperbolic. 

From \cref{defn:Lor_prod} we obtain the \emph{Lorentzian product triple} $( C_c^\infty(M)\odot C_c^\infty(\R) , L^2(\R,\mH_0^+)^{\oplus2} , \D )$. Since the definition of $\D$ is based on the formula from \cref{prop:Dirac_decomp}, 
we find that the Lorentzian product triple corresponding to a product spacetime correctly reconstructs the canonical Dirac operator on this spacetime. 
\begin{prop}
Let $(Z,g) = (M\times\R,g_\bullet - N^2dT^2)$ be a product spacetime satisfying the assumptions in \cref{prop:class_fams}. Then the corresponding Lorentzian product triple $( C_c^\infty(M)\odot C_c^\infty(\R) , L^2(\R,\mH_0^+)^{\oplus2} , \D )$ is unitarily equivalent to the canonical spectral triple $( C_c^\infty(Z) , L^2(Z,\bS_Z) , \sD_Z )$ on the spacetime $Z$. 
\end{prop}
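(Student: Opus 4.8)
The plan is to recognise this statement as the Lorentzian ($\tau_0=i$) counterpart of \cref{prop:Dirac_decomp}, so that its proof runs exactly parallel to that of the preceding proposition for product spaces. By \cref{prop:class_fams}, the operators $\sD_t$ of \eqref{eq:fam_Dirac} form a weakly differentiable family of spectral triples $\{(C_c^\infty(M),\mH_0^+,\sD_t)\}_{t\in\R}$ and the lapse functions $N_t$ satisfy \cref{defn:lapse}; feeding these data into \cref{defn:Lor_prod} (with $\D_t=\sD_t$ and $\mH=\mH_0^+$) produces precisely the Lorentzian product triple in the statement. The candidate unitary is the map $U$ of \eqref{eq:unitary}.

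First I would recall that, since $\nu$ is geodesic, $U$ extends to a unitary isomorphism $L^2(Z,\bS_Z)\xrightarrow{\simeq}L^2(\R,\mH_0)$ (verified just before \cref{sec:Dirac_decomp}), and that the orthogonal decomposition $\mH_0=\mH_0^+\oplus\mH_0^-$ together with the isomorphism $\mH_0^-\simeq\mH_0^+$ implemented by $\gamma_Z(\nu)$ — compatible with the Dirac operators precisely because $\nu$ is geodesic, cf.\ \eqref{eq:fam_Dirac} — identifies $L^2(\R,\mH_0)$ with $L^2(\R,\mH_0^+)^{\oplus2}$. Since $U$ is simultaneously an isometry for the canonical indefinite inner product and for the positive-definite one, it automatically intertwines the fundamental symmetry $\mJ_Z=\gamma_Z(\nu)$ of the Krein space $L^2(Z,\bS_Z)$ with $\mJ=\mattwo{0}{1}{1}{0}$; here one uses \eqref{eq:Clifford_rep} and the identity $i\tau_0\epsilon_0=1$ valid in the Lorentzian case. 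Moreover $U$ intertwines the representations: conjugation by $U$ sends multiplication by $f\otimes g\in C_c^\infty(M)\odot C_c^\infty(\R)\subset C_c^\infty(Z)$ to $\pi_\bullet(f\otimes g)$, because $U_t$ commutes with parallel transport (so the $C_c^\infty(M)$-action is $t$-independent and agrees with its action on $\mH_0^+$) while the $C_c^\infty(\R)$-factor acts by pointwise multiplication in $t$.

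Finally I would match the operators by invoking \cref{prop:Dirac_decomp} with the Lorentzian value $\tau_0=i$, so that $i\tau_0=-1$, which gives
\[
U\sD_Z U^{-1} = \mattwo{0}{-N_\bullet^{-\frac12}\partial_t N_\bullet^{-\frac12}+i\sD_\bullet}{-N_\bullet^{-\frac12}\partial_t N_\bullet^{-\frac12}-i\sD_\bullet}{0} ,
\]
which is literally the operator $\D$ of \cref{defn:Lor_prod}; combined with the previous paragraph, $U$ is then the desired unitary equivalence of triples. I do not expect a genuine obstacle here: all the content sits in \cref{prop:Dirac_decomp} and \cref{prop:class_fams}, and the only care required is organisational — matching the two models $L^2(\R,\mH_0)\simeq L^2(\R,\mH_0^+)^{\oplus2}$ of the Hilbert space, keeping track of the sign $i\tau_0=-1$ coming from the Lorentzian signature, and verifying the (essentially trivial) intertwining of the algebra representations and of the Krein structures. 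As in the proof of the preceding proposition, everything is carried out on the common core of smooth compactly supported spinors and extended to the closures by unitarity of $U$.
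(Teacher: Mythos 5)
Your proposal is correct and matches the paper's (implicit) argument: the paper justifies this proposition precisely by noting that $\D$ in \cref{defn:Lor_prod} is the formula of \cref{prop:Dirac_decomp} with the Lorentzian value $\tau_0=i$ (so $i\tau_0=-1$), so the unitary $U$ of \eqref{eq:unitary} implements the equivalence, with \cref{prop:class_fams} supplying the hypotheses. Your additional checks of the algebra intertwining and the Krein/fundamental-symmetry compatibility (via $i\tau_0\epsilon_0=1$) are consistent with the paper and fill in details the paper leaves unstated.
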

In particular, the Lorentzian product triple corresponding to a product spacetime is independent (up to unitary equivalence) of the choice of the splitting $Z \simeq M\times\R$.

\subsection{Krein spaces}
\label{sec:Krein}

In the remainder of this article, we want to compare our construction of Lorentzian product triples with other approaches to `Lorentzian spectral triples' that have appeared in the literature. We will focus on the Krein space approach, which has been studied in e.g.\ \cite{Str06,Sui04,PS06,Bar07,vdD16,Bes16pre}. First, we will recall some facts about Krein spaces. 
In the next subsection, we will show that our Lorentzian product triples satisfy the definition of Lorentz-type spectral triples from \cite{vdD16}.

The following summary is based on \cite[\S2]{vdD16}. For a detailed introduction to Krein spaces, we refer to \cite{Bognar74}. 
A Krein space is a vector space $\mH$ with a non-degenerate inner product $\la\cdot|\cdot\ra$ which admits a fundamental decomposition $\mH = \mH^+ \oplus \mH^-$ (i.e., an orthogonal direct sum decomposition into a positive-definite subspace $\mH^+$ and a negative-definite subspace $\mH^-$) such that $\mH^+$ and $\mH^-$ are intrinsically complete (i.e., complete with respect to the norms $\Vert \psi\Vert_{\mH^\pm} := |\la\psi|\psi\ra|^{1/2}$).

A fundamental symmetry $\mJ$ is a self-adjoint unitary operator $\mJ\colon\mH\to\mH$ such that $(1+\mJ)\mH$ is positive-definite and $(1-\mJ)\mH$ is negative-definite. Given a fundamental decomposition $\mH = \mH^+ \oplus \mH^-$, we obtain a corresponding fundamental symmetry $\mJ = P^+ - P^-$, where $P^\pm$ denotes the projection onto $\mH^\pm$. 
Given a fundamental symmetry $\mJ$, we denote by $\mH_\mJ$ the corresponding Hilbert space for the positive-definite inner product $\la\cdot|\cdot\ra_\mJ := \la\mJ\cdot|\cdot\ra$. 

For an operator $T$, we will denote by $T^+$ the \emph{Krein-adjoint} (i.e., the adjoint operator with respect to the Krein inner product $\la\cdot|\cdot\ra$). By the \emph{adjoint} $T^*$ we will mean the usual adjoint in the Hilbert space $\mH_\mJ$ (i.e., with respect to the positive-definite inner product $\la\cdot|\cdot\ra_\mJ$). These adjoints are related via $T^+ = \mJ T^* \mJ$. 

A Krein space $\mH$ with fundamental symmetry $\mJ$ is called \emph{$\Z_2$-graded} if $\mH_\mJ$ is $\Z_2$-graded and $\mJ$ is homogeneous. 
The assumption that $\mH_\mJ$ is $\Z_2$-graded means we have a decomposition $\mH^0\oplus\mH^1$, and that this decomposition is respected by the positive-definite inner product $\la\cdot|\cdot\ra_\mJ$ (which means that $\la\psi_0|\psi_1\ra_\mJ = 0$ for all $\psi_0\in\mH^0$ and $\psi_1\in\mH^1$). The bounded operators $\B(\mH)$ then also decompose into a direct sum of even operators $\B^0(\mH)$ and odd operators $\B^1(\mH)$. The assumption that the fundamental symmetry $\mJ$ is homogeneous means that $\mJ$ is either even or odd. If $\mJ$ is odd, it implements a unitary isomorphism $\mH^0\simeq\mH^1$. Given the decomposition $\mH^0\oplus\mH^1$, we have a (self-adjoint, unitary) grading operator $\Gamma$ which acts as $(-1)^j$ on $\mH^j$ (for $j\in\Z_2$). If $\mJ$ is odd, we note that $\Gamma$ is Krein-\emph{anti}-self-adjoint (indeed, $\Gamma^+ = \mJ\Gamma\mJ = -\Gamma\mJ^2 = -\Gamma$). 

As in \cite[\S2.1]{vdDR16} we define the `combined graph inner product' $\la\cdot|\cdot\ra_{S,T}$ of two closed operators $S$ and $T$ as $\la\psi|\phi\ra_{S,T} := \la\psi|\phi\ra_\mJ + \la S\psi|S\phi\ra_\mJ + \la T\psi|T\phi\ra_\mJ$ (using the \emph{positive-definite} inner product $\la\cdot|\cdot\ra_\mJ$), for all $\psi,\phi\in\Dom S\cap\Dom T$. This inner product yields the corresponding `combined graph norm' $\|\cdot\|_{S,T}$. 
For a Krein-self-adjoint operator $\D$ we have $\mJ\D^* = \D\mJ$ and $\Dom\D^* = \Dom\D\mJ = \mJ\cdot\Dom\D$. 
One can then check that $\la\cdot|\cdot\ra_{\D,\D^*}$ is identical to $\la\cdot|\cdot\ra_{\D\mJ,\mJ\D}$ on $\Dom\D\cap\Dom\D^* = \Dom\D\cap\mJ\cdot\Dom\D$.

\subsection{Lorentz-type spectral triples}

\begin{defn}[{\cite[Definition 2.2]{vdD16}}]
\label{defn:Krein_triple}
A \emph{Lorentz-type spectral triple} $(\A,\mH,\D,\mJ)$ consists of 
\begin{itemize}
\item a $\Z_2$-graded Krein space $\mH$; 
\item a trivially graded $*$-algebra $\A$ along with an even $*$-algebra representation $\pi\colon\A\to B^0(\mH)$;  
\item an odd fundamental symmetry $\mJ$ which commutes with the algebra $\A$; 
\item a densely defined, closed, odd operator $\D\colon\Dom\D\to\mH$ such that:
\begin{enumerate}
\item the linear subspace $\E:=\Dom\D\cap\mJ\cdot\Dom\D$ is dense in $\mH$;
\item the operator $\D$ is Krein-self-adjoint on $\E$;
\item we have the inclusion $\pi(\A)\cdot\E\subset\E$, and the commutator $[\D,\pi(a)]$ is bounded on $\E$ for each $a\in\A$;
\item the map $\pi(a)\circ\iota\colon \E\into \mH\to \mH$ is compact for each $a\in \A$, where $\iota$ denotes the natural inclusion map $\E\into \mH$, and $\E$ is considered as a Hilbert space with the inner product $\la\cdot|\cdot\ra_{\D\mJ,\mJ\D}$.
\end{enumerate}
\end{itemize}
\end{defn}

\begin{prop}
\label{prop:Krein_triple_triv}
Let $\{(\A,{}_{\pi_t}\mH,\D_t)\}_{t\in\R}$ be a weakly differentiable family of spectral triples. 
Then the operator 
\begin{align*}
\D &:= \mattwo{0}{-\partial_t+i\D_\bullet}{-\partial_t-i\D_\bullet}{0} = \mJ \big( -\partial_t - i \til\D_\bullet \big) , & 
\mJ &= \mattwo{0}{1}{1}{0} 
\end{align*}
defines a Lorentz-type spectral triple $( \A\odot C_c^\infty(\R) , L^2(\R,\mH)^{\oplus2} , i\D , \mJ)$. 
\end{prop}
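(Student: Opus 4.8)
The plan is to verify the four clauses of \cref{defn:Krein_triple} for the tuple $(\A\odot C_c^\infty(\R),\,L^2(\R,\mH)^{\oplus2},\,i\D,\,\mJ)$, using that essentially all of the hard analysis has already been done for \cref{thm:spec_prod_triv}. First I would fix the graded Krein structure. Regard $L^2(\R,\mH)^{\oplus2}$ with its usual inner product as the Hilbert space $\mH_\mJ$ associated to the self-adjoint unitary $\mJ$; then the Krein inner product $\la\cdot|\cdot\ra=\la\mJ\cdot|\cdot\ra_\mJ$ has positive subspace $\{(\psi,\psi)\}$ and negative subspace $\{(\psi,-\psi)\}$, both intrinsically complete, so $\mJ$ is a genuine fundamental symmetry. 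With the grading $\Gamma=1\oplus(-1)$ the two summands are $\la\cdot|\cdot\ra_\mJ$-orthogonal, hence $\mH_\mJ$ is $\Z_2$-graded, and $\mJ\Gamma=-\Gamma\mJ$ shows $\mJ$ is odd. The trivially graded $*$-algebra $\A\odot C_c^\infty(\R)$ acts diagonally through $\pi_\bullet$, hence evenly and commuting with the swap $\mJ$. This settles the structural part of \cref{defn:Krein_triple}.

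The substantive point is the Krein-self-adjointness of $i\D$. Being off-diagonal, $i\D$ is odd, and unwinding $(i\D)^+=\mJ(i\D)^*\mJ$ shows that $i\D$ is Krein-self-adjoint iff $\mJ(i\D)$ is self-adjoint on $\mH_\mJ$. A one-line matrix computation gives $\mJ(i\D)=-i\partial_t+\til\D_\bullet$, with $\partial_t$ acting diagonally. Now $\{\til\D_t:=\mathrm{diag}(\D_t,-\D_t)\}_{t\in\R}$ is again a weakly differentiable family of self-adjoint operators in the sense of \cref{defn:fam_spec_trip} (immediate from the hypotheses on $\{\D_t\}$), so $-i\partial_t+\til\D_\bullet$ is self-adjoint on $\E:=(\Dom\partial_t\cap\Dom\D_\bullet)^{\oplus2}$. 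This is the assertion that $-i\partial_t$ plus a differentiable path of self-adjoint operators is self-adjoint, which is precisely the self-adjointness input of \cite{KL13} invoked in the proof of \cref{thm:spec_prod_triv} — indeed a simpler instance, since here $\partial_t$ is multiplied by the identity rather than by a grading operator. (Alternatively, combine \cref{thm:spec_prod_triv} with the reverse Wick rotation of \cite{vdDR16}, via the relation recorded in the remark following \cref{defn:Lor_prod}.) Hence $i\D=\mJ(-i\partial_t+\til\D_\bullet)$ is closed with $\Dom(i\D)=\E$; since $\mJ$ preserves $\E$ we get $\E=\Dom(i\D)\cap\mJ\cdot\Dom(i\D)$ with $i\D$ Krein-self-adjoint on $\E$, and $\E$ is dense because it contains $C_c^\infty(\R,W)^{\oplus2}$.

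The remaining clauses (3) and (4) of \cref{defn:Krein_triple} I would transcribe from the proof of \cref{thm:spec_prod_triv}. For $a\otimes f\in\A\odot C_c^\infty(\R)$, $\mJ$ commutes with $\pi_\bullet(a\otimes f)$, so $[i\D,\pi_\bullet(a\otimes f)](t)$ is assembled from $(\partial_tf)(t)\,\pi_t(a)$, $f(t)\,\partial_t\pi_t(a)$, and $f(t)\,[\D_t,\pi_t(a)]$, all bounded and — because $f$ has compact support and these families are strongly continuous (using \cref{lem:comm_s-cts} for the last) — globally bounded; moreover $\pi_\bullet(a\otimes f)\cdot\E\subset\E$ since each $\pi_t(a)$ preserves $W$ and $f$ preserves $\Dom\partial_t$. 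For local compactness, $(i\D)\mJ=-i\partial_t-\til\D_\bullet$ and $\mJ(i\D)=-i\partial_t+\til\D_\bullet$, so by the parallelogram law and the uniform graph-norm equivalence of \cref{defn:fam_spec_trip} the norm $\la\cdot|\cdot\ra_{(i\D)\mJ,\mJ(i\D)}$ on $\E$ is equivalent to the combined graph norm of $\partial_t$ and $\D_\bullet$; combining the local compactness of $W\hookrightarrow\mH$ with $f\in C_c(\R)$ then gives compactness of $\pi_\bullet(a\otimes f)\circ\iota$, exactly as in \cref{prop:fam_Kasp_mod}.

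Thus the only genuinely analytic obstacle is the self-adjointness of $-i\partial_t+\til\D_\bullet$ on $\E$, equivalently the Krein-self-adjointness of $i\D$; I expect to dispose of it not by a fresh argument but via the reduction above to a path of self-adjoint operators over the line, citing \cite{KL13} (the same mechanism already underlying \cref{thm:spec_prod_triv}), with everything else being bookkeeping in the graded-Krein formalism set up in \cref{sec:Krein}.
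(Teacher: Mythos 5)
Your architecture coincides with the paper's: regard $L^2(\R,\mH)^{\oplus2}$ as the Krein space $\mH_\mJ$ with odd fundamental symmetry $\mJ$, reduce Krein-self-adjointness of $i\D$ to Hilbert-space self-adjointness of $\mJ(i\D)=-i\partial_t+\til\D_\bullet$ (equivalently of the two diagonal entries $-i\partial_t\pm\D_\bullet$ on $\Dom\partial_t\cap\Dom\D_\bullet$), recycle the commutator and local-compactness estimates from \cref{thm:spec_prod_triv}, and identify $\|\cdot\|_{\D\mJ,\mJ\D}$ on $\E$ with the graph norm of $\D_\pm$ via the parallelogram law (the paper does this last step by citing \cite[Lemma 2.3]{vdDR16}). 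All of that is correct.

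The one step not covered by what you cite is the analytic crux itself. The self-adjointness input of \cite{KL13} used in \cref{thm:spec_prod_triv} concerns the graded product $\mattwo{0}{i\partial_t+i\D_\bullet}{i\partial_t-i\D_\bullet}{0}$, whose entries $i\partial_t\pm i\D_\bullet$ are a closed operator and its adjoint (self-adjoint plus \emph{skew}-adjoint); it does not contain the statement that the sum of the two \emph{self-adjoint} operators $-i\partial_t$ and $\pm\D_\bullet$ is self-adjoint on exactly $\Dom\partial_t\cap\Dom\D_\bullet$, and the two statements are not related by scalar multiplication or conjugation. The paper instead invokes \cite[Proposition 2.13]{vdDR16} to obtain \emph{essential} self-adjointness of $-i\partial_t\pm\D_\bullet$ on the intersection, and then still needs a separate argument to pin down the domain of the closure: it computes $\tfrac12(\D+\D^*)$ and $\tfrac{i}{2}(\D-\D^*)$ on the core and uses that a symmetric extension of an essentially self-adjoint operator cannot be proper, concluding $\Dom\D\cap\Dom\D^*=(\Dom\D_\bullet\cap\Dom\partial_t)^{\oplus2}=\Dom\D_\pm$. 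Your proposal asserts $\Dom(i\D)=\E$ outright; without the exact-domain statement, the identification $\E=\Dom(i\D)\cap\mJ\cdot\Dom(i\D)$ and the compactness argument (which needs $\E$, with its combined graph norm, to be $\Dom\D_\pm$) are not yet justified. So: right reduction and right norm identity, but you must replace the appeal to \cite{KL13} by \cite[Proposition 2.13]{vdDR16} (or a commutator-controlled Kato--Rellich argument of your own) for the sum, and then add the domain-identification step.
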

\begin{proof}
We view the Hilbert space $L^2(\R,\mH)^{\oplus2}$ as a Krein space with the fundamental symmetry $\mJ$, and we note that $\mJ\Gamma=-\Gamma\mJ$. 
By assumption, the operator $[\partial_t,\D_\bullet](\D_\bullet\pm i)^{-1}$ is well-defined and bounded. By \cite[Proposition 2.13]{vdDR16}, this implies that $-i\partial_t\pm\D_\bullet$ is essentially self-adjoint on the intersection $\Dom\D_\bullet\cap\Dom\partial_t$. Hence the Krein-adjoint of $\D$ is given on $(\Dom\D_\bullet\cap\Dom\partial_t)^{\oplus2}$ by 
\begin{align*}
\D^+ &= \mJ\D^*\mJ = \mattwo{0}{1}{1}{0} \mattwo{0}{(-\partial_t-i\D_\bullet)^*}{(-\partial_t+i\D_\bullet)^*}{0} \mattwo{0}{1}{1}{0} \\
&= \mattwo{0}{(-\partial_t+i\D_\bullet)^*}{(-\partial_t-i\D_\bullet)^*}{0} = \mattwo{0}{\partial_t-i\D_\bullet}{\partial_t+i\D_\bullet}{0} = -\D ,
\end{align*}
which shows that $i\D$ is essentially Krein-self-adjoint on $\big(\Dom\D_\bullet\cap\Dom\partial_t\big)^{\oplus2}$. The intersection $\Dom\D\cap\Dom\D^*$ contains $\big(\Dom\D_\bullet\cap\Dom\partial_t\big)^{\oplus2}$. 
Furthermore, we have the following equalities (which hold on $(\Dom\D_\bullet\cap\Dom\partial_t)^{\oplus2}$)
\begin{align*}
\frac12(\D+\D^*) &= \mattwo{0}{i\D_\bullet}{-i\D_\bullet}{0} , & 
\frac i2(\D-\D^*) &= \mattwo{0}{-i\partial_t}{-i\partial_t}{0} . 
\end{align*}
Since $\Dom\D_\bullet\cap\Dom\partial_t$ is a core for $\D_\bullet$, the first equality shows that $\frac12(\D+\D^*)$ is a symmetric extension of an essentially self-adjoint operator, and therefore the domain $\Dom\D\cap\Dom\D^*$ must be contained in $(\Dom\D_\bullet)^{\oplus2}$. Similarly, since $\Dom\D_\bullet\cap\Dom\partial_t$ is also a core for $\partial_t$, the second equality implies that $\Dom\D\cap\Dom\D^*$ must be contained in $(\Dom\partial_t)^{\oplus2}$. 
Hence we have the equality $\Dom\D\cap\Dom\D^* = \big(\Dom\D_\bullet\cap\Dom\partial_t\big)^{\oplus2}$. By \cref{thm:spec_prod_triv}, it follows that $\Dom\D\cap\Dom\D^* = \Dom\D_\pm$. 
We consider $\Dom\D\cap\Dom\D^*$ equipped with the combined graph norm $\|\cdot\|_{\D,\D^*}$, and $\Dom\D_\pm$ equipped with the graph norm of $\D_\pm$. 
Since $\|\cdot\|_{\D,\D^*} = \|\cdot\|_{\D_+,\D_-}$ by \cite[Lemma 2.3]{vdDR16}, the identity map $\Dom\D\cap\Dom\D^* \to \Dom\D_\pm$ is bounded. 
Since $\pi_\bullet(a\otimes f)\circ\iota \colon \Dom\D_\pm \to L^2(\R,\mH)^{\oplus2}$ is compact for each $a\otimes f\in \A\odot C_c^\infty(\R)$, it follows that also $\pi_\bullet(a\otimes f)\circ\iota \colon \Dom\D\cap\Dom\D^* \to L^2(\R,\mH)^{\oplus2}$ is compact. 
Finally, the boundedness of the commutator $[\D,\pi_\bullet(a\otimes f)]$ follows as in the proof of \cref{thm:spec_prod_triv}. 
\end{proof}

\begin{thm}
\label{thm:Krein_triple}
Consider a weakly differentiable family of spectral triples $\{(\A,{}_{\pi_t}\mH,\D_t)\}_{t\in\R}$ with a family of lapse operators $\{N_t\}_{t\in\R}$ (as in \cref{defn:fam_spec_trip,defn:lapse}). 
Then the operator $\D$ defined in \cref{defn:Lor_prod} defines a Lorentz-type spectral triple $( \A\odot C_c^\infty(\R) , L^2(\R,\mH)^{\oplus2} , i\D , \mJ)$. 
\end{thm}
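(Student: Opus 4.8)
The plan is to deduce the statement from \cref{prop:Krein_triple_triv} by conjugating with a bounded, lapse-dependent operator, in exactly the way \cref{thm:spec_prod} was deduced from \cref{thm:spec_prod_triv}. First I would put $\D_t':=N_t^{\frac12}\D_t N_t^{\frac12}$; by the first step in the proof of \cref{thm:spec_prod} this is again a weakly differentiable family of spectral triples, on the same domain $W$ and with the same representation $\{\pi_t\}$, so \cref{prop:Krein_triple_triv} applies to it and produces a Lorentz-type spectral triple $\big(\A\odot C_c^\infty(\R),L^2(\R,\mH)^{\oplus2},i\D',\mJ\big)$, whose operator is $i\D'$ with $\D'=\mJ(-\partial_t-i\til\D_\bullet')$ and $\til\D_\bullet':=\mattwo{\D_\bullet'}{0}{0}{-\D_\bullet'}$. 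Let $S$ denote the diagonal operator on $L^2(\R,\mH)^{\oplus2}$ both of whose diagonal blocks equal $N_\bullet^{-\frac12}$. Then $S\mJ=\mJ S$, one has $S(\partial_t\oplus\partial_t)S=N_\bullet^{-\frac12}\partial_t N_\bullet^{-\frac12}\oplus N_\bullet^{-\frac12}\partial_t N_\bullet^{-\frac12}$, and $S\til\D_\bullet'S=\til\D_\bullet$ because $N_t^{-\frac12}\D_t'N_t^{-\frac12}=\D_t$; hence $\D=S\D'S$, with $\D$ the operator of \cref{defn:Lor_prod}. Here $S$ is bounded, self-adjoint, positive and invertible, with bounded inverse $S^{-1}=N_\bullet^{\frac12}\oplus N_\bullet^{\frac12}$; it is even; and by \cref{defn:lapse}.4) it commutes with $\mJ$, with $\Gamma$, and with $\pi_\bullet(\A\odot C_c^\infty(\R))$.

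The one substantial point is the Krein-self-adjointness of $i\D$. Using $T^+=\mJ T^*\mJ$, an operator $L$ is (essentially) Krein-self-adjoint with respect to $\mJ$ if and only if $\mJ L$ is (essentially) self-adjoint on the underlying Hilbert space. Since $S$ commutes with $\mJ$ we have $\mJ\,i\D=S(\mJ\,i\D')S$, while $\mJ\,i\D'=(-i\partial_t+\D_\bullet')\oplus(-i\partial_t-\D_\bullet')$ is essentially self-adjoint on $(\Dom\D_\bullet\cap\Dom\partial_t)^{\oplus2}$ by \cref{prop:Krein_triple_triv}. I would then apply \cref{lem:sa_conj}.3) with $D=\mJ\,i\D'$ and $T=S^{-1}=N_\bullet^{\frac12}\oplus N_\bullet^{\frac12}$: by \cref{defn:lapse}.2) the operator $T$ preserves $\Dom\D_\bullet$ and $\Dom\partial_t$, hence the domain of $D$, and $[D,T]$ is bounded because $[\partial_t,N_\bullet^{\frac12}]=(\partial N^{\frac12})_\bullet$ and $[\D_\bullet',N_\bullet^{\frac12}]=N_\bullet^{\frac12}[\D_\bullet,N_\bullet^{\frac12}]N_\bullet^{\frac12}$ are bounded by \cref{defn:lapse}.3). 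The lemma then yields that $\mJ\,i\D=T^{-1}DT^{-1}$ is self-adjoint, so $i\D$ is Krein-self-adjoint on $\E:=\Dom\D\cap\mJ\cdot\Dom\D=\Dom\D\cap\Dom\D^*=S^{-1}\E'$, where $\E'$ is the corresponding intersection for $\D'$; since $S^{-1}$ is a bounded bijection that preserves $\Dom\D_\bullet$ and $\Dom\partial_t$, this space is dense and in fact equals $(\Dom\D_\bullet\cap\Dom\partial_t)^{\oplus2}$.

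The remaining conditions of \cref{defn:Krein_triple} transfer routinely. The inclusion $\pi_\bullet(a\otimes f)\cdot\E\subset\E$ and the boundedness on $\E$ of $[i\D,\pi_\bullet(a\otimes f)]=S[i\D',\pi_\bullet(a\otimes f)]S$ follow from the corresponding facts for $i\D'$ and the fact that $S$ commutes with $\pi_\bullet(a\otimes f)$. For local compactness, the combined graph norm occurring in \cref{defn:Krein_triple}.4) equals $\|\cdot\|_{\D,\D^*}$; since $\D=S\D'S$ with $S^{\pm1}$ bounded, the map $\psi\mapsto S\psi$ is a bounded bijection from $\E$ with the norm $\|\cdot\|_{\D,\D^*}$ onto $\E$ with the norm $\|\cdot\|_{\D',(\D')^*}$, so the compactness of $\pi_\bullet(a\otimes f)\circ\iota'$ supplied by \cref{prop:Krein_triple_triv} passes to $\pi_\bullet(a\otimes f)\circ\iota=S^{-1}\circ(\pi_\bullet(a\otimes f)\circ\iota')\circ(S|_\E)$. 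Finally $i\D$ is closed and odd because $\D'$ is and $S$ is bounded, invertible and even. I expect the main obstacle — as in the proofs of \cref{thm:spec_prod} and \cref{prop:Krein_triple_triv} — to be the careful domain bookkeeping: verifying that $\D=S\D'S$ is an identity of closed operators, not merely a formal identity on a common core, so that the domain-sensitive self-adjointness and compactness assertions genuinely transfer; and checking that $N_\bullet^{\frac12}$ and $N_\bullet^{-\frac12}$ preserve all the relevant domains with bounded commutators, which is exactly what the axioms in \cref{defn:lapse} are designed to guarantee.
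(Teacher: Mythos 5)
Your proposal is correct and follows essentially the same route as the paper's proof: reduce to \cref{prop:Krein_triple_triv} via the conjugated family $\D_t'=N_t^{1/2}\D_t N_t^{1/2}$, recover $\D=N_\bullet^{-1/2}\D'N_\bullet^{-1/2}$, and use \cref{lem:sa_conj} together with the fact that $N_\bullet$ commutes with $\mJ$ and with the algebra to transfer Krein-self-adjointness, bounded commutators, and local compactness. Your version merely spells out a few domain and graph-norm verifications that the paper leaves implicit.
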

\begin{proof}
The idea of the proof is the same as in \cref{thm:spec_prod}, and we leave out some details. Here, we reduce the problem to the special case of \cref{prop:Krein_triple_triv}. 
First, as in the proof of \cref{thm:spec_prod}, we consider the weakly differentiable family of spectral triples $\{(\A,\mH,\D_t')\}_{t\in\R}$. 
Second, by \cref{prop:Krein_triple_triv}, the operator
\[
i\D' := i\mJ ( - \partial_t - i \til\D_\bullet' )
\]
is Krein-self-adjoint. Equivalently, this means that $i\mJ\D'$ is self-adjoint. 
Third, we have the equality $i\D = iN_\bullet^{-\frac12} \D' N_\bullet^{-\frac12}$. 
By \cref{lem:sa_conj}, using that $\mJ$ commutes with $N_\bullet$, the operator $i\mJ\D$ is self-adjoint. Thus we have shown that $i\D$ is Krein-self-adjoint. 
Furthermore, since $N_t$ commutes with $\pi_t(\A)$, the commutators $[\D,\pi_\bullet(a\otimes f)]$ are again bounded for all $a\otimes f\in \A\odot C_c^\infty(\R)$. 
Finally, since $\Dom\D=\Dom\D'$, it also follows from \cref{prop:Krein_triple_triv} that $\pi_\bullet(a\otimes f)\circ\iota\colon\Dom\D\cap\Dom\D^*\to L^2(\R,\mH)^{\oplus2}$ is compact for any $a\otimes f\in\A\odot C_c^\infty(\R)$. 
\end{proof}


\providecommand{\noopsort}[1]{}\providecommand{\vannoopsort}[1]{}
\providecommand{\bysame}{\leavevmode\hbox to3em{\hrulefill}\thinspace}
\providecommand{\MR}{\relax\ifhmode\unskip\space\fi MR }
\providecommand{\MRhref}[2]{%
  \href{http://www.ams.org/mathscinet-getitem?mr=#1}{#2}
}
\providecommand{\href}[2]{#2}

\end{document}